\newcommand{\MS}{{\mathcal S}}
\newcommand{\ML}{{\mathcal L}}
\newcommand{\MX}{{\mathcal X}}
\newcommand{\MY}{{\mathcal Y}}
\newcommand{\bbZ}{{\mathbb{Z}}}
\newcommand{\NIL}{{\textsc{Nil}}}
\renewcommand{\eqref}[1]{(\ref{eq:#1})}
\newcommand{\secref}[1]{Section~\ref{sec:#1}}
\newcommand{\figref}[1]{Fig.~\ref{fig:#1}}
\newcommand{\lemref}[1]{Lemma~\ref{lem:#1}}
\newcommand{\propref}[1]{Proposition~\ref{prop:#1}}
\newcommand{\thmref}[1]{Theorem~\ref{thm:#1}}
\renewcommand{\algref}[1]{Algorithm~\ref{alg:#1}}
\newcommand{\lineref}[1]{line~\ref{codeline:#1}}
\newcommand{\outNeigh}[2]{{\Gamma^{+}_{#1}(#2)}}
\newcommand{\inNeigh}[2]{{\Gamma^{-}_{#1}(#2)}}
\newcommand{\dom}{{\mathrm{Dom}}}
\newcommand{\DT}{{\mathit{DT}}}
\newcommand{\Trans}[1]{{{#1}^{\top}\!}}
\newcommand{\Ch}{\mathit{Ch}}
\newtheorem{defn}{Definition}
\newtheorem{lem}{Lemma}
\newtheorem{prop}{Proposition}
\newtheorem{thm}{Theorem}
\newcommand{\Minimal}{{Min}\xspace}
\newcommand{\Maximal}{{Max}\xspace}
\newcommand{\RemSet}{{RS}\xspace}
\newcommand{\MinRemSet}{{\Minimal\RemSet}\xspace}
\newcommand{\MinRemSets}{{{\MinRemSet}s}\xspace}
\newcommand{\ProSubSol}{{PSS}\xspace}
\newcommand{\MaxProSubSol}{{\Maximal\ProSubSol}\xspace}
\newcommand{\MaxProSubSols}{{{\MaxProSubSol}s}\xspace}
\newcommand{\MRS}{\textup{\textsc{\MinRemSet}}}
\newcommand{\MPSS}{\textup{\textsc{\MaxProSubSol}}}
\newcommand{\True}{{\textsc{True}}\xspace}
\newcommand{\False}{{\textsc{False}}\xspace}
\newcommand{\GENMRSWOUTRT}{{Gen-\MinRemSets-Without-Root}\xspace}
\newcommand{\QED}{\quad $\Box$\medskip}
\algrenewcommand{\algorithmicrequire}{\textbf{Input: }}
\algrenewcommand{\algorithmicensure}{\textbf{Output: }}
\long\def\invis#1{}
\newenvironment{proof}{\medskip
  \noindent{\scshape Proof:}}{\quad $\Box$\medskip}
\title{SSD Set System, Graph Decomposition\\ and Hamiltonian Cycle}
\author{Kan Shota \and Kazuya Haraguchi}
\date{{\{shota.kan, haraguchi\}@amp.i.kyoto-u.ac.jp}}
\begin{document}
\maketitle
\begin{abstract}
  In this paper, we first study
  what we call Superset-Subset-Disjoint (SSD) set system.
  Based on properties of SSD set system, we derive the following (I) to (IV):
  (I) For a nonnegative integer $k$
  and a 
  graph $G=(V,E)$ with $|V|\ge2$,
  let $X_1,X_2,\dots,X_q\subsetneq V$ denote
  all maximal proper subsets of $V$ that induce $k$-edge-connected subgraphs. 
  Then at least one of (a) and (b) holds:
  (a) $\{X_1,X_2,\dots,X_q\}$ is a partition of $V$; and
  (b) $V\setminus X_1, V\setminus X_2,\dots,V\setminus X_q$ are pairwise disjoint.
  (II) For $k = 1$ and a strongly-connected digraph $G$,
  whether $V$ is in (a) and/or (b) can be decided in $O(n+m)$ time
  and we can generate all such $X_1,X_2,\dots,X_q$
  in $O(n+m+|X_1|+|X_2|+\dots+|X_q|)$ time,
  where $n=|V|$ and $m=|E|$. 
  (III) For a digraph $G$,
  we can enumerate in linear delay all vertex subsets of $V$ 
  that induce strongly-connected subgraphs.  
  (IV) A digraph is Hamiltonian
  if there is a spanning subgraph that is 
  strongly-connected and in the case (a).   
\end{abstract}

\section{Introduction}
\label{sec:intro}

Suppose that we are given an element set $U$. 
For a family $\MS \subseteq 2^U$ of subsets of elements, 
we call $(U, \MS)$ or $\MS$ a \emph{set system}
and each subset $S\in\MS$ an \emph{$\MS$-solution}.
Let $C\subseteq U$.
If an $\MS$-solution $S$ is a proper subset of $C$, then
we call $S$ a {\em proper subset $\MS$-solution} of $C$
and write it as an {\em $\MS$-PSS}. 
We call an inclusion-wise maximal $\MS$-PSS of $C$
a {\em maximal $\MS$-PSS} of $C$
and write it as an {\em $\MS$-MaxPSS}. 
We call a nonempty subset $Y$ of $C$
an \emph{$\MS$-removable set} of $C$
if $C \setminus Y$ is an $\MS$-solution.
We write an $\MS$-removable set as an \emph{$\MS$-\RemSet}. 
We call an inclusion-wise minimal $\MS$-RS of $C$
a \emph{minimal $\MS$-removable set} of $C$
and write it as an \emph{$\MS$-\MinRemSet}. 
We denote by $\MPSS_{\MS}(C)$ the family
of all $\MS$-MaxPSSs of $C$
and by $\MRS_{\MS}(C)$ the family
of all $\MS$-MinRSs of $C$.
Observe that 
$X\in\MPSS_{\MS}(C)$ implies
$C\setminus X\in\MRS_{\MS}(C)$
and $Y\in\MRS_{\MS}(C)$ implies
$C\setminus Y\in\MPSS_{\MS}(C)$. 
We say that a subset $C$ is 
{\em $\MS$-MaxPSS-disjoint} if all $\MS$-MaxPSSs of $C$ are pairwise disjoint
and {\em $\MS$-MinRS-disjoint} if all $\MS$-MinRSs of $C$ are pairwise disjoint.
We omit the prefix $\MS$- from these notations
when it is clear from the context;
e.g., an $\MS$-solution may be simply stated as a solution.

A system $(U,\MS)$ is a \emph{Subset-Disjoint}
(\emph{SD}) \emph{system}~\cite{TH.2023}
if, for any two solutions $S,S'\in\MS$
such that $S\supsetneq S'$,
every \MinRemSet $Y$ of $S$
is either a subset of $S'$ (i.e., $Y\subseteq S'$)
or disjoint with $S'$ (i.e., $Y\cap S'=\emptyset$). 

We start this paper
by introducing what we call SSD system,
an extension of SD system. 
\begin{defn}
  A set system $(U,\MS)$ is
  a \emph{Superset-Subset-Disjoint} \emph{(SSD)}
  \emph{system}
  if, for any two solutions $S,S'\in\MS$
  such that $S\supsetneq S'$,
  every \MinRemSet $Y$ of $S$ satisfies
  at least one of the following:
  $Y$ is a superset of $S'$ $(Y\supseteq S')$;
  $Y$ is a subset of $S'$ $(Y\subseteq S')$; and
  $Y$ is disjoint with $S'$ $(Y\cap S'=\emptyset)$.   
\end{defn}

Every SD system is an SSD system,
but the converse is not true. 
For the first result,
we derive the following theorem
on SSD system. 
\begin{thm}
  \label{thm:disjoint}
  Let $(U,\MS)$ be an SSD system.
  Any solution $S\in\MS$ is
  \MaxProSubSol-disjoint and/or $S$ is \MinRemSet-disjoint. 
\end{thm}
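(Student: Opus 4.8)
The plan is to argue by contradiction: suppose some solution $S\in\MS$ is neither MaxPSS-disjoint nor MinRS-disjoint. Then there are two MaxPSSs $X_1,X_2$ of $S$ with $X_1\cap X_2\neq\emptyset$ (and $X_1\neq X_2$), and two MinRSs $Y_1,Y_2$ of $S$ with $Y_1\cap Y_2\neq\emptyset$ (and $Y_1\neq Y_2$). Recalling the correspondence noted in the excerpt, $X_i := S\setminus Y_i'$ for appropriate MinRSs is a MaxPSS, and conversely; so I would first translate both failure hypotheses into a single ``ambient'' pair of nested solutions to which the SSD axiom can be applied. The natural move is: from $X_1\cap X_2\neq\emptyset$ and $X_1,X_2\subsetneq S$ maximal, consider the two solutions $S\supsetneq X_1$ (here $X_1$ plays the role of $S'$). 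Then $Y_1=S\setminus X_1$ is a MinRS of $S$, and the SSD property forces, for \emph{every} MinRS $Y$ of $S$, that $Y\supseteq X_1$, or $Y\subseteq X_1$, or $Y\cap X_1=\emptyset$. Since $X_1=S\setminus Y_1$, the option $Y\subseteq X_1$ means $Y\cap Y_1=\emptyset$, the option $Y\cap X_1=\emptyset$ means $Y\subseteq Y_1$ hence $Y=Y_1$ by minimality, and the option $Y\supseteq X_1$ means $Y\cup Y_1=S$ with $Y\cap Y_1$ possibly nonempty. So the only way two distinct MinRSs $Y_1,Y$ of $S$ can intersect is the ``superset'' case: $Y\supseteq S\setminus Y_1$, equivalently $Y\cup Y_1=S$.

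The key structural step is therefore: \emph{if $S$ is not MinRS-disjoint, then there exist distinct MinRSs $Y_1,Y_2$ of $S$ with $Y_1\cup Y_2=S$, and moreover every MinRS of $S$ is comparable-or-disjoint with each of $S\setminus Y_1$ and $S\setminus Y_2$ in the restricted sense above.} I would then exploit $Y_1\cup Y_2=S$ to pin down the MaxPSS side. Set $X_i=S\setminus Y_i$; then $X_1,X_2$ are MaxPSSs with $X_1\cap X_2=\emptyset$ (since $X_1\cap X_2=(S\setminus Y_1)\cap(S\setminus Y_2)=S\setminus(Y_1\cup Y_2)=\emptyset$). Now take \emph{any} MaxPSS $X$ of $S$; I want to show $X\cap X_1=\emptyset$ or $X\cap X_2=\emptyset$, which would contradict the failure of MaxPSS-disjointness once I check it covers all pairs. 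Apply the SSD axiom to the nested pair $S\supsetneq X$: for the MinRS $Y_1$ of $S$ we get $Y_1\supseteq X$ or $Y_1\subseteq X$ or $Y_1\cap X=\emptyset$; the middle option is impossible since $X$ is a \emph{proper} subset of $S$ (it cannot contain the complement of a MinRS and still be proper unless... — here I must be careful) and the last option gives $X\subseteq S\setminus Y_1=X_1$, hence $X=X_1$ by maximality. So either $Y_1\supseteq X$, i.e. $X\cap X_1=\emptyset$, or $X=X_1$. Symmetrically with $Y_2$: either $X\cap X_2=\emptyset$ or $X=X_2$. Combining, every MaxPSS $X$ either is disjoint from $X_1$, or is disjoint from $X_2$, or equals one of them — and since $X_1\cap X_2=\emptyset$, in all cases any two MaxPSSs are disjoint, contradicting the assumption that $S$ is not MaxPSS-disjoint.

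The main obstacle, I expect, is the bookkeeping around the ``superset'' option $Y\supseteq S'$ in the SSD axiom, which is exactly the feature distinguishing SSD from SD and the reason the conclusion is a disjunction rather than a clean dichotomy for both families simultaneously. Concretely, I need to rule out degenerate configurations: e.g. when a MinRS equals all of $S$ (so the corresponding MaxPSS is $\emptyset$), or when $S$ has a unique MaxPSS / unique MinRS (in which case disjointness holds vacuously and there is nothing to prove). I would dispose of these at the outset. The other delicate point is verifying that the case $Y_1\subseteq X$ with $X\subsetneq S$ genuinely cannot occur: $Y_1\subseteq X\subsetneq S$ would mean $X$ is a solution containing a MinRS $Y_1$ of $S$, but then $X\setminus Y_1$ would need to be examined — this is where I would lean on minimality of $Y_1$ and the fact that $S\setminus Y_1=X_1$ is already a \emph{maximal} PSS, so any PSS $X\supsetneq$ something is squeezed. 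Once these edge cases are cleanly handled, the contradiction closes the argument.
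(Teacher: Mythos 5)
Your overall strategy --- prove the contrapositive ``if $S$ is not MinRS-disjoint then $S$ is MaxPSS-disjoint'' by repeatedly applying the SSD axiom to pairs $(S,X)$ for MaxPSSs $X$ of $S$ --- is workable and is essentially the mirror image of the paper's argument (the paper shows that one intersecting pair of MaxPSSs forces \emph{every} pair of MaxPSSs to have union $S$, whence all MinRSs are pairwise disjoint). Your first structural step is correct: if $Y_1,Y_2$ are distinct intersecting MinRSs, applying the axiom to $S\supsetneq X_1:=S\setminus Y_1$ with $Y=Y_2$ does force $Y_2\supseteq X_1$, i.e.\ $Y_1\cup Y_2=S$, so $X_1$ and $X_2:=S\setminus Y_2$ are disjoint MaxPSSs. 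However, the two places you flag as delicate are genuine gaps, and the patches you sketch do not work. First, the case $Y_1\subseteq X$ for a MaxPSS $X\subsetneq S$ cannot be excluded by ``minimality of $Y_1$ plus maximality of $X_1$'': in general a MaxPSS can strictly contain a MinRS (in the complete digraph every $V\setminus\{v\}$ contains every singleton MinRS other than $\{v\}$), and the coexistence of two distinct MaxPSSs is no contradiction. What does work is a second application of the axiom with $Y_2$: since $Y_1\subseteq X$ and $Y_1\cap Y_2\ne\emptyset$ give $X\cap Y_2\ne\emptyset$, either $Y_2\supseteq X\supseteq Y_1$ (forcing $Y_1=Y_2$ by minimality) or $Y_2\subseteq X$ (forcing $X\supseteq Y_1\cup Y_2=S$); both are impossible. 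You must invoke $Y_1\cup Y_2=S$ here, not just minimality and maximality.

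Second, and more seriously, your concluding inference is a non sequitur. Knowing that every MaxPSS $X$ satisfies ($X\cap X_1=\emptyset$ or $X=X_1$) and ($X\cap X_2=\emptyset$ or $X=X_2$) does \emph{not} show that any two MaxPSSs are disjoint: two MaxPSSs $X,X'$ both different from $X_1$ and $X_2$ are only constrained to lie inside $Y_1\cap Y_2$, and nothing you have written prevents them from intersecting each other. To close this case you need the propagation step that is the heart of the paper's \lemref{SSD_basic}: if $X\cap X'\ne\emptyset$ with $X\ne X'$, apply the SSD axiom to $S\supsetneq X'$ and the MinRS $S\setminus X$; the options ``superset'' and ``disjoint'' are killed by $X\cap X'\ne\emptyset$ and the maximality of $X'$, so $S\setminus X\subseteq X'$, i.e.\ $X\cup X'=S$, which is incompatible with $X\cup X'\subseteq Y_1\cap Y_2\subsetneq S$. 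With these two repairs your proof goes through, but note that the paper reaches the conclusion more directly: it never needs the MinRS-side bookkeeping, only the single inductive lemma that one intersecting pair of MaxPSSs makes all pairwise unions equal to $S$.
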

Note that there may be a solution that is \MaxProSubSol-disjoint and \MinRemSet-disjoint at the same time.

We utilize \thmref{disjoint} to 
derive the following Theorems~\ref{thm:partition}
to \ref{thm:hamilton} on graph problems.
We denote by $\bbZ$ and $\bbZ_+$
the sets of integers and nonnegative integers, respectively.
We assume that readers are familiar with fundamental terminologies in graph theory;
e.g., \cite{Diestel.2017,We.2018}. 
Let $G=(V,E)$ be a directed graph (digraph) or
an undirected graph with a vertex set $V$
and an edge set $E$. 
We denote $n:=|V|$ and $m:=|E|$. 
For $S\subseteq V$, we denote by $G[S]$ the
\emph{subgraph induced by $S$}, that is,
$G[S]\triangleq(S,E[S])$, where $E[S]$ is defined to be
$E[S]\triangleq\{(u,v)\in E\mid u,v\in S\}$.
A subset $F\subseteq E$ is called an \emph{edge-cut} (or a \emph{cut})
if there are distinct vertices $s, t \in V$
such that no path from $s$ to $t$ exists in the subgraph $(V, E \setminus F)$,
where an empty set can be an edge-cut.
The minimum size of a cut
is called the \emph{edge-connectivity of $G$} and denoted by $\lambda(G)$,
where we define $\lambda(G)\triangleq+\infty$ if $|V|=1$. 
For $k\in\bbZ_+$, if $k\le\lambda(G)$,
then $G$ is said to be \emph{$k$-edge-connected}. 
In particular, a 1-edge-connected digraph
is called \emph{strongly-connected}. 
Being a fundamental concept of graph theory, 
edge-connectivity 
has attracted much attention of researchers
from algorithm theory and network analysis
for many years~\cite{AMO.1993,Bang-Jensen2008,CLRS.2022,S.2004,W.2019}. 

For $k\in\bbZ_+$ and a graph $G=(V,E)$,
we consider a set system $(V,\MS_{G,k})$
such that $\MS_{G,k}\triangleq\{S\subseteq V\mid G[S]\textrm{ is }k\textrm{-edge-connected}\}$.
We call $\MS_{G,k}$ a \emph{$k$-edge-connected system},
and in particular, we call it a \emph{strongly-connected system} if $G$ is directed and $k=1$. 
If $G$ is clear from the context, then we abbreviate
$\MS_{G,k}$ into $\MS_k$. 
Furthermore, for $C\subseteq V$,
we abbreviate $\MPSS_{\MS_{k}}(C)$ and $\MRS_{\MS_{k}}(C)$
into $\MPSS_{k}(C)$ and $\MRS_{k}(C)$, respectively. 
We say that $G$ is $\MS_{k}$-\MaxProSubSol-disjoint (resp., $\MS_{k}$-\MinRemSet-disjoint) if 
$V$ is $\MS_{k}$-\MaxProSubSol-disjoint (resp., $\MS_{k}$-\MinRemSet-disjoint). 
We will show that $\MS_{k}$ is an SSD system
(\lemref{k_edge_SSD})
to obtain the following theorem.

\begin{thm}
  \label{thm:partition}
  Let $k\in\bbZ_+$ and
  $G=(V,E)$ be a graph with $|V|\ge2$. 
  Then $G$ is $\MS_k$-\MaxProSubSol-disjoint and/or $\MS_k$-\MinRemSet-disjoint,
  and if $G$ is $\MS_k$-\MaxProSubSol-disjoint,
  then 
  $\MPSS_{k}(V)$ is a partition of $V$. 
\end{thm}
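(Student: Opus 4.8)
The plan is to split into two cases according to whether $G$ itself is $k$-edge-connected, that is, whether $V\in\MS_k$ or not. The disjointness dichotomy comes from \thmref{disjoint} in one case and from a direct structural argument in the other, and the ``partition'' refinement is then handled uniformly.

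If $V\in\MS_k$, then $V$ is a solution of the system $\MS_k$, which is SSD by \lemref{k_edge_SSD}; hence \thmref{disjoint} applied to $S=V$ gives immediately that $G$ is $\MS_k$-MaxPSS-disjoint and/or $\MS_k$-MinRS-disjoint. If instead $V\notin\MS_k$, I would prove the stronger statement that any two distinct MaxPSSs of $V$ are disjoint, so that the MaxPSS-disjoint disjunct already holds. The tool is the fact that the union of two $k$-edge-connected induced subgraphs sharing at least one vertex is again $k$-edge-connected: given $A,B\subseteq V$ with $G[A],G[B]$ $k$-edge-connected and a common vertex $w\in A\cap B$, for every partition of $A\cup B$ into two nonempty parts $P\ni w$ and $Q$ one picks $t\in Q$, notes $t\in A$ or $t\in B$, and observes that in either case the corresponding subgraph $G[A]$ or $G[B]$ is itself split by $(P,Q)$ into two nonempty parts and therefore already contributes at least $k$ edges crossing from $P$ to $Q$ (for digraphs one runs this on ordered parts). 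Granting this, if distinct MaxPSSs $X\neq X'$ of $V$ met, then $G[X\cup X']$ would be $k$-edge-connected, a proper subset of $V$ (since $V\notin\MS_k$), and a proper superset of $X$ (since $X'\not\subseteq X$), contradicting maximality of $X$.

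It remains to upgrade ``$G$ is $\MS_k$-MaxPSS-disjoint'' to ``$\MPSS_k(V)$ is a partition of $V$''; this part is the same in both cases and uses only that every singleton is a solution. For each $v\in V$ the subgraph $G[\{v\}]$ has $\lambda=+\infty\ge k$, so $\{v\}\in\MS_k$, and since $|V|\ge2$ we have $\{v\}\subsetneq V$; thus $\{v\}$ is a PSS of $V$ and is therefore contained in some maximal PSS. Hence $\bigcup\MPSS_k(V)=V$, and no MaxPSS is empty (the empty set, if a solution at all, is not a maximal PSS, being contained in the PSS $\{v\}$); combined with the pairwise disjointness hypothesis, this is exactly a partition of $V$.

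The step I expect to be the main obstacle is the union fact for $k$-edge-connected induced subgraphs, in particular getting a single argument that covers both the undirected and the directed setting: phrasing it through edge cuts across two-part (ordered, in the directed case) partitions rather than through edge-disjoint paths is what keeps it clean, since with paths the edges of $E[A\cap B]$ shared by $G[A]$ and $G[B]$ would have to be disentangled. Two smaller points to be careful about are that $\MS_k$-MaxPSS-disjointness by itself only asserts pairwise disjointness, so the covering argument via singletons is genuinely needed for the partition conclusion, and that the case $V\notin\MS_k$ must be treated separately because \thmref{disjoint} does not apply there ($V$ is then not a solution).
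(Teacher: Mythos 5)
Your proposal is correct and follows essentially the same route as the paper: split on whether $V\in\MS_k$, invoke \thmref{disjoint} (via \lemref{k_edge_SSD}) when $V$ is a solution, derive MaxPSS-disjointness from the union property of intersecting solutions when it is not, and use the fact that singletons are solutions to upgrade disjointness to a partition. The only difference is that you reprove the union fact by counting edges across (ordered) two-part cuts, whereas the paper isolates it as the confluence lemma (\lemref{edge_conn}) and proves it by concatenating paths through a common vertex; both arguments are sound.
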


\thmref{partition} appears to be trivial.
When $G$ is not $k$-edge-connected (i.e., $V\notin \MS_{k}$),
it is easy to show that
$\MPSS_{k}(V):=\{X_1,X_2,\dots,X_q\}$ is a partition of $V$.
Furthermore, $X_1,X_2,\dots,X_q$ can be generated
by existing algorithms. 
For example, for $k=1$, 
textbook algorithms~\cite{CLRS.2022} generate
$X_1,X_2,\dots,X_q$ in $O(n+m)$ time,
where $X_1,X_2,\dots,X_q$ are
called connected components
(resp., strongly-connected components)
when $G$ is undirected (resp., directed). 
For a digraph, when $k=2$,
we can generate $X_1,X_2,\dots,X_q$
in $O(m^{3/2})$ time~\cite{CHILP.2017}.
For an undirected graph,
the task can be done
in $O(m \sqrt{n})$ time when $k = 3$~\cite{CHILP.2017}
and in $O(m+n^{1+O(1)})$ time when $k=\log^{O(1)}n$~\cite{SY.2023}. 
A related problem is to generate ``$k$-edge-connected components''.
For a digraph, we can generate them
in linear time when $k = 3$~\cite{FOCS.2024}.
For an undirected graph, this task is achievable in linear time 
when $k = 4 $~\cite{ESA.2021-2,ESA.2021-1}
and $k = 5$~\cite{SODA.2024}.
Note that $k$-edge-connected components differ from $X_1, X_2, \dots, X_q$
for a digraph when $k \geq 2$ and
for an undirected graph when $k \geq 3$.

The point of \thmref{partition}
is that we admit $G$ itself to be $k$-edge-connected
(i.e., $V\in\MS_{k}$).
Let us consider the problem of generating $\MPSS_{k}(V)=\{X_1,X_2,\dots,X_q\}$
when $V\in\MS_{k}$.
In this case, if $V$ is given as input,
the above-mentioned algorithms
will output $V$ as the unique solution,
which is not a \MaxProSubSol of $V$. 

The \MaxProSubSols $X_1,X_2,\dots,X_q$ of $V$
and their sizes $|X_1|,|X_2|,\dots,|X_q|$
can be used to
evaluate the robustness of a network that works based on its $k$-edge-connectivity.
Even if the network itself is $k$-edge-connected,
some terminals may fail.
It would be preferred that $G$ contains larger \MaxProSubSols
since they are the ``second'' largest $k$-edge-connected graphs
in the network.
To observe extreme examples, suppose $k=1$. 
If $G$ is a directed cycle
(i.e., the ``sparsest'' 1-edge-connected digraph), 
then every \MaxProSubSol is a single vertex
and its size is only 1. 
On the other hand, if $G$ is a complete directed graph
with the edge set $E=\{(u,v)\in V\times V\mid u\ne v\}$
(i.e., the ``densest'' 1-edge-connected digraph), 
then every \MaxProSubSol is $V\setminus\{v\}$, $v\in V$ and
its size is $n-1$.
The sizes of \MaxProSubSols can be
an alternative measure of robustness beyond the conventional edge/vertex-connectivity. 

We consider the generation problem for strongly-connected system.
We show that all \MaxProSubSols of $V$ can be generated in linear time,
as stated in the following theorem,
where dominator tree~\cite{ILS.2012} is a key tool of the algorithm. 
\begin{thm}
  \label{thm:strong}
  Given a strongly-connected digraph $G=(V,E)$,
  let us denote $\MPSS_{G,1}(V)=\{X_1,X_2,\dots,X_q\}$.
  We can decide whether
  $G$ is \MaxProSubSol-disjoint or not in $O(n+m)$ time
  and generate all $X_1,X_2,\dots,X_q$
  in $O(n+m+|X_1|+|X_2|+\dots+|X_q|)$ time. 
\end{thm}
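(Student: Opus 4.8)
The plan is to characterize the MaxPSSs of $V$ in a strongly-connected digraph $G$ in terms of a pair of dominator trees, and then to read off both the disjointness decision and the generation procedure from the structure of those trees. First I would fix an arbitrary vertex $r\in V$ and recall that, because $G$ is strongly-connected, every vertex is reachable from $r$ and $r$ is reachable from every vertex; hence the dominator tree $\DTree{G}{r}$ (with respect to forward reachability from $r$) and the dominator tree $\DTree{\Trans{G}}{r}$ of the reverse graph (with respect to reachability \emph{to} $r$) are both well-defined spanning trees of $V$. The key observation to prove is that a proper subset $S\subsetneq V$ induces a strongly-connected subgraph only in a very restricted way relative to these trees: if $r\notin S$, then for $G[S]$ to be strongly-connected every vertex of $S$ must be able to avoid $r$ on some cycle, which forces $S$ to lie below a single child of $r$ in $\DTree{G}{r}$ and, simultaneously, below a single child of $r$ in $\DTree{\Trans{G}}{r}$. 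Conversely, if $r\in S$, then $V\setminus S$ must be ``cut off'' from the rest in both directions, which translates into $V\setminus S$ being a union of subtrees hanging off the two dominator trees.

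The main structural lemma I would establish is therefore: the maximal proper subsets of $V$ inducing strongly-connected subgraphs are exactly the sets of the form $D_G(c)\cap D_{\Trans{G}}(c')$ that are inclusion-wise maximal, where $D_G(c)$ denotes the set of descendants (inclusive) of a child $c$ of $r$ in $\DTree{G}{r}$ and $D_{\Trans{G}}(c')$ similarly for the reverse tree — together with, in the case that $r$ itself participates, sets of the form $V$ minus a subtree. Here \thmref{partition} (already proved via \lemref{k_edge_SSD}, the SSD property of $\MS_1$) does the heavy lifting on the combinatorics: it guarantees that $\MPSS_{1}(V)$ is either a partition of $V$ or has pairwise-disjoint complements, so I only have to recognize which case we are in and then enumerate the members of the appropriate family. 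Concretely, the MaxPSS-disjoint case corresponds to $r$ lying on no ``small'' cycle, i.e. to $r$ having a unique child in one of the two dominator trees failing — I would make precise that $G$ is MaxPSS-disjoint iff the partition-of-$V$ alternative holds, and that this in turn is detectable by checking a simple condition on the roots of $\DTree{G}{r}$ and $\DTree{\Trans{G}}{r}$ (essentially whether $r$ has out-degree-like branching that is ``covered'' by a single MaxPSS containing it).

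For the running time: computing each dominator tree takes $O(n+m)$ by the linear-time algorithm of \cite{ILS.2012} (and $\Trans{G}$ is built in $O(n+m)$). Deciding MaxPSS-disjointness is then an $O(n)$ traversal of the two trees, giving the claimed $O(n+m)$ bound. To generate the $X_i$'s I would, after identifying the relevant children $c$ (resp. complementary subtrees), output each $X_i$ by a traversal that touches exactly the vertices of $X_i$ plus a bounded amount of tree overhead, so the total is $O(n+m+\sum_i|X_i|)$; the additive $n+m$ absorbs the one-time tree construction, and the $\sum_i|X_i|$ term is unavoidable since that is the output size. The step I expect to be the genuine obstacle is the structural lemma identifying MaxPSSs with maximal intersections $D_G(c)\cap D_{\Trans{G}}(c')$: proving that such an intersection, when nonempty and maximal, actually \emph{induces} a strongly-connected subgraph (not merely that any strongly-connected $S$ avoiding $r$ is contained in one) requires showing that no further vertex obstruction arises inside the intersection, and handling correctly the boundary case where the maximal PSS contains $r$ — there the two dominator trees interact in a less symmetric way and I would need to argue carefully that removing a single dominator subtree (the same one is forced in both trees by the SSD/\thmref{partition} dichotomy) yields a strongly-connected remainder.
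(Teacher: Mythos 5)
Your overall plan---fix a root $s$, build the two dominator trees $\DT_s$ and $\DT^{\top}_s$ in linear time, and read the answer off their structure---is the same starting point as the paper's proof, but the structural lemma you build everything on is false, and this is a genuine gap rather than a detail. You claim that a strongly-connected $S\subsetneq V$ with $r\notin S$ must lie below a single child of $r$ in $\DT_r$, and hence that the MaxPSSs are the maximal nonempty intersections $D_G(c)\cap D_{G^{\top}}(c')$ of child-subtrees of the two trees. Take $G$ to be the complete digraph on $\{r,a,b\}$: both $a$ and $b$ are children of $r$ in $\DT_r$ (neither dominates the other), yet $\{a,b\}$ is a MaxPSS of $V$ not containing $r$ and not contained in any single child's subtree. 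Moreover every intersection $D_G(c)\cap D_{G^{\top}}(c')$ here is a singleton, while every MaxPSS has size $2$. The underlying error is that strong connectivity of $G[S]$ is a property of the edges inside $S$ and has no direct relation to which vertices dominate the vertices of $S$ on paths from $r$; domination constrains the \emph{complement} of a strongly-connected induced subgraph, not the subgraph itself.

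The paper's proof works precisely with that complement: it first shows (\lemref{str_any_MRS_path}) that every MinRS $Y$ of $V$ induces a shortcut-free Hamiltonian path in $G[Y]$ with $|\outNeigh{G}{V\setminus Y}|=|\inNeigh{G}{V\setminus Y}|=1$, and then (\lemref{str_MRS_in_DT}) that a set $Y$ with $s\notin Y$ is a MinRS if and only if it is a directed path ending at a leaf in $\DT_s$, the reverse of a directed path in $\DT^{\top}_s$, and closed under taking children in both trees. This is strictly stronger than ``a subtree'': Fig.~3 of the paper gives a leaf of $\DT_s$ that belongs to no MinRS at all, so ``$V$ minus a subtree'' in your boundary case is also not sufficient. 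Two further pieces you would still need even after repairing the characterization: the disjointness decision uses two runs of the leaf-walking procedure (from $s$ and from a vertex $t$ inside the unique MinRS avoiding $s$) together with \lemref{SSD_basic_with_Y}(ii), not a local condition at the root; and in the MaxPSS-disjoint case the MaxPSSs other than the one containing $s$ are recovered as the strongly-connected components of $G[Y]$ via \lemref{SSD_MRS}, which your intersection picture does not produce.
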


\thmref{strong} can be used to design an efficient algorithm
for enumerating all solutions in a strongly-connected system $\MS_{G,1}$.  
An enumeration problem in general asks to output all
required solutions without duplication
and has many applications in such fields
as data mining~\cite{AIS.1993,AS.1994,Chang.2013,KK.2005} and bioinformatics~\cite{E-MEM.2014,Roux.2022,SSF.2010}. 
\emph{Delay} of an enumeration algorithm 
refers to any computation time in
the following three cases:
(i) computation time between the start of the algorithm
and the output of the first solution;
(ii) computation time between any two consecutive outputs;
and (iii) computation time between the last output
and the halt of the algorithm.
An enumeration algorithm achieves \emph{polynomial delay} (resp., \emph{linear delay})
if the delay is bounded by a polynomial (resp., a linear function)
with respect to the input size,
where polynomial delay (and also linear delay)
is among the strongest classes
of enumeration algorithms~\cite{JYP.1988}.
\begin{thm}
  \label{thm:linear_delay}
  Given a directed graph $G$,
  we can enumerate all solutions in $\MS_{G,1}$ in $O(n+m)$ delay.
\end{thm}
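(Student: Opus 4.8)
The plan is a flashlight-style (binary-partition) enumeration in which \thmref{strong} is used as an oracle: given a strongly-connected $G[C]$ it reports, in $O(n+m)$ time, whether $C$ is MaxPSS-disjoint and, in the same time, the MinRSs of $C$ --- which are pairwise disjoint and hence of total size $O(|C|)$.

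First I reduce to a rooted task. Fix a vertex order $v_1,\dots,v_n$; since every solution has a smallest-index vertex and grouping solutions by it partitions $\MS_{G,1}$, it suffices to output, for each $i$, all solutions $S$ with $v_i\in S\subseteq\{v_i,\dots,v_n\}$. A strongly-connected induced subgraph containing $v_i$ lies inside the strongly-connected component $C_i$ of $G[\{v_i,\dots,v_n\}]$ containing $v_i$, which is found in $O(n+m)$; so the task becomes: given strongly-connected $G[C]$ and $r\in C$, output all solutions $S$ with $r\in S\subseteq C$.

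The core is a recursive procedure solving the slightly more general problem $\textsc{Enum}(r,C,\MY)$: given strongly-connected $G[C]$, $r\in C$ and a family $\MY$ of nonempty subsets of $C\setminus\{r\}$, output, each exactly once, every solution $S$ with $r\in S\subseteq C$ meeting every member of $\MY$. We output $C$ (it meets every member of $\MY$) and stop if $C=\{r\}$; otherwise we call \thmref{strong} to obtain the MaxPSSs $X_1,\dots,X_q$ of $C$ and the matching MinRSs $Z_j=C\setminus X_j$. Every further target $S$ is a proper PSS of $C$, hence a subset of some $X_j$ with $r\in X_j$. If $C$ is MaxPSS-disjoint, then by \thmref{partition} the $X_j$ partition $C$, so exactly one contains $r$ and every target $S\ne C$ lies in it; we recurse once into that $X_j$ with the family $\{Y\cap X_j:Y\in\MY\}$ (discarding empty members; a discarded member aborts the branch). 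If $C$ is not MaxPSS-disjoint, then by \thmref{partition} the $Z_j$ are pairwise disjoint; for a target $S\subsetneq C$ the set $C\setminus S$ is a removable set of $C$, hence contains some $Z_j$, and we let $j^\star$ be the least index with $S\cap Z_{j^\star}=\emptyset$. For each $j$ with $r\in X_j$ we recurse into $X_j$ with the family $\{Y\cap X_j:Y\in\MY\}\cup\{Z_{j'}:j'<j,\ r\notin Z_{j'}\}$ (each such $Z_{j'}$ is a nonempty subset of $X_j$, being disjoint from $Z_j$; an empty intersected member aborts the branch); branch $j$ then collects exactly the targets with $j^\star=j$. Using minimality of $j^\star$, disjointness of the $Z_j$, and $X_j\subsetneq C$, one checks that every target is produced by exactly one branch and never coincides with the solution directly output by a descendant call, so the enumeration is complete and duplicate-free.

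For the delay bound, each call outputs exactly one solution before recursing, so the recursion tree (rooted, in the top-level reduction, at the $C_i$'s) is in bijection with $\MS_{G,1}$; a DFS of it outputs every solution, and it remains to bound the work between two consecutive outputs by $O(n+m)$. In case~(a) the MaxPSSs partition $C$, so \thmref{strong} costs $O(n+m)$ and the unique child's setup $O(|C|)$. In case~(b) we never materialize $X_1\cup\cdots\cup X_q$, whose total size may be $\Theta(q|C|)$: we keep only the disjoint MinRSs $Z_j$ (total size $O(|C|)$, obtained in $O(n+m)$); the branches that abort --- the single $j$ with $r\in Z_j$ and those $j$ admitting a member $Y\in\MY$ with $Y\subseteq Z_j$ --- are identified in $O(|C|)$ total since each $Y$ lies in at most one $Z_j$; and each surviving child $X_j=C\setminus Z_j$ is materialized only when its call begins and is charged to the solution that call immediately outputs. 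The delicate point is the size of $\MY$: in case~(b) each level appends fresh MinRSs, and although each fresh batch is internally disjoint, a fresh MinRS of $X_j$ need not be disjoint from an inherited constraint. The main obstacle, accordingly, is a structural lemma --- obtained by applying the SSD property of $\MS_{G,1}$ (the mechanism behind \thmref{disjoint}) to the nested solutions $X'\subsetneq X_j\subsetneq C$ --- showing that the inherited and the fresh families jointly remain laminar, so that after dropping dominated members one has $\sum_{Y\in\MY}|Y|=O(n)$ throughout; with this, every per-call operation runs in $O(n+m)$ and a DFS with $O(1)$-cost backtracking gives $O(n+m)$ delay.
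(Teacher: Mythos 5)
Your overall architecture matches the paper's: partition $\MS_{G,1}$ by the smallest-index vertex, then recurse, using \thmref{strong} to branch into the unique MaxPSS containing the root in the MaxPSS-disjoint case and into the sets $C\setminus Z_j$ over the pairwise-disjoint MinRSs $Z_j$ otherwise, with pruning by index to avoid duplicates. The genuine gap is in how you propagate the pruning constraints in the second case. You carry a \emph{family} $\MY$ of ``must-intersect'' sets, truncate each member by $Y\cap X_j$ at every level, and then concede that the entire complexity analysis hinges on an unproven ``structural lemma'' that the inherited and fresh constraints remain jointly laminar with $\sum_{Y\in\MY}|Y|=O(n)$. That lemma is the crux, not a technicality: the members of $\MY$ are truncations of MinRSs of \emph{different ancestors}, the SSD property is a statement about a MinRS of a solution versus a sub-\emph{solution} (not versus another MinRS), so it gives no laminarity between two such constraints; and even an antichain of pairwise non-nested constraints can have total size $\omega(n)$, so ``dropping dominated members'' does not rescue the bound. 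Without this lemma neither the $O(n+m)$ per-call cost nor, strictly, your ``one checks that\dots'' duplicate-freeness argument is established, so the delay bound does not follow.

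The paper closes exactly this hole with \lemref{enum_disunion}: for an SSD system, a nonempty inclusion constraint $I$, and a MinRS $Y$ of $S$ with $Y\cap I=\emptyset$, every solution in $\MS(S,I)$ either \emph{contains} $Y$ or is \emph{disjoint} from $Y$ --- the case $Y\supseteq S'$ is killed by $I\subseteq S'$. Hence ``$S$ meets $Z_{j'}$'' is equivalent to ``$S\supseteq Z_{j'}$'' for every constraint you ever add (they all avoid the root $r$), and your whole family $\MY$ collapses to the single set $I=\{r\}\cup\bigcup\MY$, which is a subset of the current ground set and needs no truncation, no laminarity argument, and no size lemma. This is what \lemref{enum_partition} and \thmref{linear_delay_SSD} run on; the branches become $\MS(S\setminus Y_i,\,I\sqcup Y_1\sqcup\dots\sqcup Y_{i-1})$ and disjointness of the recursion is immediate. (Your pre-order output plus $O(1)$ backtracking for the delay is workable but less clean than the paper's even/odd-depth alternative output method; that part is a presentational, not a substantive, difference.)
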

\thmref{linear_delay} improves the delay bound $O(n^5m)$ in Theorem~5(i) of
\cite{Haraguchi.2022}.

The last result is 
a new sufficient condition on the existence of Hamiltonian cycles,
a well-known concept of graph theory. 
A \emph{Hamiltonian cycle} in a graph
is a cycle that visits all vertices exactly once.
A graph $G$ is \emph{Hamiltonian} if there exists
a Hamiltonian cycle in $G$.
\begin{thm}
  \label{thm:hamilton}
  A digraph $G=(V,E)$ is Hamiltonian 
  if there is a subset $F\subseteq E$ such that
  the spanning subgraph $G'=(V,F)$ is
  strongly-connected and $\MS_{G',1}$-\MaxProSubSol-disjoint. 
\end{thm}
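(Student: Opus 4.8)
The plan is to prove the statement by induction on $n=|V|$, working with the structure of the certifying spanning subgraph $G'=(V,F)$. Since adding edges to a graph that already contains a Hamiltonian cycle preserves Hamiltonicity, it suffices to exhibit a Hamiltonian cycle in $G'$ itself. Because $G'$ is strongly-connected we have $V\in\MS_{G',1}$, and since $G'$ is MaxPSS-disjoint, \thmref{partition} gives that $\MPSS_{G',1}(V)=\{X_1,\dots,X_q\}$ is a partition of $V$. I would first record that $q\ge2$: when $|V|\ge2$ every singleton $\{v\}$ is a PSS of $V$, so the family of MaxPSSs is nonempty, while $q=1$ would force $X_1=V$, contradicting $X_1\subsetneq V$; the case $|V|=1$ is trivial. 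If every $X_i$ is a singleton then $q=n$ and, as the next step shows, $G'$ already contains a spanning directed cycle, which is the base case.

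The second step is to contract each $X_i$ in $G'$ to a vertex $x_i$, obtaining a digraph $H$ on $\{x_1,\dots,x_q\}$. Since each $G'[X_i]$ is strongly-connected, any directed path among the $x_i$ lifts to a directed path in $G'$ by routing through each visited block via its strong connectivity; hence whenever $\{x_i:i\in I\}$ induces a strongly-connected subgraph of $H$ with $2\le|I|<q$, the set $\bigcup_{i\in I}X_i$ is a PSS of $V$ lying inside no single block, which is impossible because every PSS of $V$ is contained in a MaxPSS, i.e.\ in some $X_j$. Thus no proper vertex subset of $H$ of size $\ge2$ induces a strongly-connected subgraph. As $H$ is strongly-connected it contains a directed cycle, which must span $V(H)$ (otherwise its vertex set would be a forbidden subset), and for the same reason $H$ has no chords; so $H$ is exactly a directed $q$-cycle. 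Relabel it as $x_1\to x_2\to\cdots\to x_q\to x_1$, so that in $G'$ every inter-block edge goes from $X_i$ to $X_{i+1}$ (indices mod $q$), with at least one such edge for each $i$; let $A_i\subseteq X_i$ be the set of tails of edges leaving $X_i$ and $B_i\subseteq X_i$ the set of heads of edges entering $X_i$. In particular, if all blocks are singletons, the inter-block edges form a spanning $q$-cycle and $G'$ is Hamiltonian.

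The third step is the lifting. I would first show each $G'[X_i]$ is again strongly-connected (immediate) and MaxPSS-disjoint: otherwise \thmref{disjoint} applied to $\MS_{G',1}$ gives two overlapping, incomparable MaxPSSs $S,S'$ of $X_i$, the SSD property (\lemref{k_edge_SSD}) forces $S\cup S'=X_i$, and then a path inside $G'[S]$ or $G'[S']$ joining an entry vertex of $X_i$ to an exit vertex of $X_i$, which stays strictly inside $X_i$, together with $X_{i+1}\cup\cdots\cup X_{i-1}$, yields a proper strongly-connected subset of $V$ straddling two blocks — the same contradiction as in the second step. Consequently the induction hypothesis applies to every block with $|X_i|\ge2$. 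The goal is then to choose, for each $i$, an inter-block edge $a_i\to b_{i+1}$ (with $a_i\in A_i$, $b_{i+1}\in B_{i+1}$) and a Hamiltonian path $P_i$ in $G'[X_i]$ from $b_i$ to $a_i$ (the trivial path when $X_i$ is a singleton); concatenating $P_1,(a_1\to b_2),P_2,(a_2\to b_3),\dots,P_q,(a_q\to b_1)$ then gives a Hamiltonian cycle of $G'$.

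The main obstacle is exactly this simultaneous choice. A Hamiltonian cycle of $G'[X_i]$ alone does not provide a Hamiltonian $b_i$-$a_i$ path for prescribed endpoints — when a block is a short directed cycle, a Hamiltonian path between two of its vertices exists only when they are suitably in phase with the cyclic structure — so $B_i$ and $A_i$ must be aligned with the Hamiltonian-path structure of $G'[X_i]$, and around the whole block-cycle the exit chosen in $X_i$ and the entry it forces in $X_{i+1}$ must be joined by a genuine edge of $G'$. I plan to handle this by strengthening the induction hypothesis to: a strongly-connected MaxPSS-disjoint digraph on $\ge2$ vertices has a (in fact unique) Hamiltonian cycle, and it has a Hamiltonian path from $b$ to $a$ precisely when $a$ immediately precedes $b$ on that cycle. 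The crucial point to verify is that the absence of any proper strongly-connected straddling subset of $V$ — a consequence of the MaxPSS-disjointness of $V$, as derived above — pins down $A_i$ and $B_i$ tightly enough that this strengthened statement propagates block by block through the induction and the chosen exits/entries glue along actual edges; dispatching the boundary cases ($q=2$, singleton blocks, and the situation in which a single vertex would be forced to be both entry and exit of a block) is where the remaining technical work lies.
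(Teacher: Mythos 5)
Your setup is sound and matches the paper's: the auxiliary graph on the blocks $X_1,\dots,X_q$ is indeed a chordless directed $q$-cycle (this is \lemref{hamilton}(i)--(ii) in the paper, proved exactly as you do, by observing that a union of blocks forming a shorter strongly-connected cycle would contradict their maximality). The gap is in your final lifting step. You correctly identify the obstacle --- choosing, simultaneously around the block-cycle, inter-block edges $a_i\to b_{i+1}$ and Hamiltonian $b_i$-to-$a_i$ paths inside each block --- but you do not resolve it: the ``strengthened induction hypothesis'' (unique Hamiltonian cycle, plus a Hamiltonian $b$-to-$a$ path exactly when $a$ immediately precedes $b$) is asserted as a plan, and you explicitly defer the boundary cases and the propagation of entry/exit constraints to ``remaining technical work.'' As written, the proof is incomplete at precisely the point where all the difficulty sits, and it is not clear the strengthened hypothesis even propagates: nothing you have established controls which vertices of $X_i$ can serve as heads of incoming inter-block edges relative to the Hamiltonian-cycle structure of $G'[X_i]$.

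The idea that closes this gap --- and which your argument is missing --- is \lemref{hamilton}(iii): in a strongly-connected MaxPSS-disjoint graph, \emph{no} simple cycle of length $p$ with $2\le p<n$ can visit every block. Indeed, the vertex set $X'$ of such a cycle induces a strongly-connected proper subgraph, hence lies in some MaxPSS $Z$; but $Z$ would meet every $X_i$ while properly missing part of some $X_{j'}$, so $Z$ and $X_{j'}$ would be two distinct intersecting MaxPSSs, contradicting the partition given by \thmref{partition}. With this in hand you do not need Hamiltonian paths inside the blocks at all: route through each $X_i$ by an \emph{arbitrary} simple $b_i,a_i$-path (which exists by strong connectivity of $G'[X_i]$), concatenate with the chosen inter-block edges to get a simple cycle meeting every block, and conclude from (iii) that its length is forced to be $n$. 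Replacing your induction-on-blocks step with this global argument turns your outline into a complete proof; without it, the proposal does not establish the theorem.
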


The paper is organized as follows.
We describe background and related work in \secref{related}
and prepare terminologies and notations in \secref{prel}.
We show properties of general SSD systems and $k$-edge-connected systems
in \secref{disjoint}, with proofs for Theorems~\ref{thm:disjoint} and~\ref{thm:partition}.
We then provide 
proofs for Theorems~\ref{thm:strong} to~\ref{thm:hamilton} 
in Sections~\ref{sec:strong} to~\ref{sec:hamilton}, respectively.
Technically, \thmref{hamilton} is obtained from fundamental properties of strongly-connected system.
Readers can skip Sections~\ref{sec:strong} and~\ref{sec:linear_delay}
if they are interested only in \thmref{hamilton}. 
Finally, we give concluding remarks in \secref{conc}. 

\section{Background and Related Work}
\label{sec:related}

We are motivated by the study~\cite{TH.2023} on SD system. 
Assuming an oracle that computes a \MinRemSet of a query solution, 
the authors in \cite{TH.2023} show that 
all solutions of a given SD system $(U,\MS)$ can be enumerated in linear delay
with respect to $|U|$ and the oracle running time. 
They also show
that the 2-edge/vertex-connected system\footnote{A $k$-vertex-connected system is defined analogously to a $k$-edge-connected system.} for an undirected graph is an SD system,
yielding linear-delay enumeration of all solutions in the system.
However, it is hard to apply their framework
to other subgraph enumeration problems.
For example, the strongly-connected system for a directed graph
is not an SD system; see \figref{cx_notSD} for a counterexample.
This is the main reason why we study SSD system,
and \thmref{linear_delay} in this paper is a solution to
the problem of enumerating
strongly-connected induced subgraphs
that cannot be treated by SD system. 

\begin{figure}[t!]
  \centering
  \begin{tikzpicture}[entity-vertex/.style={draw, circle,minimum width=0.25cm, ,inner sep=-1mm, thick}]
  \node[entity-vertex] (v1) at ($(90-72:1)+(-2.,-0.75)$) {};
  \node[entity-vertex] (v2) at ($(90:1)+(-2.,-0.75)$) {};
  \node[entity-vertex] (v3) at ($(90+72:1)+(-2.,-0.75)$) {};
  \node[entity-vertex] (v4) at ($(90+144:1)+(-2.,-0.75)$) {};
  \node[entity-vertex] (v5) at ($(90+216:1)+(-2.,-0.75)$) {};
  \draw[thick,-{Stealth}] (v2)--(v1);
  \draw[thick,-{Stealth}] (v3)--(v2);
  \draw[thick,-{Stealth}] (v4)--(v3);
  \draw[thick,-{Stealth}] (v5)--(v4);
  \draw[thick,-{Stealth}] (v1)--(v5);

  \node[entity-vertex] (u1) at(0, 0) {};
  \node[entity-vertex] (u2) at(0, 1) {};
  \node[entity-vertex] (u3) at(-1, 1) {};
  
  \draw[thick,-{Stealth}] (v1)--(u1);
  \draw[thick,-{Stealth}] (u2)--(u3);
  \draw[thick,-{Stealth}] (u3)--(v2);
  \draw[thick,-{Stealth}] (u1)to[out=120,in=250](u2);
  \draw[thick,-{Stealth}] (u2)to[out=-60,in=60](u1);

  \draw [black,very thick,dotted] (-2.,-0.75)circle[radius=1.25];
  \draw [red,rounded corners=5,very thick] (-0.3,-0.25)--(0.3,-0.25)--(0.3,1.25)--(-0.3,1.25)--node[left]{\small $ S' $}cycle;
  \draw[blue,rounded corners=20,very thick,dotted]($ (u1) + (0.5,-1) $)--($ (u2) + (0.5,0.5) $)--($ (u3) + (-1,0.25) $)--cycle;
  \draw ($ (u1) + (0.95,-1) $) node[blue]{\small $Y$: a MinRS of $S$};
  \draw ($ (v5)+(0.5, 0) $) node[black,right]{\small $ X = S \setminus Y $};
  \draw[cyan,rounded corners=0,very thick] (2.6, 1.75)--(-3.5, 1.75)--(-3.5, -2.25)--(2.6, -2.25)--cycle;
  \draw(-3.5, 1.75)node[cyan, below right]{\small $ S $};
\end{tikzpicture}
  \caption{Strongly-connected system is not SD; $Y$ is a \MinRemSet of $S$ and $S'$ is a PSS of $S$,
    where we see that $Y\not\subseteq S'$ and $Y\cap S'\ne\emptyset$.}
  \label{fig:cx_notSD}
\end{figure}

To obtain \thmref{linear_delay},
we first show the proof for \thmref{disjoint}
(an important property of SSD system)
and then go on to those for
Theorems~\ref{thm:partition} and \ref{thm:strong}. 
\thmref{partition} suggests the existence
of a graph decomposition based on \MaxProSubSols 
and \thmref{strong} indicates
that the decomposition
can be done in linear time for $k=1$ and a digraph.
This graph decomposition is interesting by itself
since it is unique, or we may say canonical,
with respect to the parameter $k$.  
There have been developed
various types of graph decomposition techniques
in the literature
(e.g., graph decomposition by a ``width'' parameter~\cite{R.1976,RS.1984},
combinatorial decomposition theory~\cite{C.1982,CE.1980},
modular decomposition~\cite{HMSZ.2024}). 
Recently, a canonical graph decomposition on 3-vertex-connected graphs
is studied by Carmesin and Kurkofka~\cite{CK.2023}. 
We have not found any nontrivial relationships between
our graph decomposition and existing ones.


Let us review problems in the literature
that are related to the problem
we consider for \thmref{strong} (i.e., generation of \MaxProSubSols). 
For a graph property $\Pi$,
let us call a graph a \emph{$\Pi$-graph} if it has the property $\Pi$
and a \emph{non-$\Pi$-graph} otherwise. 
Problems of generating maximal $\Pi$-subgraphs 
for a given non-$\Pi$-graph are studied in many contexts.
Connected components and 
strongly-connected components~\cite{CLRS.2022};
maximal $k$-edge-connected subgraphs~\cite{NS.2023,SY.2023}
and maximal $k$-vertex-connected subgraphs~\cite{Wen.2019}
are typical examples. 
The \emph{vertex deletion problem}
is a problem of this kind
that asks to find a maximum induced $\Pi$-subgraph  for a given non-$\Pi$-graph.
Recently, cluster~\cite{aprile2023tight}, $r$-rank~\cite{meesum2016rank} and deadlock-free~\cite{carneiro2019deadlock} are studied for $\Pi$. 
In this paper, we deal with problems of generating
maximal proper subsets $X\subsetneq V$
that induce $\Pi$-subgraphs for a given graph which may be a $\Pi$-graph,
where we consider $k$-edge-connectivity for $\Pi$. 

The \emph{critical node detection problem}~\cite{LTK.2018}
asks to find a vertex (or vertices) whose removal maximally decreases a prescribed connectivity criterion.
Strongly-connected graphs are studied in this context;
e.g., the removal of one vertex~\cite{GIP.2020};
and the removal of a vertex subset whose size is no more than a constant~\cite{BCR.2019}. 
The aim of these studies is rather to construct a data structure
that can provide the answer to a query efficiently.
The problem of finding $\MS_{1}$-\MaxProSubSols
for a given strongly-connected digraph
has not been treated in the literature.

It is a well-known $\mathcal{NP}$-complete
problem to decide whether a given graph is Hamiltonian or not~\cite{GJ.1979}.
There are a lot of studies on conditions on the existence of Hamiltonian cycles.
See \cite{Bang-Jensen2008,KO.2012} for a collection of previous results. 
Many known conditions
are based on the distribution of vertex degrees
or special graph classes such as semicomplete digraphs and
quasi-traisitive digraphs. 
\thmref{hamilton} is not among these cases. 

\section{Preliminaries}
\label{sec:prel}
For $p,q\in\bbZ$ $(p \leq q)$, let $[p,q]:=\{p,p+1,\dots,q\}$. 
For two sets $A,B$,
if $A\cap B=\emptyset$,
then we may write their union $A\cup B$ as $A\sqcup B$ (i.e., disjoint union)
in order to emphasize that they are disjoint.

\subsection{Set Systems}
\label{sec:prel_set}
Let $(U,\MS)$ be a set system.
For $C\subseteq U$, let us denote by
$\textsc{MaxSS}_\MS(C)$ the family of
inclusion-wise maximal $\MS$-solutions
among subsets of $C$. 
Observe that
$\textsc{MaxSS}_\MS(C)$ and $\MPSS_\MS(C)$ are not necessarily the same;
if $C$ is an $\MS$-solution,
then $\textsc{MaxSS}_\MS(C)=\{C\}\ne\MPSS_\MS(C)$ holds
since all solutions in $\MPSS_\MS(C)$ are proper subsets of $C$. 
Otherwise, $\textsc{MaxSS}_\MS(C)=\MPSS_\MS(C)$ holds.

For subsets $I\subseteq C\subseteq U$,
we denote $\MS(C):=\{S\in\MS\mid S\subseteq C\}$;
$\MS(C,I):=\{S\in\MS\mid I\subseteq S\subseteq C\}$; and
$\MRS_\MS(C,I):=\{Y\in\MRS_\MS(S)\mid Y\cap I=\emptyset\}$. 

As mentioned in \secref{intro}, a solution $S$ in a set system $(U, \MS)$ can be
\MaxProSubSol-disjoint and \MinRemSet-disjoint at the same time.
It is trivial that this happens when $|\MRS_{\MS}(S)|\in\{0,1\}$.
For the case of $|\MRS_{\MS}(S)|\ge2$, we show the following proposition as a fundamental property.
\begin{lem}
  \label{lem:sametime}
  Let $(U,\MS)$ be any set system and $S\in\MS$ be a solution such that $|\MRS_{\MS}(S)|\ge 2$. 
  Then $S$ is \MaxProSubSol-disjoint and \MinRemSet-disjoint
  if and only if $|\MRS_{\MS}(S)|=2$ holds and $\MRS_{\MS}(S)$ is a partition of $S$.
\end{lem}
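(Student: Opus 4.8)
The plan is to prove both directions directly from the definitions, using the observation that $X \in \MPSS_{\MS}(S)$ if and only if $S \setminus X \in \MRS_{\MS}(S)$, so that the map $Y \mapsto S \setminus Y$ is a bijection between $\MRS_{\MS}(S)$ and $\MPSS_{\MS}(S)$ that reverses inclusion. In particular $|\MPSS_{\MS}(S)| = |\MRS_{\MS}(S)|$, so under the hypothesis $|\MRS_{\MS}(S)| \ge 2$ both families have the same cardinality $q \ge 2$.

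For the ``if'' direction, suppose $\MRS_{\MS}(S) = \{Y_1, Y_2\}$ is a partition of $S$, i.e. $Y_1 \cap Y_2 = \emptyset$ and $Y_1 \sqcup Y_2 = S$. Then $\MRS_{\MS}(S)$ is trivially \MinRemSet-disjoint since its two members are disjoint. For MaxPSS-disjointness, the two MaxPSSs are $X_1 = S \setminus Y_1 = Y_2$ and $X_2 = S \setminus Y_2 = Y_1$; these are again disjoint, so $S$ is MaxPSS-disjoint as well.

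For the ``only if'' direction, assume $S$ is both MaxPSS-disjoint and MinRS-disjoint, and write $\MRS_{\MS}(S) = \{Y_1, \dots, Y_q\}$ with $q \ge 2$. The corresponding MaxPSSs are $X_i = S \setminus Y_i$. First I would show $q = 2$. Take any two distinct $Y_i, Y_j$; by MinRS-disjointness $Y_i \cap Y_j = \emptyset$, hence $Y_i \subseteq S \setminus Y_j = X_j$ and $Y_j \subseteq X_i$. Now the key step is to derive a contradiction from $q \ge 3$: pick three distinct indices, say $1,2,3$. From $Y_2 \subseteq X_1$ and $Y_3 \subseteq X_1$ together with $Y_2 \cap Y_3 = \emptyset$ we get $Y_2 \sqcup Y_3 \subseteq X_1 = S \setminus Y_1$, so $Y_1 \cap (Y_2 \cup Y_3) = \emptyset$; symmetrically all three $Y_i$ are pairwise disjoint (already known) and moreover $X_1 \cap X_2 = (S \setminus Y_1) \cap (S \setminus Y_2) \supseteq Y_3 \ne \emptyset$, contradicting MaxPSS-disjointness. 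Hence $q = 2$. It then remains to show $Y_1 \sqcup Y_2 = S$: we already have $Y_1 \cap Y_2 = \emptyset$, and $X_1 \cap X_2 = \emptyset$ means $(S \setminus Y_1) \cap (S \setminus Y_2) = \emptyset$, i.e. $S \setminus (Y_1 \cup Y_2) = \emptyset$, so $Y_1 \cup Y_2 = S$; therefore $\MRS_{\MS}(S)$ is a partition of $S$.

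The main obstacle is the contradiction step ruling out $q \ge 3$: one must be careful that the two MaxPSSs witnessing non-disjointness are genuinely distinct and that the inclusions $Y_i \subseteq X_j$ are used in the right combination; everything else is bookkeeping with complements. No properties of SSD systems are needed here — the statement holds for an arbitrary set system — so the proof is purely set-theoretic.
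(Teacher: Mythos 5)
Your proof is correct and follows essentially the same route as the paper: the same bijection $Y\mapsto S\setminus Y$, the same verification of the ``if'' direction, and a contradiction for $q\ge 3$ that is just the dual of the paper's (you use MinRS-disjointness to exhibit $Y_3\subseteq X_1\cap X_2$ and contradict MaxPSS-disjointness, while the paper uses MaxPSS-disjointness to force $Y_i\cup Y_j=S$ for all pairs and contradicts MinRS-disjointness). No gap; the one point worth making explicit, which you implicitly rely on, is that $Y_3\ne\emptyset$ because removable sets are nonempty by definition.
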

\begin{proof}
  Let $\MRS_{\MS}(S)=\{Y_1,Y_2,\dots,Y_q\}$ $(q\ge2)$
  and $\MPSS_{\MS}(S)=\{X_1,X_2,\dots,X_q\}$ such that $X_i=S\setminus Y_i$, $i\in[1,q]$. 
  $(\Longleftarrow)$ We see that $S$ is \MinRemSet-disjoint
  since $Y_1\cap Y_2=\emptyset$. Also, by $Y_1\cup Y_2=S$,
  we have $X_1=S\setminus Y_1=Y_2$ and $X_2=S\setminus Y_2=Y_1$, indicating that $S$ is \MaxProSubSol-disjoint. 
  $(\Longrightarrow)$ For contradiction, if $q\ge3$, we have that
  $X_1\cap X_2=X_1\cap X_3=X_2\cap X_3=\emptyset$ since $S$ is \MaxProSubSol-disjoint.
  Then $Y_1\cup Y_2=Y_1\cup Y_3=Y_2\cup Y_3=S$ would hold,
  contradicting that $S$ is \MinRemSet-disjoint.   
  We have $q=2$, and it is easy to see that $\MRS_{\MS}(S)=\{Y_1,Y_2\}$ is a partition of $S$. 
\end{proof}

Although we provided the original definitions of SD and SSD systems
in \secref{intro}, let us rephrase them in terms of \MaxProSubSols for better understanding. 
\begin{itemize}
\item A system $(U,\MS)$ is an SD system if, for any two solutions $S,S'\in\MS$
  such that $S'\subsetneq S$,
  every $X\in\MPSS_\MS(S)$ satisfies 
  either $X\cup S'=S$ or $X\supseteq S'$. 
\item A system $(U,\MS)$ is an SSD system if, for any two solutions $S,S'\in\MS$
  such that $S'\subsetneq S$,
  every $X\in\MPSS_\MS(S)$ satisfies 
  at least one of the following:
  $X\cap S'=\emptyset$; $X\cup S'=S$ and $X\supseteq S'$. 
\end{itemize}
For a system $(U,\MS)$, let $S\in\MS$ be a solution and $X,X'\in\MPSS_\MS(S)$. 
If $\MS$ is an SD system, then $X\cup X'=S$ holds.
Similarly, if $\MS$ is an SSD system, at least one of $X\cup X'=S$ and $X\cap X'=\emptyset$ hold.
The converse does not hold true for both systems. 

Let us make a comparison with other set systems in the literature.
Two sets $A,B$ \emph{intersect} if
$A\cap B\ne\emptyset$, $A\setminus B\ne\emptyset$ and $B\setminus A\ne\emptyset$. 
\begin{itemize}
\item A set system $(U,\MS)$ is \emph{confluent}
  if, for any three solutions $S,S',S''\in\MS$,
  $S'' \subseteq S \cap S'$ implies $S \cup S' \in \MS$. 
\item A set system $(U,\MS)$ is \emph{laminar}
  if no two solutions $S,S'\in\MS$ intersect. 
\end{itemize}
A confluent system is not necessarily an SD system (and hence not necessarily an SSD system);
e.g., $(U,\MS)$ with $U=\{1,2,3,4,5,6\}$ and $\MS=\{\{1,2,3,4,5,6\},\{1,2,3\},\{3,4,5\}\}$.
An SD or SSD system is not necessarily confluent;
e.g., $(U,\MS)$ with $U=\{1,2,3,4,5\}$ and $\MS=\{\{1,2,3\},\{3,4,5\},\{3\}\}$.

We see that any laminar system $(U,\ML)$ is an SSD system as follows;
Let $S,S'\in\ML$ be solutions such that $S'\subsetneq S$
and $X\in\MPSS_\ML(S)$.
It holds that $X\cap S'=\emptyset$; $X\supseteq S'$ or $X\subseteq S'$.
In the last case, $X=S'$ should hold by the maximality of $X$,
and hence $X\supseteq S'$ holds.
Substituting $X=S\setminus Y$ for $Y\in\MRS_\ML(S)$, we have that
either $Y\supseteq S'$ or $Y\cap S'=\emptyset$ holds,
indicating that any laminar system is ``Superset-Disjoint'' in our
terminologies and hence an SSD system. 
A laminar system is not necessarily an SD system
and there is an SD system that is also laminar.  

The above-mentioned relationships are summarized in \figref{setsystems},
with concrete examples of set systems.
As mentioned in previous sections,
SD system is first introduced in \cite{TH.2023},
where 2-edge/vertex-connected systems
for an undirected graph are shown to be SD systems. 
Laminar system appears in the literature of
discrete mathematics/optimization, 
and in our context,
the family of ``extreme vertex subsets'' for an undirected graph~\cite{NI.2008}
would be a good example.
Confluent system is treated in the context of enumeration
algorithms~\cite{BHPW.2010,Haraguchi.2022}.
We show that $\MS_{G,k}$ for any $k\in\bbZ_+$ and undirected/directed graphs is confluent (\lemref{edge_conn})
as well as SSD (\lemref{k_edge_SSD}) in
\secref{disjoint_Sgk}.

\begin{figure}[t!]
  \centering
  \includegraphics[width=10cm]{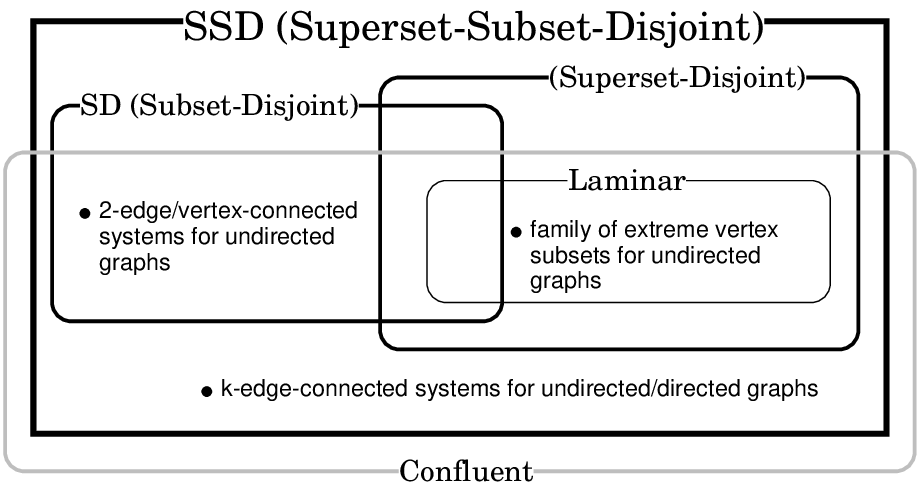}
  \caption{System classes and examples}
  \label{fig:setsystems}
\end{figure}




\subsection{Graphs}
Throughout the paper, we assume a graph to be simple and directed
except \secref{disjoint_Sgk}.
Only in \secref{disjoint_Sgk},
a graph may be undirected or may contain multiedges.
We do not deal with self-loops
in any part of the paper.

Let $G=(V,E)$ be a digraph.
For a vertex subset $S\subseteq V$, we denote by $G-S$ the induced subgraph $G[V\setminus S]$.
When $S=\{v\}$, we write $G-\{v\}$ as $G-v$ for simplicity. 
For an edge subset $F\subseteq E$,
we denote by $G-F$ the subgraph $(V,E\setminus F)$.
When $F=\{e\}$, we write $G-\{e\}$ as $G-e$. 
Let $\bar{E}:=\{(u,v)\in V\times V\mid (u,v)\notin E\}$.
For $F'\subseteq\bar{E}$, let us denote $G+F':=(V,E\cup F')$. 
When $F'=\{e'\}$, we write $G+\{e'\}$ as $G+e'$. 

A \emph{path} in $G=(V,E)$ is a sequence
of vertices $(v_1,v_2,\dots,v_{\ell+1})$, or $v_1\rightarrow v_2\rightarrow\dots\rightarrow v_{\ell+1}$, 
such that $(v_i,v_{i+1})\in E$, $i\in[1,\ell]$.
When $v_1=v_{\ell+1}$, it is called a \emph{cycle}.
If $v_1=s$ and $v_{\ell+1}=t$, we may call it an \emph{$s,t$-path}. 
The \emph{length of a path} or (\emph{a cycle})
is the number of edges in it, that is, it is $\ell$ in this case. 
A path or a cycle of length $\ell$
is called an \emph{$\ell$-path} or an \emph{$\ell$-cycle}. 
A \emph{simple path} is a path such that 
if $v_1,v_2,\dots,v_{\ell+1}$ are all-different,
and a \emph{simple cycle} is a cycle such that 
if $v_1,v_2,\dots,v_\ell$ are all-different and $v_1=v_{\ell+1}$. 
For convenience, we may represent a simple path
$v_1 \rightarrow v_2 \rightarrow \dots \rightarrow v_{\ell+1}$
by a vertex set $\{v_1, v_2, \dots, v_{\ell+1}\}$ unless no confusion arises.
A \emph{Hamiltonian path} in $G$ is a simple $(n-1)$-path
and a \emph{Hamiltonian cycle} in $G$
is a simple $n$-cycle.

An undirected graph (resp., a digraph) $G=(V,E)$
is \emph{connected} (resp., \emph{strongly-connected})
if there is an $s,t$-path for every distinct vertices $s,t\in V$.
A vertex subset $S\subseteq V$, or $G[S]$,
is a \emph{connected component} (resp., \emph{strongly-connected component})
if $S$ is inclusion-wise maximal such that $G[S]$ is connected (resp., strongly-connected). 
An \emph{articulation point}
(resp., a \emph{strong articulation point}) of $G$
is a vertex $v \in V$ such that
the number of connected components
(resp., strongly-connected components)
of $G - v$ is greater than that of $G$.

Let $G=(V,E)$ be a digraph. 
For a vertex $v \in V$, we denote by $\outNeigh{G}{v}$ and $\inNeigh{G}{v}$
the out-neighborhood and in-neighborhood of $v$ in $G$, respectively. 
For a subset $S\subseteq V$,
let $\outNeigh{G}{S} :=(\bigcup_{v\in S}\outNeigh{G}{v})\setminus S$ and
$\inNeigh{G}{S} :=(\bigcup_{v\in S}\inNeigh{G}{v})\setminus S$. 
The \emph{transpose of $G$} is defined to be
a digraph $G^{\top}=(V, E^{\top})$, where $E^{\top} := \{(v,u) \mid (u,v) \in E\}$.




\subsection{Dominator Trees}
A \emph{flow graph} with a start vertex $s$, denoted by $G_s$, 
is a directed graph such that every vertex is reachable from $s$.
A strongly-connected graph can be regarded as a flow graph
by taking any vertex as the start vertex. 

For a strongly-connected digraph $G=(V,E)$,
let us designate a start vertex $s\in V$. 
There is an $s,v$-path and $v,s$-path for every $v\in V$. 
We say that a vertex $u$ \emph{dominates} $v$,
or equivalently, that $u$ is a \emph{dominator of $v$} 
if every $s,v$-path visits $u$. 
We denote by $\dom(v)$ the set of dominators of $v$,
where $s,v\in \dom(v)$ holds for all $v\in V$.
Let us define a binary relation $\textrm{R}$ on $V$
such that $u\mathrm{R}v$ $\Leftrightarrow$ $u\in\dom(v)$. 
The binary relation $\mathrm{R}$ is reflexive and transitive.
The transitive reduction 
of $\mathrm{R}$
forms a tree rooted at $s$, which is called the \emph{dominator tree}
(\emph{of $G_s$})~\cite{ILS.2012}, where,
for $u, v \in V$, $u$ is the parent of $v$ if and only if
$\dom(v) = \dom(u) \cup \{v\}$.
We denote the dominator tree by $\DT_s$.
For $v\in V$, we denote by $\pi_s(v)$ the parent of $v$ in $\DT_s$,
where we define $\pi_s(s):=\NIL$.
We denote by $\Ch_s(v)$ the set of children of $v$ in $\DT_s$.
We will also consider the dominator tree of the transpose $G^{\top}_s$
and denote it by $\DT^{\top}_s$.
For $v\in V$, we denote the parent of $v$ in $\DT^{\top}_s$
by $\pi^{\top}_s(v)$,
where we define $\pi^{\top}_s(s):=\NIL$, 
and the set of children of $v$ by $\Ch^{\top}_s(v)$.


\section{Properties of SSD Systems}
\label{sec:disjoint}

In this section, we present various properties
of a general SSD set system (\secref{disjoint_SSD})
and a $k$-edge-connected system $\MS_{k}$ (\secref{disjoint_Sgk}) for
given graph $G=(V,E)$ and nonnegative integer $k\in\bbZ_+$.
We provide proofs for Theorems~\ref{thm:disjoint} and \ref{thm:partition} along this context. 

\subsection{General SSD Systems}
\label{sec:disjoint_SSD}

The following \lemref{SSD_basic} is enough for proving \thmref{disjoint}. 
\begin{lem}
  \label{lem:SSD_basic}
  For an SSD system $(U, \MS)$,
  let $S\in\MS$ be a solution such that $|\MPSS_\MS(S)|\ge2$. 
  For $X,X'\in\MPSS_\MS(S)$ $(X\ne X')$,
  \begin{itemize}
  \item[\rm (i)] if $X\cap X'\ne\emptyset$, then
    the union of any two sets in $\MPSS_\MS(S)$ is $S$; and 
  \item[\rm (ii)] if $X\cap X'=\emptyset$, then
    $S$ is \MaxProSubSol-disjoint. 
  \end{itemize}
\end{lem}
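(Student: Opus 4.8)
The plan is to prove part~(i) first and then deduce part~(ii) from it, leaning on two facts already available: (1) for an SSD system any two \emph{distinct} $X,X'\in\MPSS_\MS(S)$ satisfy $X\cup X'=S$ or $X\cap X'=\emptyset$ (recorded in \secref{prel_set}); and (2) if $A,B\in\MPSS_\MS(S)$ with $A\subseteq B$, then $A=B$, by maximality. Throughout I read ``any two sets'' as any two distinct sets, since the union of a MaxPSS with itself is a proper subset of $S$.

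For part~(i), assume $X\cap X'\neq\emptyset$. Fact~(1) immediately forces $X\cup X'=S$. The crucial intermediate claim I would establish is: every $W\in\MPSS_\MS(S)\setminus\{X,X'\}$ satisfies both $W\cup X=S$ and $W\cup X'=S$. Indeed, by fact~(1) applied to $W$ and $X$, either $W\cup X=S$ or $W\cap X=\emptyset$; in the latter case $W\subseteq S=X\cup X'$ together with $W\cap X=\emptyset$ gives $W\subseteq X'$, hence $W=X'$ by fact~(2), contradicting $W\notin\{X,X'\}$. The same argument with $X'$ in place of $X$ yields $W\cup X'=S$. Now take arbitrary distinct $Z,Z'\in\MPSS_\MS(S)$: if $\{Z,Z'\}=\{X,X'\}$ we are done; if exactly one of them lies in $\{X,X'\}$, the intermediate claim applied to the other gives $Z\cup Z'=S$; and if neither lies in $\{X,X'\}$, then by fact~(1) either $Z\cup Z'=S$ or $Z\cap Z'=\emptyset$, but in the latter case $Z'\subseteq S=Z\cup X$ (intermediate claim) with $Z'\cap Z=\emptyset$ gives $Z'\subseteq X$, hence $Z'=X$, a contradiction.

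For part~(ii), assume $X\cap X'=\emptyset$ and, for contradiction, that $S$ is not MaxPSS-disjoint, so there are distinct $W,W'\in\MPSS_\MS(S)$ with $W\cap W'\neq\emptyset$. By part~(i) applied to $W,W'$, the union of any two distinct members of $\MPSS_\MS(S)$ equals $S$; in particular $X\cup X'=S$, so $S=X\sqcup X'$. If $\MPSS_\MS(S)=\{X,X'\}$, its only pair is disjoint, contradicting that $S$ is not MaxPSS-disjoint; hence there is a third set $\tilde W\in\MPSS_\MS(S)\setminus\{X,X'\}$. By part~(i), $\tilde W\cup X=S$ and $\tilde W\cup X'=S$; since $X$ and $X'$ are disjoint with union $S$, the first equality forces $X'\subseteq\tilde W$ and the second forces $X\subseteq\tilde W$, so $\tilde W\supseteq X\cup X'=S$, contradicting that $\tilde W\subsetneq S$.

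The argument is largely bookkeeping, so I expect the main obstacle to be organizing the case analysis in part~(i) so that every case funnels into the single move ``a MaxPSS disjoint from $X$ gets absorbed into $X'$ and therefore equals $X'$'', and, in part~(ii), being careful to establish that a third MaxPSS genuinely exists before invoking part~(i) on it.
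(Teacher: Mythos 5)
Your proof is correct and follows essentially the same route as the paper's: both parts rest on the SSD dichotomy ($X\cup X'=S$ or $X\cap X'=\emptyset$ for distinct MaxPSSs) together with the absorption step that a MaxPSS disjoint from $X$ is contained in, hence equal to, $X'$. The only difference is presentational — you spell out the case analysis that the paper compresses into ``by induction,'' and you phrase (ii) as a contradiction via a third MaxPSS containing both $X$ and $X'$, which is the same containment argument the paper applies to $X'$ alone.
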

\begin{proof}
  Observe that $X\cup X'=S$ holds if $X\cap X'\ne\emptyset$ by SSD property.
  The lemma is obvious when $|\MPSS_\MS(S)|=2$.   
  Suppose $|\MPSS_\MS(S)|\ge 3$.
  Let $Z\in\MPSS_\MS(S)$, $Z\notin\{X,X'\}$.

  \noindent
  (i) We have $Z\cap X\ne\emptyset$ since otherwise $Z\subseteq S\setminus X\subsetneq X'$ would hold, contradicting the maximality of $Z$.
  Then $Z\cup X=S$ holds. By induction, we see that any $Z',Z''\in\MPSS_\MS(S)$ $(Z'\ne Z'')$
  satisfies $Z'\cup Z''=S$. 

  \noindent
  (ii) If $Z\cap X\ne\emptyset$, then $Z\cup X=S$ holds by (i).
  We would have $Z\supsetneq S\setminus X\supseteq X'$,
  contradicting the maximality of $X'$.
  By induction, we see that any $Z',Z''\in\MPSS_\MS(S)$ $(Z'\ne Z'')$
  satisfies $Z'\cap Z''=\emptyset$. 
\end{proof}

\paragraph{Proof for \thmref{disjoint}.}
The theorem holds for the case of $|\MPSS_\MS(S)|\le1$.
Suppose $|\MPSS_\MS(S)|\ge 2$.
By \lemref{SSD_basic},
if $S$ is not \MaxProSubSol-disjoint, 
then $X\cup X'=S$ holds for any two $X,X'\in\MPSS_\MS(S)$ $(X\ne X')$.
Let $Y,Y'$ denote the \MinRemSets of $S$ such that $X=S\setminus Y$ and $X'=S\setminus Y'$,
respectively. Then $Y\cap Y'=(S\setminus X)\cap(S\setminus X')=S\setminus(X\cup X')=\emptyset$ holds. 
\QED

The rest of this subsection is devoted to
various properties of SSD systems. 
The following \lemref{SSD_basic_Y_ver} is a restatement of \lemref{SSD_basic} in terms of {\MinRemSet}s.

\begin{lem}
  \label{lem:SSD_basic_Y_ver}
  For an SSD system $(U, \MS)$,
  let $S\in\MS$ be a solution such that $|\MRS_\MS(S)|\ge2$. 
  For $Y,Y'\in\MRS_\MS(S)$ $(Y\ne Y')$,
  \begin{itemize}
  \item[\rm (i)] if $Y\cup Y'\subsetneq S$, then
    $S$ is \MinRemSet-disjoint; and
  \item[\rm (ii)] if $Y\cup Y'=S$, then    
    the union of any two sets in $\MRS_\MS(S)$ is $S$. 
  \end{itemize}
\end{lem}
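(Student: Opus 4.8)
The plan is to obtain \lemref{SSD_basic_Y_ver} directly from \lemref{SSD_basic} via the standard complementation correspondence between minimal removable sets and maximal proper subset solutions. Recall that $X\mapsto S\setminus X$ is a bijection from $\MPSS_\MS(S)$ onto $\MRS_\MS(S)$; in particular $|\MPSS_\MS(S)|=|\MRS_\MS(S)|\ge2$, so \lemref{SSD_basic} is applicable. Given $Y,Y'\in\MRS_\MS(S)$ with $Y\ne Y'$, I would set $X:=S\setminus Y$ and $X':=S\setminus Y'$, so that $X,X'\in\MPSS_\MS(S)$ and $X\ne X'$.

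First I would record the two De Morgan identities valid for subsets of $S$: since $Y,Y'\subseteq S$,
\[
  Y\cup Y' = S\setminus(X\cap X'),\qquad Y\cap Y' = S\setminus(X\cup X').
\]
Hence $Y\cup Y'=S\iff X\cap X'=\emptyset$ and $Y\cup Y'\subsetneq S\iff X\cap X'\ne\emptyset$ (these two cases being exhaustive because $Y\cup Y'\subseteq S$ always), while $Y\cap Y'=\emptyset\iff X\cup X'=S$.

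For part (i): assuming $Y\cup Y'\subsetneq S$, i.e.\ $X\cap X'\ne\emptyset$, \lemref{SSD_basic}(i) gives that the union of any two members of $\MPSS_\MS(S)$ equals $S$; translating an arbitrary such pair $Z,Z'$ through the bijection, the corresponding $S\setminus Z,S\setminus Z'\in\MRS_\MS(S)$ are disjoint by the second identity above, so $S$ is MinRS-disjoint. For part (ii): assuming $Y\cup Y'=S$, i.e.\ $X\cap X'=\emptyset$, \lemref{SSD_basic}(ii) gives that $S$ is MaxPSS-disjoint, meaning any two members $Z,Z'$ of $\MPSS_\MS(S)$ satisfy $Z\cap Z'=\emptyset$; translating, the corresponding pair in $\MRS_\MS(S)$ satisfies $(S\setminus Z)\cup(S\setminus Z')=S$, so the union of any two members of $\MRS_\MS(S)$ is $S$.

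I do not expect a genuine obstacle here, since all the mathematical content sits in \lemref{SSD_basic} and the present lemma is a cosmetic reformulation. The only points needing care are (a) verifying that the dichotomy ``$Y\cup Y'=S$ or $Y\cup Y'\subsetneq S$'' is exhaustive — immediate from $Y,Y'\subseteq S$ — and (b) keeping the direction of complementation consistent so that ``disjoint'' and ``union $=S$'' are swapped correctly when moving between $\MPSS_\MS(S)$ and $\MRS_\MS(S)$; the pattern is the same one already used in the proof of \thmref{disjoint}, where MaxPSS-disjointness of $S$ corresponds to all pairwise unions of MinRSs being $S$, and dually MinRS-disjointness corresponds to all pairwise unions of MaxPSSs being $S$.
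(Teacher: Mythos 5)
Your proof is correct and follows essentially the same route as the paper's own: complement $Y,Y'$ to MaxPSSs, observe via De Morgan that $Y\cup Y'\subsetneq S$ (resp.\ $=S$) corresponds to $X\cap X'\ne\emptyset$ (resp.\ $=\emptyset$), invoke the corresponding part of \lemref{SSD_basic}, and translate the conclusion back to \MinRemSets. No gaps; the complementation bijection you rely on is exactly the observation recorded in \secref{intro}.
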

\begin{proof}
  Let $X,X'$ denote the \MaxProSubSols of $S$ such that $X=S\setminus Y$ and $X'=S\setminus Y'$.

  \noindent
  (i) We have $S\supsetneq Y\cup Y'=(S\setminus X)\cup(S\setminus X')=S\setminus(X\cap X')$,
  indicating that $X\cap X'\ne\emptyset$.
  By \lemref{SSD_basic}(i), for any $Z,Z'\in\MPSS_\MS(S)$, 
  we have $Z\cup Z'=S$.
  Let $W=S\setminus Z$ and $W'=S\setminus Z'$.
  We have $W\cap W'=(S\setminus Z)\cap(S\setminus Z')=S\setminus(Z\cup Z')=\emptyset$. 

  \noindent
  (ii) Similarly, we see that $X\cap X'=\emptyset$ holds.
  By \lemref{SSD_basic}(ii), for any $Z,Z'\in\MPSS_\MS(S)$, 
  we have $Z\cap Z'=\emptyset$. Again, let $W=S\setminus Z$ and $W'=S\setminus Z'$. 
  We see that $W\cup W'=(S\setminus Z)\cup(S\setminus Z')=S\setminus(Z\cap Z')=S$. 
\end{proof}

The following \lemref{SSD_basic_with_Y} is the counterpart of \lemref{SSD_basic}
concerning \MinRemSets.
\begin{lem}
\label{lem:SSD_basic_with_Y}
For an SSD system $(U, \MS)$,
  let $S\in\MS$ be a solution such that $|\MRS_\MS(S)|\ge2$. 
  For $Y,Y'\in\MRS_\MS(S)$ $(Y\ne Y')$,
  \begin{itemize}
  \item[\rm (i)] if $Y\cap Y'\neq \emptyset$, then
    the union of any two sets in $\MRS_\MS(S)$ is $S$; and
  \item[\rm (ii)] if $Y\cap Y'=\emptyset$, then    
    $S$ is \MinRemSet-disjoint. 
    \end{itemize}
\end{lem}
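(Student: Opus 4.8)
The plan is to reduce everything to \lemref{SSD_basic_Y_ver}, which already splits any $S$ with $|\MRS_\MS(S)|\ge 2$ into two regimes according to a \emph{union} condition: for any two distinct $Y_1,Y_2\in\MRS_\MS(S)$, either $Y_1\cup Y_2\subsetneq S$, and then $S$ is MinRS-disjoint, or $Y_1\cup Y_2=S$, and then the union of \emph{every} pair in $\MRS_\MS(S)$ equals $S$. Since the hypotheses of the present lemma instead control the \emph{intersection} $Y\cap Y'$, the first step is simply to play the given intersection information against this union dichotomy.

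For part (i), I would assume $Y\cap Y'\ne\emptyset$ and rule out $Y\cup Y'\subsetneq S$: by \lemref{SSD_basic_Y_ver}(i) that case would make $S$ MinRS-disjoint, forcing the distinct MinRSs $Y$ and $Y'$ to be disjoint, a contradiction. Hence $Y\cup Y'=S$, and \lemref{SSD_basic_Y_ver}(ii) gives directly that the union of any two sets in $\MRS_\MS(S)$ is $S$. This part is routine.

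For part (ii), assume $Y\cap Y'=\emptyset$ and split on $Y\cup Y'$. If $Y\cup Y'\subsetneq S$, then $S$ is MinRS-disjoint by \lemref{SSD_basic_Y_ver}(i) and we are done. The case $Y\cup Y'=S$ is the crux and, I expect, the main obstacle: then $\{Y,Y'\}$ is a partition of $S$ and, by \lemref{SSD_basic_Y_ver}(ii), every pairwise union in $\MRS_\MS(S)$ equals $S$. The plan is to show this forces $|\MRS_\MS(S)|=2$: a third MinRS $Y''\notin\{Y,Y'\}$ would satisfy $Y\cup Y''=S$ and $Y'\cup Y''=S$, hence $Y''\supseteq(S\setminus Y)\cup(S\setminus Y')=Y'\cup Y=S$, so $Y''=S$; but $Y$ is a removable set with $Y\subsetneq S$ (because $Y'=S\setminus Y\ne\emptyset$), contradicting the minimality of $Y''=S$. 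Therefore $\MRS_\MS(S)=\{Y,Y'\}$, and since $Y\cap Y'=\emptyset$ this two-element family is disjoint, so $S$ is MinRS-disjoint.

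If one prefers to avoid the ``no third MinRS'' step, an alternative is to translate to the MaxPSS side via $X=S\setminus Y$, $X'=S\setminus Y'$ (noting $|\MPSS_\MS(S)|=|\MRS_\MS(S)|\ge 2$) and invoke \lemref{SSD_basic} instead; the resulting case analysis is entirely symmetric, with the same mildly delicate point, namely the exclusion of a third maximal solution.
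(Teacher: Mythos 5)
Your proof is correct, but it is routed differently from the paper's. For part (i) the paper argues via \thmref{disjoint}: since $Y\cap Y'\ne\emptyset$ rules out MinRS-disjointness, $S$ must be MaxPSS-disjoint, and taking complements immediately gives $W\cup W'=S$ for all pairs; your detour through \lemref{SSD_basic_Y_ver} reaches the same conclusion and is equally valid. The real divergence is in part (ii): the paper again splits on MaxPSS-disjointness (the non-disjoint case being settled by \thmref{disjoint}), and in the disjoint case shows $|\MPSS_\MS(S)|=2$ by observing that a third MaxPSS $Z$ would have to meet $X$ or $X'$ because $X\cup X'=S$, then closes with \lemref{sametime}. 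You instead stay entirely on the MinRS side, split on whether $Y\cup Y'=S$, and in the hard case kill a hypothetical third MinRS $Y''$ directly: $Y\cup Y''=Y'\cup Y''=S$ forces $Y''\supseteq Y'\cup Y=S$, contradicting minimality since $Y\subsetneq Y''$ is itself an RS. Your elimination argument is a touch more self-contained (it needs neither \thmref{disjoint} nor \lemref{sametime} in that branch), while the paper's version reuses its already-established machinery and makes the structural fact $|\MPSS_\MS(S)|=|\MRS_\MS(S)|=2$ explicit via \lemref{sametime}; both are sound, and your closing remark that the MaxPSS-side translation is symmetric is exactly the paper's route.
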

\begin{proof}
  (i) By \thmref{disjoint}, $S$ is \MaxProSubSol-disjoint.
  We have, for any $W, W' \in \MRS_\MS(S)$, $W \cup W' = S$
  since $(S\setminus W) \cap (S\setminus W') = S \setminus (W \cup W') = \emptyset$.

  \noindent
  (ii) If $S$ is not \MaxProSubSol-disjoint, then we are done by \thmref{disjoint}.
  Suppose that $S$ is \MaxProSubSol-disjoint and
  let $X,X'$ denote the \MaxProSubSols of $S$ such that $X=S\setminus Y$ and $X'=S\setminus Y'$.
  We have $X \cup X' = S$
  since $Y \cap Y' = (S \setminus X) \cap (S \setminus X') = S \setminus (X \cup X') = \emptyset$.
  If $|\MPSS_\MS(S)| \geq 3$, then
  at least one of $X \cap Z \neq \emptyset$ and $X' \cap Z \neq \emptyset$ would hold
  for any $Z \in \MPSS_\MS(S) \setminus \{X, X'\}$,
  contradicting the \MaxProSubSol-disjointness of $S$.
  By \lemref{sametime}, $S$ is \MinRemSet-disjoint.
\end{proof}

\medskip
An interesting implication is that
whether a solution $S$ of an SSD system is
\MaxProSubSol-disjoint or not
can be decided by checking whether $X\cap X'=\emptyset$ or not
for only two \MaxProSubSols $X,X'$ of $S$ (\lemref{SSD_basic}(i)); and
whether a solution $S$ is
\MinRemSet-disjoint or not
can be decided by checking whether $Y\cap Y'=\emptyset$ or not
for only two \MinRemSets $Y,Y'$ of $S$ (\lemref{SSD_basic_with_Y}(ii)).

The following \lemref{SSD_MRS} is used in the proof for
\lemref{str_MRS_not_disjoint} in \secref{strong}.

\begin{lem}
  \label{lem:SSD_MRS}
  Let $(U,\MS)$ be an SSD system
  and $S\in\MS$ be a \MaxProSubSol-disjoint solution. 
  For every $Y\in\MRS_\MS(S)$,
  it holds that $\MPSS_\MS(S)=\textsc{MaxSS}_\MS(Y)\sqcup\{X\}$, where $X:=S\setminus Y$. 
\end{lem}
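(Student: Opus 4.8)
The plan is to pin down which members of $\MPSS_\MS(S)$ lie inside $Y$ and show they are exactly the maximal solutions contained in $Y$. Since $Y\in\MRS_\MS(S)$, we have $X=S\setminus Y\in\MPSS_\MS(S)$, and $Y\subsetneq S$ because $S\setminus Y$ is a (nonempty) solution. Write $\MPSS_\MS(S)=\{X_1,X_2,\dots,X_q\}$ with $X_1=X$. As $S$ is MaxPSS-disjoint, these sets are pairwise disjoint, so for every $i\ge2$ we get $X_i\subseteq S\setminus X_1=Y$, while $X_1\cap Y=\emptyset$. It then suffices to prove $\textsc{MaxSS}_\MS(Y)=\{X_2,\dots,X_q\}$, after which the claimed identity follows at once since $X_1\notin\{X_2,\dots,X_q\}$.

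For the inclusion $\{X_2,\dots,X_q\}\subseteq\textsc{MaxSS}_\MS(Y)$, fix $i\ge2$. Then $X_i$ is a solution with $X_i\subseteq Y$, and if some solution $S'$ satisfied $X_i\subsetneq S'\subseteq Y$, then $S'$ would be a PSS of $S$ (since $S'\subseteq Y\subsetneq S$) strictly containing $X_i$, contradicting $X_i\in\MPSS_\MS(S)$. Hence each $X_i$ with $i\ge2$ is inclusion-wise maximal among solutions contained in $Y$, i.e.\ lies in $\textsc{MaxSS}_\MS(Y)$.

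For the reverse inclusion, take any $S'\in\textsc{MaxSS}_\MS(Y)$. Then $S'$ is a nonempty solution with $S'\subseteq Y\subsetneq S$, so $S'$ is a PSS of $S$ and is therefore contained in some $X_j\in\MPSS_\MS(S)$ (using finiteness of $U$, every PSS sits below a maximal one). We cannot have $j=1$, since $S'\cap Y=S'\ne\emptyset$ whereas $X_1\cap Y=\emptyset$; hence $j\ge2$, and then $X_j\subseteq S\setminus X_1=Y$ by disjointness. Now $S'\subseteq X_j\subseteq Y$, so the maximality of $S'$ inside $Y$ forces $S'=X_j$. This shows $\textsc{MaxSS}_\MS(Y)=\{X_2,\dots,X_q\}$, and therefore $\MPSS_\MS(S)=\textsc{MaxSS}_\MS(Y)\sqcup\{X\}$.

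The proof is essentially bookkeeping once MaxPSS-disjointness is in hand; the SSD property itself is not invoked beyond supplying that disjointness, which here is assumed outright. The only step demanding a little care is ruling out $j=1$ in the reverse inclusion: this relies on members of $\textsc{MaxSS}_\MS(Y)$ being nonempty, which is exactly what puts $S'\cap Y\ne\emptyset$ in genuine conflict with $X_1\cap Y=\emptyset$. The degenerate case $q=1$ is handled uniformly — the same reasoning then shows $Y$ contains no solution, so $\textsc{MaxSS}_\MS(Y)=\emptyset$ and the identity reads $\{X\}=\emptyset\sqcup\{X\}$.
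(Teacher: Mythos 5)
Your proof is correct and follows essentially the same route as the paper's: MaxPSS-disjointness forces every MaxPSS other than $X$ to lie inside $Y=S\setminus X$, and then maximality is transferred in both directions between ``maximal solution contained in $Y$'' and ``maximal PSS of $S$''. Your write-up is just a more explicit, element-by-element version of the paper's two-inclusion argument.
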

\begin{proof}
  To show $\MPSS_\MS(S)\subseteq \textsc{MaxSS}_\MS(Y)\sqcup\{X\}$,
  let $Z\in\MPSS_\MS(S)$. Either $Z=X$ or $Z\subseteq (S\setminus X)=Y$ holds
  since $S$ is \MaxProSubSol-disjoint.
  In the latter case, $Z$ is an inclusion-wise
  maximal solution among subsets of $Y$ and hence $Z\in\textsc{MaxSS}_\MS(Y)$. 
  For the converse, $X\in\MPSS_\MS(S)$ holds by the definition.
  Any $Z'\in\textsc{MaxSS}_\MS(Y)$
  is an inclusion-wise maximal solution among subsets of $Y$. 
  If there is a \MaxProSubSol $Z''$ of $S$ such that $Z''\supsetneq Z'$,
  then it would contradict the maximality of $Z'$ or
  \MaxProSubSol-disjointness of $S$. Then $Z'\in\MPSS_\MS(S)$ holds. 
\end{proof}

Before concluding this subsection,
let us show an interesting proposition
although it is not used the paper.

\begin{prop}
  \label{prop:SSD_singleton_disjoint}
  For an SSD system $(U,\MS)$, let $S\in\MS$ be a solution.
  If there is a singleton \RemSet of $S$, then
  $S$ is \MinRemSet-disjoint. 
\end{prop}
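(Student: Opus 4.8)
The plan is to prove the contrapositive-flavored statement directly: assuming a singleton \RemSet $\{a\}$ of $S$ exists, I will show that any other \MinRemSet $Y$ of $S$ must be disjoint from $\{a\}$, which combined with \lemref{SSD_basic_with_Y}(ii) (or \thmref{disjoint}) yields MinRS-disjointness of $S$. First I would dispose of the trivial cases: if $|\MRS_\MS(S)|\le 1$ then $S$ is vacuously MinRS-disjoint, so assume $|\MRS_\MS(S)|\ge 2$. Note $\{a\}$ is itself a \MinRemSet (a singleton \RemSet is automatically inclusion-wise minimal), so $\{a\}\in\MRS_\MS(S)$ and $X:=S\setminus\{a\}\in\MPSS_\MS(S)$.

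Now take any other $Y\in\MRS_\MS(S)$ with $Y\ne\{a\}$, and let $X':=S\setminus Y$ be the corresponding MaxPSS. The key observation is to apply the SSD property to the pair $X'\subsetneq S$ together with the solution — wait, more carefully: I want to use SSD on a pair of nested solutions. Here $X=S\setminus\{a\}$ is a solution and $X'=S\setminus Y$ is a solution; if $X'\subsetneq X$, then SSD applied to $S\supsetneq X'$ with \MinRemSet $\{a\}$ tells us $\{a\}\supseteq X'$, $\{a\}\subseteq X'$, or $\{a\}\cap X'=\emptyset$. The first is impossible since $X'$ has size $\ge 1$ and would force $X'=\{a\}=S\setminus Y$, i.e. $Y=X$, meaning $Y$ has size $n-1$; but then $X'=\{a\}$ and $a\in X'=S\setminus Y$, contradiction with $a\in\{a\}$... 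I need to be cleaner. The cleanest route: apply the SSD condition to the two nested solutions $S\supsetneq X'$ and the \MinRemSet $\{a\}$ of $S$. Since $|X'|\ge 1$, $\{a\}\supseteq X'$ forces $X'=\{a\}$, hence $Y=S\setminus\{a\}=X$; but then $a\notin X'$ contradicts $a\in\{a\}=X'$ — actually $X'=\{a\}$ means $a\in X'$, and $X'=S\setminus Y$ means $a\notin Y$, so this case gives $Y=S\setminus\{a\}$, which is a legitimate possibility only if $S\setminus\{a\}$ is itself a \MinRemSet; I would rule this out or handle it separately. In the case $\{a\}\subseteq X'$ we get $a\in X'=S\setminus Y$, so $a\notin Y$, i.e. $Y\cap\{a\}=\emptyset$, exactly what I want. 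In the case $\{a\}\cap X'=\emptyset$ we get $a\notin X'$, so $a\in Y$; I must show this cannot happen for a \MinRemSet $Y\ne\{a\}$ — but if $a\in Y$ and $Y$ is a \MinRemSet then minimality forces $Y=\{a\}$ since $\{a\}\subseteq Y$ and $\{a\}$ is already an \RemSet. That closes it.

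So the heart of the argument is just: for any \MinRemSet $Y$ with $a\in Y$, minimality of $Y$ plus the fact that $\{a\}$ is itself an \RemSet forces $Y=\{a\}$; hence every $Y\in\MRS_\MS(S)$ other than $\{a\}$ avoids $a$, so $\{a\}\cap Y=\emptyset$ for all such $Y$. Combined with \lemref{SSD_basic_with_Y}(ii) applied to the pair $(\{a\},Y)$ — their intersection is empty — we conclude $S$ is MinRS-disjoint. Actually even more directly: I do not even need the SSD property for this! If $\{a\}$ is an \RemSet of $S$, then any \MinRemSet containing $a$ equals $\{a\}$ by minimality, so all other \MinRemSets avoid $a$; but to conclude that all \MinRemSets are pairwise disjoint I do need SSD, via \lemref{SSD_basic_with_Y}(ii): since $\{a\}$ and any other $Y$ are disjoint, (ii) gives MinRS-disjointness of $S$.

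The main obstacle is making sure the invocation of \lemref{SSD_basic_with_Y}(ii) is legitimate — it requires $|\MRS_\MS(S)|\ge 2$ and two distinct \MinRemSets $Y,Y'$ with $Y\cap Y'=\emptyset$; I get these by taking $Y':=\{a\}$ and $Y$ any other \MinRemSet, having just shown $a\notin Y$. If no other \MinRemSet exists, $|\MRS_\MS(S)|=1$ and $S$ is trivially MinRS-disjoint. So the proof is genuinely short; the only care needed is the bookkeeping of the degenerate cases ($\{a\}$ being the unique \MinRemSet, and the possibility that $S\setminus\{a\}$ is also a \MinRemSet, which is fine since it avoids $a$).
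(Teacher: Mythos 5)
Your proposal is correct and, once the exploratory detours are stripped away, it is essentially the paper's own proof: observe that the singleton $\{a\}$ is itself a \MinRemSet, that any \MinRemSet containing $a$ must equal $\{a\}$ by minimality (so every other \MinRemSet is disjoint from $\{a\}$), and then invoke \lemref{SSD_basic_with_Y}(ii), with the case $|\MRS_\MS(S)|\le 1$ handled trivially. The intermediate attempt to apply the SSD definition directly to nested solutions is unnecessary, as you yourself noted.
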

\begin{proof}
  Let $u \in S$ be an element such that $\{u\}$ is an \RemSet of $S$.
  The singleton $\{u\}$ is also a \MinRemSet since no proper subset of $\{u\}$ is an \RemSet. 
  Let $\MRS(S)=\{Y_1,Y_2,\dots,Y_q\}$ such that $Y_1=\{u\}$.
  If $q=1$, then we are done. If $q\ge 2$,
  then $Y_1\cap Y_2=\emptyset$ holds since otherwise
  $u\in Y_2$ would hold, contradicting the minimality of $Y_2$.
  By \lemref{SSD_basic_with_Y}(ii), $S$ is \MinRemSet-disjoint. 
\end{proof}

\subsection{$k$-Edge-Connected Systems}
\label{sec:disjoint_Sgk}
Let $k\in\bbZ_+$ and $G$ be a graph,
which is either undirected or directed
and may contain multiedges. 
To prove \thmref{partition},
we show that $\MS_{k}$ is confluent (\lemref{edge_conn})
and an SSD system (\lemref{k_edge_SSD}).
Although it is already shown in \cite{Haraguchi.2022}
that $\MS_k$ is confluent,
we show the proof for the self-completeness.



\begin{lem}
  \label{lem:edge_conn}
  For any finite $k\in\bbZ_+$ and graph $G=(V,E)$,
  the system $(V,\MS_{k})$ is confluent.  
\end{lem}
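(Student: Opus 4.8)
The case $k=0$ is trivial: $\lambda(G[S])\ge0$ for every nonempty $S\subseteq V$, so $\MS_0$ contains every nonempty subset of $V$ and is trivially confluent. Assume $k\ge1$ from now on. The plan is to reduce confluence to the single claim that \emph{whenever $G[A]$ and $G[B]$ are $k$-edge-connected and $A\cap B\ne\emptyset$, the subgraph $G[A\cup B]$ is $k$-edge-connected}, and then to prove this claim. The reduction is immediate: given $S,S',S''\in\MS_k$ with $S''\subseteq S\cap S'$, the subgraph $G[S'']$ is $k$-edge-connected with $k\ge1$, so $S''\ne\emptyset$ and hence $S\cap S'\ne\emptyset$; the claim applied with $A=S$ and $B=S'$ then yields $S\cup S'\in\MS_k$, which is exactly confluence. (That $\MS_k$ is confluent is already noted in \cite{Haraguchi.2022}; I would reprove it here for self-containedness.)

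To prove the claim, write $W:=A\cup B$ and fix a vertex $v\in A\cap B$; the case $|W|=1$ is trivial, so assume $|W|\ge2$. I would rely on the standard min-cut description of edge-connectivity, a direct consequence of Menger's theorem: $G[W]$ is $k$-edge-connected if and only if every nonempty proper subset $T\subsetneq W$ has at least $k$ outgoing edges in $G[W]$ --- in the directed case, edges with tail in $T$ and head in $W\setminus T$; in the undirected case, edges with exactly one endpoint in $T$. Fix such a $T$ and set $T':=W\setminus T$. Since $A\subseteq W$ and $B\subseteq W$, we have $E[A]\subseteq E[W]$ and $E[B]\subseteq E[W]$; hence every edge of $G[A]$ from $A\cap T$ to $A\cap T'$ (in the undirected case, between $A\cap T$ and $A\cap T'$), and likewise every such edge of $G[B]$, is an edge of $G[W]$ leaving $T$.

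The heart of the argument is then a three-way case analysis on how $(T,T')$ meets $A$ and $B$. If $A\cap T\ne\emptyset$ and $A\cap T'\ne\emptyset$, then $A\cap T$ is a nonempty proper subset of $A$, so $k$-edge-connectivity of $G[A]$ provides at least $k$ edges of $G[A]$ from $A\cap T$ to $A\cap T'$, and by the observation above these are at least $k$ edges of $G[W]$ leaving $T$; done. The case where $(T,T')$ splits $B$ is symmetric. In the only remaining case, neither $A$ nor $B$ is split, so each of $A$ and $B$ lies wholly in $T$ or wholly in $T'$; but $v\in A\cap B$ forces both to lie on whichever side contains $v$, so $W=A\cup B$ lies wholly in $T$ or wholly in $T'$, contradicting that $T$ and $T'$ are both nonempty. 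Thus this case is impossible, every cut of $G[W]$ has at least $k$ edges, and the claim is proved.

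I do not expect a genuine obstacle here; the two mild subtleties are that $G[W]$ may contain edges between $A\setminus B$ and $B\setminus A$ that belong to neither $G[A]$ nor $G[B]$ (these only enlarge cuts of $G[W]$ and so never hurt), and that the directed and undirected cases must be kept uniform (they are, since the case analysis only ever uses that one of $G[A],G[B]$ is nontrivially cut by $(T,T')$, or else that the cut is trivial, which cannot happen).
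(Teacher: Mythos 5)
Your proof is correct, but it takes the dual route to the one in the paper. The paper argues directly from its path-based definition of edge-connectivity: it fixes an arbitrary edge set $F$ with $|F|<k$, notes that every pair inside $S$ (resp.\ $S'$) stays joined in $G[S]-F$ (resp.\ $G[S']-F$), and concatenates an $s,u$-path with a $u,t$-path through a common vertex $u\in S\cap S'$ to connect $s\in S\setminus S'$ with $t\in S'\setminus S$ in both directions. You instead pass to the cut characterization --- every nonempty proper $T\subsetneq A\cup B$ has at least $k$ outgoing edges --- and run a case analysis on whether $(T, T')$ splits $A$, splits $B$, or splits neither (the last being impossible since $v\in A\cap B$ forces both sets onto one side). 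Both arguments are sound and of comparable length. The paper's version needs nothing beyond its own definition of an edge-cut as a set of edges whose removal disconnects some pair; your version additionally uses the equivalence between that definition and the ``every $\delta^+(T)$ has size $\ge k$'' formulation, which is standard and elementary (take $T$ to be the set of vertices reachable from $s$ after deleting $F$) but is an extra step you should state rather than attribute wholesale to Menger. What your route buys is a uniform treatment of the directed and undirected cases in one case analysis, and an explicit accounting of the edges between $A\setminus B$ and $B\setminus A$ that belong to neither induced subgraph, a point the paper's path-concatenation silently never needs.
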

\begin{proof}
  If $k=0$, then we have $\MS_{k}=2^V\setminus\{\emptyset\}$,
  and it is easy to see that $\MS_{k}$ is confluent. 
  Suppose $k\ge1$. 
  Let $S,S'\in\MS_{k}$ be any solutions such that $S\cap S'\ne\emptyset$.
  Each vertex $v$ in $S\cap S'$ is a solution
  since the edge connectivity of $G[\{v\}]$ is $+\infty>k$.   
  To prove the confluency of $\MS_{k}$,
  we show that $S\cup S'$ is a solution.
  It is clearly true when $S\subseteq S'$ or $S\supseteq S'$.
  Suppose that $S$ and $S'$ intersect.
  Let $F\subseteq E$ be a nonempty subset such that $|F|<k$.
  For every $s,t\in S$ (resp., $s,t\in S'$),
  there is an $s,t$-path in $G[S\cup S']-F$
  by the $k$-edge-connectivity of $G[S]$ (resp., $G[S']$).
  Let $s\in S\setminus S'$ and $t\in S'\setminus S$.
  Then there is a path from $s$ to $t$ in $G[S\cup S']-F$
  since, for $u\in S\cap S'$, there is a path from $s$ to $u$ in $G[S]-F$
  and there is a path from $u$ to $t$ in $G[S']-F$.
  Similarly, there is a path from $t$ to $s$ in $G[S\cup S']-F$.
  This shows that $\MS_{k}$ is confluent. 
\end{proof}



\begin{lem}
  \label{lem:k_edge_SSD}
  For any finite $k\in\bbZ_+$ and graph $G=(V,E)$,
  the system $(V,\MS_{k})$ is an SSD system.  
\end{lem}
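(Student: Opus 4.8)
The plan is to verify the SSD condition directly from the definition: given solutions $S,S'\in\MS_k$ with $S'\subsetneq S$ and a \MinRemSet $Y$ of $S$, I must show that $Y\supseteq S'$, $Y\subseteq S'$, or $Y\cap S'=\emptyset$. Equivalently, working with $X:=S\setminus Y\in\MPSS_k(S)$, I must show that either $X\cap S'=\emptyset$, or $X\cup S'=S$ together with $X\supseteq S'$ (this is the MaxPSS rephrasing of SSD given in \secref{prel_set}). The handle I intend to use is confluency, which \lemref{edge_conn} already gives for $\MS_k$.

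First I would dispose of the easy cases. If $X\cap S'=\emptyset$ we are done. If $X\supseteq S'$, then $X\cup S'=X$; but $X$ is a proper subset of $S$ while we need $X\cup S'=S$, so this case actually needs a little care — I expect to argue that when $X\supseteq S'$ the required disjunct is $Y\subseteq S'$ is false and $Y\cap S'=\emptyset$ is false, so I should instead be showing that $X\supsetneq S'$ cannot happen unless $X\cup S'=S$; more cleanly, I will phrase everything in terms of $Y$: if $Y\cap S'=\emptyset$ done; if $\emptyset\ne Y\cap S'\subsetneq S'$, i.e. $Y$ and $S'$ "overlap" without containment, I must derive a contradiction with the minimality of $Y$. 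So the real content is: $Y$ a \MinRemSet of $S$ and $S'$ a solution with $S'\subsetneq S$ cannot have $Y\cap S'\notin\{\emptyset,S'\}$ and $Y\not\supseteq S'$ simultaneously — wait, $Y\supseteq S'$ is explicitly allowed, so the forbidden configuration is exactly: $Y\cap S'\ne\emptyset$, $S'\setminus Y\ne\emptyset$, and $Y\setminus S'\ne\emptyset$ (i.e., $Y$ and $S'$ intersect in the technical sense) — no, $Y\setminus S'\ne\emptyset$ is fine if $Y\supseteq$... I will simply show: if $Y\cap S'\ne\emptyset$ and $S'\not\subseteq Y$, then $Y\subseteq S'$; that is the one implication that closes all cases.

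So the core step: assume $Y\cap S'\ne\emptyset$ and pick $v\in S'\setminus Y$ (the case $S'\subseteq Y$ being already a permitted outcome). I want to show $Y\subseteq S'$. Consider $Y':=Y\setminus S'$; I claim $S\setminus Y' = X\cup(Y\cap S') \supseteq X\cup\{$something$\}$ is still $k$-edge-connected, which would make $Y'$ a removable set strictly inside $Y$, contradicting minimality of $Y$ unless $Y'=\emptyset$, i.e. $Y\subseteq S'$. To see $G[S\setminus Y']$ is $k$-edge-connected: note $S\setminus Y' = (S\setminus Y)\cup(Y\cap S') = X\cup(S'\cap S) \supseteq$ and since $S\supseteq S'$ we have $S'\cap S=S'$, so $S\setminus Y' = X\cup S'$. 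Now $X\in\MS_k$ and $S'\in\MS_k$ and $X\cap S'\supseteq Y\cap S'\ne\emptyset$ — wait, need $X\cap S'\ne\emptyset$: we have $X\cap S' = (S\setminus Y)\cap S' = S'\setminus Y \ni v$, nonempty. Hence by confluency (\lemref{edge_conn}), $X\cup S'\in\MS_k$, i.e. $G[S\setminus Y']$ is $k$-edge-connected, so $Y'$ is a removable set of $S$ with $Y'\subseteq Y$; minimality of $Y$ forces $Y'=Y$ or $Y'=\emptyset$. If $Y'=Y$ then $Y\cap S'=\emptyset$, contradicting our assumption; hence $Y'=\emptyset$, i.e. $Y\subseteq S'$. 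This completes the verification.

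The main obstacle is purely organizational: getting the case split clean so that the three permitted outcomes ($Y\supseteq S'$, $Y\subseteq S'$, $Y\cap S'=\emptyset$) are exactly matched by the argument, and making sure the degenerate cases ($k=0$, $S=S'$ excluded by hypothesis, $S'$ a singleton, $Y$ a singleton) don't need separate treatment — they don't, since the confluency lemma already handles $k=0$ and the inclusion $S'\cap S=S'$ is automatic. I would also remark, as the authors presumably do, that in fact $\MS_k$ being confluent plus the trivial observation "$X,S'\in\MS_k$ with $X\cap S'\ne\emptyset$ implies $X\cup S'\in\MS_k$" is the only structural fact used, so the lemma could be stated more generally as "every confluent system closed under the relevant unions is SSD" — but that over-generalizes (the counterexample in \secref{prel_set} shows confluent $\not\Rightarrow$ SSD), so I would not claim it; the $k$-edge-connected structure is genuinely used via the induced-subgraph form of the solutions. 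I expect the write-up to be about one paragraph.
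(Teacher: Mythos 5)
Your proof is correct and is essentially the paper's own argument: the paper also reduces everything to the single claim that $X\cap S'\ne\emptyset$ and $X\not\supseteq S'$ force $X\cup S'=S$, obtained by applying confluency (\lemref{edge_conn}, together with the fact that singletons are solutions) to get $X\cup S'\in\MS_k$ and then invoking maximality of $X$ — which is exactly the dual of your appeal to minimality of $Y$ via the removable set $Y\setminus S'$. The only cosmetic difference is that the paper states the case split in the MaxPSS language and handles $k=0$ as a separate one-line case.
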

\begin{proof}
  If $k=0$, then it is readily to see that $\MS_{k}=2^V\setminus\{\emptyset\}$
  is an SD (and hence SSD) system.
  Suppose $k\ge1$.
  For a solution $S\in\MS_{k}$, let $S'\in\MS_{k}$ be any PSS of $S$ $(S'\subsetneq S)$
  and $X \in \MPSS_{k}(S)$. 
  If $X\cap S'\ne\emptyset$ and $X\not\supseteq S'$,
  then $X\cup S'$ is a solution since $\MS_{k}$ is confluent by \lemref{edge_conn}
  and any singleton is a solution,
  where we have $X\cup S'\supsetneq X$.
  By the maximality of $X$, $X\cup S'$ must be equal to $S$. 
\end{proof}

\paragraph{Proof for \thmref{partition}.}
Note that the system $\MS_{k}$ is confluent (\lemref{edge_conn})
and an SSD system (\lemref{k_edge_SSD}). 
When $k=0$, we have $\MS_{k}=2^V\setminus\{\emptyset\}$.
Let $V=\{v_1,v_2,\dots,v_n\}$ and $X_i:=V\setminus\{v_i\}$, $i\in[1,n]$. 
Then $\MPSS_{k}(V)=\{X_1,X_2,\dots,X_n\}$ 
and $\MRS_{k}(V)=\{V\setminus X_1,V\setminus X_2,\dots,V\setminus X_n\}=\{v_1,v_2,\dots,v_n\}$ hold.
Then $V$ is \MinRemSet-disjoint.  

Suppose $k\ge1$. Each vertex $v\in V$ is a solution and
hence there is $X\in\MPSS_{k}(V)$ such that $v\in X$. 
The union of the sets in $\MPSS_{k}(V)$ is $V$.

If $V$ is not a solution,
then $V$ is \MaxProSubSol-disjoint since 
$\MS_{k}$ is confluent and each vertex itself is a solution. 
We see that $\MPSS_{k}(V)$ is a partition of $V$. 
We see that $\MPSS_{k}(V)$ is a partition of $V$.
Suppose that $V$ is a solution.
If $V$ is MaxPSS-disjoint,
then $\MPSS_{k}(V)$ is a partition of $V$.
Otherwise, $V$ is MinRS-disjoint by \thmref{disjoint}. 
\QED

\medskip
The following \lemref{num_MRSs} is about the number of \MinRemSets in a solution
in a $k$-edge-connected system
and will be used in later sections.
\lemref{MRS_exist_not_v} is used for proving \lemref{num_MRSs}.


\begin{lem}
  \label{lem:MRS_exist_not_v}
  For any finite $k\in\bbZ_+$ and graph $G=(V,E)$,
  let $S \in \MS_{k}$ be a solution such that $|S|\ge2$. 
  For each $v\in S$,
  there exists a \MinRemSet of $S$ that does not contain $v$.
\end{lem}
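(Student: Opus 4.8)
The plan is to show that, for any $v\in S$, there is some removable set of $S$ avoiding $v$; then among all removable sets contained in $S\setminus\{v\}$ (which form a nonempty family), any inclusion-wise minimal one is a \MinRemSet of $S$ that does not contain $v$. So the task reduces to exhibiting a single proper subset solution $S'\subsetneq S$ with $v\in S'$, because then $Y:=S\setminus S'$ is a removable set of $S$ with $v\notin Y$. In other words, I need to find a $k$-edge-connected induced subgraph $G[S']$ with $v\in S'\subsetneq S$.

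First I would treat the base case $|S'|=1$: the singleton $\{v\}$ always induces a subgraph with edge-connectivity $+\infty\ge k$, so $\{v\}\in\MS_{k}$; and since $|S|\ge2$, we have $\{v\}\subsetneq S$. Hence $\{v\}$ is a PSS of $S$ containing $v$, and $Y:=S\setminus\{v\}$ is a removable set of $S$ with $v\notin Y$. The family $\{Y'\in\MRS_{\MS_k}(S)\mid Y'\subseteq S\setminus\{v\}\}$ is therefore nonempty once we note that any removable set contained in $S\setminus\{v\}$ contains a \MinRemSet of $S$ contained in $S\setminus\{v\}$ (shrink it to an inclusion-wise minimal removable subset of $S$; this stays inside $S\setminus\{v\}$). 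That minimal removable set is the desired \MinRemSet of $S$ not containing $v$.

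Actually this already finishes the proof with no real obstacle: the only subtlety is confirming that shrinking a removable set $Y_0\subseteq S\setminus\{v\}$ to a minimal one keeps it disjoint from $\{v\}$, which is immediate since we only remove elements. So the main "work" is just recalling that singletons are solutions and that $|S|\ge2$ guarantees $\{v\}$ is a \emph{proper} subset. I would write the argument in two or three sentences: (1) $\{v\}\in\MS_k$ and $\{v\}\subsetneq S$, so $S\setminus\{v\}$ is a removable set of $S$; (2) any inclusion-wise minimal removable subset $Y\subseteq S\setminus\{v\}$ of $S$ is a \MinRemSet of $S$ with $v\notin Y$, and such a $Y$ exists because the family of removable subsets of $S\setminus\{v\}$ is nonempty and finite.

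I do not anticipate needing \lemref{edge_conn} or \lemref{k_edge_SSD} here; the statement is essentially a triviality flowing from the fact that every vertex is a solution in $\MS_k$. If the authors intended a more substantive claim (e.g. that the \MinRemSet can be taken to be "small" or structurally constrained), that would be the hard part — but as stated, the singleton trick suffices.
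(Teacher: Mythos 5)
Your argument is correct and is essentially identical to the paper's proof: since $\{v\}\in\MS_k$, the set $S\setminus\{v\}$ is a removable set of $S$, and it contains an inclusion-wise minimal removable set $Y$ of $S$ with $v\notin Y$. The paper states this in two sentences; your extra care about why shrinking stays inside $S\setminus\{v\}$ is fine but not needed beyond that.
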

\begin{proof}
  The subset $S \setminus \{v\}$ is an \RemSet of $S$
  since $\{v\} \in \MS_{k}$.
  There exists a \MinRemSet $Y$ of $S$
  such that $Y\subseteq S\setminus\{v\}$. 
\end{proof}

\begin{lem}
  \label{lem:num_MRSs}
  For any finite $k\in\bbZ_+$ and graph $G=(V,E)$,
  let $S \in \MS_{k}$ be a solution. 
  \begin{description}
  \item[\rm (i)]
  If $|S| = 1$, then $\MRS_{k}(S)$ is empty.
  \item[\rm (ii)]
  If $|S| \ge 2$, then $|\MRS_{k}(S)| \ge 2$ holds.
  \end{description}
\end{lem}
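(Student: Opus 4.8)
The statement splits into the two cases by $|S|$, and both follow almost immediately once one unwinds the definitions; \lemref{MRS_exist_not_v} does essentially all the work in part~(ii).

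For part~(i), suppose $|S|=1$, say $S=\{v\}$. The only nonempty subset of $S$ is $\{v\}$ itself, and $S\setminus\{v\}=\emptyset$ is not a solution (indeed $\emptyset\notin\MS_k$ for every finite $k\in\bbZ_+$: for $k=0$ this is the explicit convention $\MS_0=2^V\setminus\{\emptyset\}$, and for $k\ge1$ the empty graph is not $k$-edge-connected). Hence $S$ has no \RemSet at all, so in particular no \MinRemSet, i.e.\ $\MRS_k(S)=\emptyset$.

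For part~(ii), suppose $|S|\ge2$. First I would observe that $\MRS_k(S)\ne\emptyset$: fixing any $v\in S$, the set $S\setminus\{v\}$ is nonempty and is an \RemSet of $S$ because its complement $\{v\}$ lies in $\MS_k$, so some \MinRemSet $Y$ of $S$ is contained in it; in particular $\MRS_k(S)$ contains at least this $Y$. Now pick any element $w\in Y$ (possible since every \RemSet, hence $Y$, is nonempty). Applying \lemref{MRS_exist_not_v} to the vertex $w\in S$ yields a \MinRemSet $Y'$ of $S$ with $w\notin Y'$. Since $w\in Y\setminus Y'$, the sets $Y$ and $Y'$ are distinct, so $|\MRS_k(S)|\ge2$.

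\textbf{Main obstacle.} There is no real obstacle here: the only point requiring a line of care is to guarantee that the two \MinRemSets produced in part~(ii) are genuinely different, which is why I would extract $Y'$ relative to a specific element $w$ of an already-exhibited \MinRemSet $Y$ rather than relative to two arbitrary vertices of $S$ (two distinct vertices of $S$ could a priori be avoided by one and the same \MinRemSet). Everything else is bookkeeping with the definitions of \RemSet and \MinRemSet together with the fact that singletons are solutions in $\MS_k$.
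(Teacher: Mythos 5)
Your proposal is correct and follows essentially the same route as the paper: the paper also treats (i) as immediate and proves (ii) by taking a \MinRemSet $Y$ avoiding some $v\in S$ via \lemref{MRS_exist_not_v}, then a second \MinRemSet $Y'$ avoiding an element $v'\in Y$, concluding $Y\ne Y'$. Your inlining of the argument for the first set and your explicit remark about why the two sets must differ are just slightly more detailed versions of the same steps.
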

\begin{proof}
  (i) is obvious.
  For (ii), let $v\in S$ be any vertex.
  By \lemref{MRS_exist_not_v}, there exists a \MinRemSet of $S$, say $Y$,
  such that $v\notin Y$. For $v'\in Y$, 
  there exists a \MinRemSet of $S$, say $Y'$, such that $v'\notin Y'$. 
  Apparently, $Y \neq Y'$ holds.
\end{proof}

\section{Linear-Time Generation of \MaxProSubSols in Strongly-Connected Systems}
\label{sec:strong}
Assume that a strongly-connected graph $G=(V,E)$ is given such that $n = |V| \geq 2$.
In this section, we present a linear-time algorithm to generate all $\MS_{G,1}$-\MaxProSubSols of $V$.

For simplicity, we may employ abuse notations in this section such that
an $\MS_{G,1}$-\MaxProSubSol of $V$ is represented as a ``\MaxProSubSol of $G$'', 
an $\MS_{G^{\top},1}$-\MinRemSet of $V$ is represented as a ``\MinRemSet of $G^{\top}$'',
and so on. 

A naive approach to generating \MaxProSubSols of $G$ is 
that we generate strongly-connected components of
$G-v$ for all $v \in V$,
and then remove duplication and non-maximal solutions somehow.
The time complexity of this method is
$\Omega(n(n+m) + N)$, where $N$ denotes the sum of $|X|$ for all $X\in\MPSS_\MS(V)$. 
Note that $N$ can be $\Omega(n^2)$;
e.g., a complete digraph.


Our purpose is to generate \MaxProSubSols of $G$ in linear time with respect to $n$, $m$ and $N$.
Recall that \MaxProSubSol is the complement of \MinRemSet;
if $Y$ is a \MinRemSet of $G$, then $V\setminus Y$ is a \MaxProSubSol of $G$.
We derive several properties of  \MinRemSets of $G$ in \secref{prop_MRS}. 
In particular, 
for any vertex $s\in V$,
we show that every \MinRemSet $Y$ of $G$ with $s\notin Y$
exists as a path
in the dominator trees $\DT_s$ and $\DT^{\top}_s$
such that the end vertex is a leaf. 
This enables us to generate all \MinRemSets $Y$ with $s\notin Y$
by traversing $\DT_s$ and $\DT^{\top}_s$. 
By \thmref{partition}, $G$ is \MaxProSubSol-disjoint and/or \MinRemSet-disjoint.
In \secref{detmine_MRS_disjointness},
we show how to decide whether $G$ is \MinRemSet-disjoint or not
and explain algorithms to generate all \MaxProSubSols of $G$ in each case. 
This section concludes with the proof for \thmref{strong}.

\subsection{Properties of \MinRemSets}
\label{sec:prop_MRS}

Let us confirm that we can enlarge 
any solution $S\in\MS_1$ by ``attaching'' a directed path to $S$. 

\begin{lem}
  \label{lem:str_path_adding}
  Let $G=(V,E)$ be a digraph and $S \in \MS_1$ be a solution. 
  For any directed path $P = \{v_1, v_2, \dots, v_q\}$ in $G - S$,
  if $v_1 \in \outNeigh{G}{S}$ and $v_q \in \inNeigh{G}{S}$,
  then $S \cup P \in \MS_1$.
\end{lem}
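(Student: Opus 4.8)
The plan is to prove the claim directly from the definition of strong connectivity: I will show that for every ordered pair of distinct vertices $x,y\in S\cup P$ there is a directed $x,y$-path whose edges all lie in $E[S\cup P]$. The set-up is to fix, once and for all, a vertex $u\in S$ with $(u,v_1)\in E$ (which exists since $v_1\in\outNeigh{G}{S}$) and a vertex $w\in S$ with $(v_q,w)\in E$ (which exists since $v_q\in\inNeigh{G}{S}$). Because $P$ is a path in $G-S$, we have $P\cap S=\emptyset$, so the edges $(u,v_1)$ and $(v_q,w)$, the consecutive path edges $(v_i,v_{i+1})$ of $P$, and every edge of $G[S]$ all belong to $E[S\cup P]$; these are the only edges the argument will use.

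Next I would do a four-way case split on where $x$ and $y$ lie. If $x,y\in S$, strong connectivity of $G[S]$ already gives an $x,y$-path inside $G[S]\subseteq G[S\cup P]$. If $x\in S$ and $y=v_j\in P$, I concatenate an $x,u$-path in $G[S]$ with the path $u\to v_1\to v_2\to\dots\to v_j$. If $x=v_i\in P$ and $y\in S$, I concatenate $v_i\to v_{i+1}\to\dots\to v_q$ with the edge $(v_q,w)$ and then a $w,y$-path in $G[S]$. Finally, if $x=v_i,y=v_j\in P$: when $i\le j$ the subpath $v_i\to\dots\to v_j$ of $P$ works, and when $i>j$ I route $v_i\to\dots\to v_q\to w$, then a $w,u$-path in $G[S]$, then $u\to v_1\to\dots\to v_j$. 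In each case every edge used was listed above as lying in $E[S\cup P]$, so the path is a path of $G[S\cup P]$; hence $G[S\cup P]$ is strongly-connected, i.e., $S\cup P\in\MS_1$.

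I do not expect any genuine obstacle here — the lemma is an easy ``gluing'' statement. The only point requiring care is the bookkeeping that every edge invoked survives in the induced subgraph $G[S\cup P]$, in particular the entry/exit edges $(u,v_1)$ and $(v_q,w)$ and the internal $G[S]$-detour used in the last case; this is immediate from $u,w\in S$, $v_1,v_q\in P$ and $S\cap P=\emptyset$. It may also be worth noting in passing that this is the path instance of a more general principle (any subset that becomes strongly-connected once linked into $S$ can be absorbed into the solution), which is exactly the form in which the lemma gets reused when building up larger solutions and analyzing \MinRemSets.
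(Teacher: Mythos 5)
Your proof is correct; the paper itself states this lemma without proof (it is introduced merely as a fact to ``confirm''), and your direct four-case verification of strong connectivity of $G[S\cup P]$ via the entry edge $(u,v_1)$, the exit edge $(v_q,w)$, and detours through $G[S]$ is exactly the intended routine argument. The bookkeeping that all edges used survive in the induced subgraph, including the degenerate case $|S|=1$ where the $G[S]$-detours are trivial, is handled properly.
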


\lemref{str_any_MRS_path} states that, for any \MinRemSet $Y$ of a solution $S$, 
there is a Hamiltonian path in $G[Y]$ for which no ``shortcut'' exists. 
It is also shown that the starting vertex of the Hamiltonian path 
is the only vertex in $Y$ that has incoming edges from $X:=S\setminus Y$;
and that the end vertex is the only vertex in $Y$
that has outgoing edges to $X$.

\begin{lem}
  \label{lem:str_any_MRS_path}
  For a strongly-connected graph $G=(V,E)$,
  let $Y\in\MRS_1(V)$ and $X:=V\setminus Y\in\MPSS_1(V)$. 
  \begin{description}
  \item[\rm (i)]
    In $G[Y]$,
    a path exists
    from a vertex in $\outNeigh{G}{X}$
    to a vertex in $\inNeigh{G}{X}$
    and such a path is Hamiltonian. 
  \item[\rm (ii)] 
    Let us denote the Hamiltonian path in $G[Y]$
    by $v_1\to v_2\to\dots\to v_q$.
    For $i, j \in [1, q]$, if $j - i \geq 2$, then $(v_i, v_j) \notin E$.
  \item[\rm (iii)] It holds that $|\outNeigh{G}{X}| = |\inNeigh{G}{X}| = 1$.
  \end{description}
\end{lem}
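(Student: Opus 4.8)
The plan is to obtain all three parts from a single mechanism: produce a directed path $P$ that lies inside $G[Y]$, starts at a vertex of $\outNeigh{G}{X}$, ends at a vertex of $\inNeigh{G}{X}$, and misses at least one vertex of $Y$; then \lemref{str_path_adding} gives $X\cup P\in\MS_1$, which contradicts the minimality of $Y$. Throughout, write $X:=V\setminus Y$, which is a nonempty solution (indeed $Y\subsetneq V$, since $V\setminus\{v\}$ is a removable set of $V$ for every $v\in V$), so by strong connectivity of $G$ both $\outNeigh{G}{X}$ and $\inNeigh{G}{X}$ are nonempty. The useful reformulation of the minimality of $Y\in\MRS_1(V)$ is: if $P$ is a nonempty subset of $Y$ with $X\cup P\in\MS_1$, then $P=Y$, since otherwise $Y\setminus P$ would be a nonempty proper subset of $Y$ whose complement $X\cup P$ is a solution.

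For (i), the one nontrivial point is that $G[Y]$ contains \emph{some} directed path (possibly a single vertex) from a vertex of $\outNeigh{G}{X}$ to a vertex of $\inNeigh{G}{X}$. Fix $a\in\outNeigh{G}{X}$ and let $R\subseteq Y$ be the set of vertices reachable from $a$ within $G[Y]$. If $R\cap\inNeigh{G}{X}\neq\emptyset$, pick such a vertex $b$ and take a simple $a,b$-path in $G[Y]$. Otherwise no vertex of $R$ has an out-edge to $X$, and by the definition of $R$ no vertex of $R$ has an out-edge to $Y\setminus R$ either; hence $R$ is a nonempty proper subset of $V$ with no outgoing edge, which is impossible in a strongly-connected graph. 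Having obtained such a path $P$, \lemref{str_path_adding} yields $X\cup P\in\MS_1$; since $P\subseteq Y$, minimality forces $P=Y$, so the (simple) path $P$ visits every vertex of $Y$ and is therefore a Hamiltonian path of $G[Y]$ whose endpoints lie in $\outNeigh{G}{X}$ and $\inNeigh{G}{X}$.

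For (ii), fix a Hamiltonian path $v_1\to\cdots\to v_q$ of $G[Y]$ as provided by (i), so $v_1\in\outNeigh{G}{X}$ and $v_q\in\inNeigh{G}{X}$. If $(v_i,v_j)\in E$ with $j-i\ge 2$, then $v_1\to\cdots\to v_i\to v_j\to\cdots\to v_q$ is a directed path in $G[Y]$ with the same endpoints that omits the nonempty set $\{v_{i+1},\dots,v_{j-1}\}$; \lemref{str_path_adding} and minimality give a contradiction exactly as in (i). For (iii), (i) already yields $v_1\in\outNeigh{G}{X}$ and $v_q\in\inNeigh{G}{X}$, so it remains to exclude any further entry or exit vertex. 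If $v_i\in\outNeigh{G}{X}$ with $i\ge 2$, then the suffix $v_i\to\cdots\to v_q$ is a directed path in $G[Y]$ from $\outNeigh{G}{X}$ to $\inNeigh{G}{X}$ that omits $\{v_1,\dots,v_{i-1}\}\neq\emptyset$, a contradiction; symmetrically, if $v_j\in\inNeigh{G}{X}$ with $j\le q-1$, the prefix $v_1\to\cdots\to v_j$ gives a contradiction. Hence $\outNeigh{G}{X}=\{v_1\}$ and $\inNeigh{G}{X}=\{v_q\}$, so both have size $1$.

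The only step that is not a routine appeal to \lemref{str_path_adding} together with the minimality of $Y$ is the existence of a path inside $G[Y]$ from $\outNeigh{G}{X}$ to $\inNeigh{G}{X}$ in part (i); this is where a genuine reachability argument using strong connectivity enters, and I expect it to be the main obstacle. Everything else is bookkeeping about which vertices a given path misses.
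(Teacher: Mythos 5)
Your proof is correct and follows essentially the same route as the paper: use \lemref{str_path_adding} together with the minimality of $Y$ to force any path in $G[Y]$ from $\outNeigh{G}{X}$ to $\inNeigh{G}{X}$ to be Hamiltonian, then obtain (ii) and (iii) by exhibiting shorter such paths. The only differences are cosmetic — you spell out the reachability argument for the initial path's existence more explicitly, and in (ii)/(iii) you re-invoke the mechanism directly where the paper just writes ``contradicting (i).''
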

\begin{proof}
  (i)
  Observe that neither $\outNeigh{G}{X}$ nor $\inNeigh{G}{X}$ is empty
  since $G$ is strongly-connected. 
  Let $P := \{u_1, u_2, \dots, u_{\ell}\} \subseteq Y$ be a simple path
  such that $u_1\in\outNeigh{G}{X}$ and $u_\ell\in \inNeigh{G}{X}$.
  Such $P$ exists since there is a directed path in $G$
  from $u_1$ to any vertex in $X$ by the strong-connectivity of $G$. 
  If $P\subsetneq Y$, then by \lemref{str_path_adding},
  $G[(V \setminus Y) \cup P] = G[V \setminus (Y \setminus P)]$ is strongly-connected.
  The nonempty subset $Y \setminus P$ would be an \RemSet of $V$,
  which contradicts the minimality of $Y$.
  We see that $P=Y$ holds and it is a Hamiltonian path in $G[Y]$.

  \noindent
  (ii) Suppose $(v_i,v_j)\in E$. 
  Then $Y' := \{v_1, \dots, v_i, v_j, \dots, v_q\}\subsetneq Y$
  would be a simple non-Hamiltonian path in $G[Y]$ such that $v_1\in\outNeigh{G}{X}$ and $v_q\in \inNeigh{G}{X}$,
  contradicting (i).
  
  \noindent
  (iii) If $|\outNeigh{G}{X}| > 1$, 
  there is $i \in [2, q]$ such that $v_i \in \outNeigh{G}{X}$.
  Then $\{v_i,v_{i+1},\dots,v_q\}\subsetneq Y$
  would be a simple non-Hamiltonian path in $G[Y]$ such that $v_i\in\outNeigh{G}{X}$ and $v_q\in \inNeigh{G}{X}$,
  contradicting (i).
  We can show $|\inNeigh{G}{X}| =1$ in an analogous way.
\end{proof}

\paragraph{\MinRemSets and dominator trees.}
For a strongly-connected digraph $G=(V,E)$ and $s\in V$,
we consider where \MinRemSets of $G$ exist in the dominator tree $\DT_s$. 
Let us confirm fundamental properties on transpose and dominator tree. 

\begin{lem}
  \label{lem:str_RS_in_DT}
  For a strongly-connected graph $G=(V,E)$,
  let $Y\subseteq V$, $X:=V\setminus Y$ and $s,t\in X$ $(s\ne t)$.
  \begin{description}
  \item[\rm (i)] $Y$ is an \RemSet of $G$ if and only if $Y$ is an \RemSet of $G^{\top}$. 
  \item[\rm (ii)] $Y$ is a \MinRemSet of $G$ if and only if $Y$ is a \MinRemSet of $G^{\top}$. 
  \item[\rm (iii)] If $Y$ is an \RemSet of $G$, then
    $t$ is reachable from $s$ in both $\DT_s-Y$ and $\DT^{\top}_s-Y$. 
  \end{description}
\end{lem}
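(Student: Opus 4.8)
The plan is to prove the three parts in order, deriving (ii) and (iii) from (i).

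For (i), I would unfold the definitions: $Y$ is an RS of $G$ exactly when $X=V\setminus Y$ induces a strongly-connected subgraph, i.e.\ $G[X]\in\MS_{G,1}$, and $Y$ is an RS of $G^{\top}$ exactly when $G^{\top}[X]$ is strongly-connected. Since $G^{\top}[X]=(G[X])^{\top}$ and reversing all arcs of a digraph turns each $u,w$-path into a $w,u$-path, $G[X]$ is strongly-connected if and only if $G^{\top}[X]$ is, which is (i). For (ii), observe that by (i) the collection of RSs of $G$ and the collection of RSs of $G^{\top}$ are literally the same family of subsets of $V$; hence their inclusion-wise minimal members coincide, giving (ii).

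For (iii), the key is the defining property of the dominator tree: the ancestors of $t$ in $\DT_s$ (including $t$ itself and the root $s$) are exactly the dominators $\dom(t)$ of $t$ in the flow graph $G_s$. Since $\DT_s$ is a tree rooted at $s$, the vertex $t$ is reachable from $s$ in $\DT_s-Y$ if and only if the unique $s,t$-path of $\DT_s$ survives the deletion of $Y$, i.e.\ if and only if $\dom(t)\cap Y=\emptyset$. Now if $Y$ is an RS of $G$, then $G[X]$ is strongly-connected and $s,t\in X$, so there is an $s,t$-path in $G[X]$, which is an $s,t$-path of $G$ avoiding every vertex of $Y$; as each dominator of $t$ lies on every $s,t$-path, no vertex of $Y$ dominates $t$, so $\dom(t)\cap Y=\emptyset$ and $t$ is reachable from $s$ in $\DT_s-Y$. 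The claim for $\DT^{\top}_s$ follows by the same argument applied to $G^{\top}$: by (i), $Y$ is an RS of $G^{\top}$, hence $G^{\top}[X]$ is strongly-connected and contains an $s,t$-path avoiding $Y$, so no vertex of $Y$ dominates $t$ in $G^{\top}_s$ and $t$ is reachable from $s$ in $\DT^{\top}_s-Y$.

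None of these steps is technically deep. The only point that needs a little care is the translation, in (iii), between reachability in the pruned tree $\DT_s-Y$ and the set-theoretic condition $\dom(t)\cap Y=\emptyset$; this rests entirely on the characterization of dominator-tree ancestors as dominators, already recalled in the preliminaries. Everything else is immediate from the definitions of removable set, strong connectivity, induced subgraph, and transpose.
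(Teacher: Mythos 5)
Your proposal is correct and follows essentially the same route as the paper: (i) and (ii) are immediate from the fact that transposition preserves strong connectivity of induced subgraphs, and (iii) uses exactly the paper's argument that an $s,t$-path in $G-Y$ witnesses that no vertex of $Y$ dominates $t$, so the $s,t$-path in $\DT_s$ (whose vertices are $\dom(t)$) avoids $Y$. Your proposal merely spells out the ancestor--dominator correspondence more explicitly than the paper does.
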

\begin{proof}
  (i) and (ii) are immediate.
  For (iii), there is an $s,t$-path in the subgraph $G-Y$ since $Y$ is an \RemSet of $G$,
  indicating that no vertex in $Y$ dominates $t$ in $G_s$.
  Then in $\DT_s$, no vertex in $Y$ exists on the $s,t$-path.
  It is also the same for $\DT^{\top}_s-Y$. 
\end{proof}


\noindent
The converse of \lemref{str_RS_in_DT}(iii) does not hold;
a counterexample is shown in \figref{conv_lem_RS}.

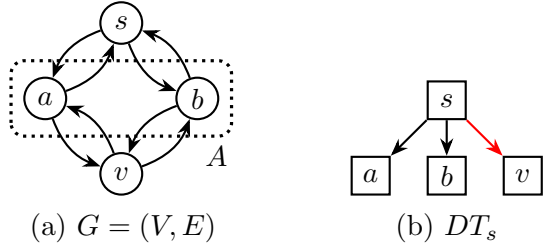
\begin{figure}[t!]
  \centering
  \begin{minipage}[b]{0.3\hsize}
    \centering
    \begin{tikzpicture}[entity-vertex/.style={draw, circle,minimum width=0.55cm, ,inner sep=-1mm, thick}]
        \node[entity-vertex] (vx) at (0, 1) {$s$};
        \node[entity-vertex] (vy) at (0, -1) {$v$};
        \node[entity-vertex] (va) at (-1, 0) {$a$};
        \node[entity-vertex] (vb) at (1, 0) {$b$};
        \draw[thick,-{Stealth}] (vx)to[out=200,in=70](va);
        \draw[thick,-{Stealth}] (va)to[out=-70,in=160](vy);
        \draw[thick,-{Stealth}] (vy)to[out=20,in=-110](vb);
        \draw[thick,-{Stealth}] (vb)to[out=110,in=-20](vx);
        
        \draw[thick,-{Stealth}] (va)to[out=20,in=-110](vx);
        \draw[thick,-{Stealth}] (vx)to[out=-70,in=160](vb);
        \draw[thick,-{Stealth}] (vb)to[out=-160,in=70](vy);
        \draw[thick,-{Stealth}] (vy)to[out=110,in=-20](va);

        \draw[rounded corners=5,very thick,dotted]($ (va) + (-0.45,-0.5) $)--($ (va) + (-0.45,0.5) $)--($ (vb) + (0.45,0.5) $)--($ (vb) + (0.45,-0.5) $)--cycle;
        \draw($ (vb) + (0.25,-0.5) $)node[below]{$ A $};

    \end{tikzpicture}
    \\(a) $G = (V, E)$
\end{minipage}
\begin{minipage}[b]{0.25\hsize}
    \centering
    \begin{tikzpicture}[entity-vertex/.style={draw,minimum width=0.5cm, minimum height=0.5cm,inner sep=-1mm, thick}]
        \node[entity-vertex] (vx) at (0, 1) {$s$};
        \node[entity-vertex] (va) at (-1, 0) {$a$};
        \node[entity-vertex] (vb) at (0, 0) {$b$};
        \node[entity-vertex] (vy) at (1, 0) {$v$};

        \draw[thick,-{Stealth}] (vx)--(va);
        \draw[thick,-{Stealth}] (vx)--(vb);
        \draw[thick,-{Stealth},red] (vx)--(vy);

    \end{tikzpicture}
    \\ (b) $DT_s$
\end{minipage}


  \caption{A counterexample of the converse of \lemref{str_RS_in_DT}(iii).
  (a) A strongly-connected graph~$G$.
  (b) The dominator tree $\DT_s$ of $G_s$.
  Although every vertex of $V \setminus A = \{s, v\}$ is reachable from $s$ in (b),
  $G - A$ is not strongly-connected.}
  \label{fig:conv_lem_RS}
\end{figure}

\lemref{str_MRS_in_DT} provides a necessary and sufficient condition
that a subset $Y\subseteq V$ with $s\notin Y$ is a \MinRemSet of $G$.

\begin{lem}
  \label{lem:str_MRS_in_DT}
  For a strongly-connected graph $G=(V,E)$ and $s\in V$,
  let $Y \subsetneq V$ be a nonempty proper subset such that $s\notin Y$.  
  Then $Y$ is a \MinRemSet of $G$
  if and only if the following \textup{(i)} to \textup{(iii)} hold. 
  \begin{description}
  \item[\rm (i)] For each $u\in Y$,
    it holds that $\Ch_s(u)\subseteq Y$ and $\Ch^{\top}_s(u)\subseteq Y$. 
  \item[\rm (ii)] $\DT_s[Y]$ is a directed path.
  \item[\rm (iii)] $\DT^{\top}_s[Y]$ is a directed path whose orientation is the reverse of $\DT_s[Y]$.
  \end{description}
\end{lem}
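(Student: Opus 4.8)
The plan is to prove the two implications separately, passing between removable sets and the dominator-tree picture via the Hamiltonian-path description of MinRSs in \lemref{str_any_MRS_path} together with the transpose-invariance in \lemref{str_RS_in_DT}.

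\textbf{From ``MinRS'' to (i)--(iii).}
Let $X:=V\setminus Y$, so $s\in X$ and $G[X]=G-Y$ is strongly-connected. By \lemref{str_any_MRS_path}, $G[Y]$ has a Hamiltonian path $v_1\to v_2\to\dots\to v_q$ with $\outNeigh{G}{X}=\{v_1\}$, $\inNeigh{G}{X}=\{v_q\}$, and no forward shortcut $(v_i,v_j)\in E$ for $j-i\ge2$. First I would settle (i): if some $u\in Y$ had a $\DT_s$-child $w\in X$, then $u$ would dominate $w$ in $G_s$, contradicting the existence of an $s,w$-path inside $G[X]$; so $\Ch_s(u)\subseteq Y$, and the same argument in $G^{\top}$ (a MinRS of $G$ by \lemref{str_RS_in_DT}(ii)) gives $\Ch^{\top}_s(u)\subseteq Y$. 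For (ii): since every edge from $X$ into $Y$ enters at $v_1$, every $s,w$-path with $w\in Y$ visits $v_1$, so $v_1$ dominates all of $Y$; and $v_1$ dominates no vertex of $X$ because removing $v_1$ leaves the strongly-connected $G[X]$ untouched, so the subtree of $\DT_s$ rooted at $v_1$ is exactly $Y$. I would then show that $v_j$ dominates $v_i$ whenever $1\le j<i\le q$: in $G-v_j$ one can reach, from $s$, all of $X$ and then enter $Y$ only at $v_1$, from which the path edges climb only to $v_{j-1}$, while the remaining edges of $G[Y]$ are ``backward'' edges (not excluded by the no-shortcut property) that only decrease the index, and $Y$ can be left only through $v_q$, which is not yet reached; hence $v_i$ is unreachable from $s$ in $G-v_j$. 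Therefore $v_1,\dots,v_{q-1}$ all lie on the path from $v_1$ to $v_q$ in $\DT_s$, which is then $v_1\to v_2\to\dots\to v_q$ and equals $\DT_s[Y]$, giving (ii). Finally (iii): repeating the argument in $G^{\top}$, the Hamiltonian path of $G^{\top}[Y]$ is the reverse $v_q\to\dots\to v_1$ and $\outNeigh{G^{\top}}{X}=\inNeigh{G}{X}=\{v_q\}$, so $\DT^{\top}_s[Y]$ is the directed path $v_q\to v_{q-1}\to\dots\to v_1$, i.e.\ the reverse orientation of $\DT_s[Y]$.

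\textbf{From (i)--(iii) to ``MinRS''.}
Write $\DT_s[Y]$ as the directed path $u_1\to\dots\to u_q$. By (i) and (ii), the subtree of $\DT_s$ rooted at $u_1$ has vertex set exactly $Y$ (a downward-closed set that induces a path is the subtree rooted at its topmost vertex), and symmetrically by (i) and (iii) the subtree of $\DT^{\top}_s$ rooted at $u_q$ has vertex set exactly $Y$. If some $x\in X$ were unreachable from $s$ in $G-Y$, then every $s,x$-path in $G$ meets $Y$, and its first vertex of $Y$ is necessarily $u_1$ (which dominates all of $Y$), so $u_1$ dominates $x$ and hence $x\in Y$, a contradiction; the mirror argument in $G^{\top}$ (using the $\DT^{\top}_s$-subtree rooted at $u_q$) shows $s$ is reachable from every $x\in X$ in $G-Y$. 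Thus $G-Y$ is strongly-connected and $Y$ is an RS of $G$. For minimality, suppose some nonempty $Y'\subsetneq Y$ is an RS and pick a MinRS $Y''\subseteq Y'$; then $Y''\subsetneq Y$, $s\notin Y''$, and by the first part $Y''$ satisfies (i)--(iii). Since $Y''\subseteq Y$ and $Y$ is the path $u_1\to\dots\to u_q$ in $\DT_s$, the subtree of $\DT_s$ equal to $Y''$ is a suffix $\{u_k,\dots,u_q\}$; the same reasoning in $\DT^{\top}_s$ forces $Y''$ to be a prefix $\{u_1,\dots,u_l\}$; hence $Y''=Y$, contradicting $Y''\subsetneq Y$. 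So $Y$ is a MinRS of $G$.

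\textbf{Where the difficulty lies.}
The one genuinely delicate step is proving that $v_j$ dominates $v_i$ for all $j<i$: it requires a careful reachability analysis in $G-v_j$ that simultaneously uses all of \lemref{str_any_MRS_path} --- that inside $G[Y]$ the only edges are the path edges together with backward edges that never raise the index, that every edge from $X$ to $Y$ lands on $v_1$, and that every edge from $Y$ to $X$ leaves from $v_q$. Everything else is bookkeeping with rooted subtrees and with the transpose-invariance already recorded in \lemref{str_RS_in_DT}.
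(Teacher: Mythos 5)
Your proof is correct and follows essentially the paper's route: the forward direction converts the Hamiltonian-path structure of \lemref{str_any_MRS_path} into the domination chain $v_1,\dots,v_q$ (your explicit reachability analysis in $G-v_j$ is just a spelled-out version of the paper's claim that the only $v_1,v_i$-path inside $Y$ is $v_1\to\dots\to v_i$), and the converse recovers removability from the subtree structure and then minimality by reducing to a MinRS contained in $Y$. The one place you genuinely diverge is the step showing that (i)--(iii) imply $Y$ is an RS: the paper passes directly from reachability of $t$ in $\DT_s-Y$ to reachability in $G-Y$, which as literally stated is the converse of \lemref{str_RS_in_DT}(iii) that \figref{conv_lem_RS} exhibits as false in general, whereas you argue via single-vertex domination --- since $u_1$ dominates all of $Y$, every $s,x$-path meeting $Y$ meets it first at $u_1$, so an $x\in X$ unreachable in $G-Y$ would be dominated by $u_1$ alone and hence lie in the subtree of $u_1$, i.e.\ in $Y$. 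Your version of that step uses (ii) and (iii) where they are actually needed and is the more watertight one.
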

\begin{proof}
  Let  $Y = \{v_1, v_2, \dots, v_q\}$.

  \noindent
  $(\Longrightarrow)$
  By \lemref{str_any_MRS_path}(i),
  suppose that $v_1\to v_2\to\dots\to v_q$ is the Hamiltonian path in $G[Y]$.
  
  \noindent (i) If not, then for some $u\in Y$, there would be a child $t\in\Ch_s(u)$ (or $t\in\Ch^{\top}_s(u)$) 
  that cannot be reached from $s$ in $\DT_s-Y$ (or $\DT^{\top}_s-Y$),
  contradicting \lemref{str_RS_in_DT}(iii).
  
  \noindent
  (ii)
  In $G_s$, 
  every $s,v_i$-path, $i\in[1,q]$ visits $v_1$ by \lemref{str_any_MRS_path}(iii),
  and by \lemref{str_any_MRS_path}(ii),
  the only $v_1,v_i$-path is
  $v_1 \rightarrow v_2 \rightarrow \dots \rightarrow v_i$. 
  We have that $\dom(v_i) =\dom(v_{i-1}) \cup \{v_i\}$ for any $i \in [2, q]$,
  meaning that
  $v_1 \rightarrow v_2 \rightarrow \dots \rightarrow v_q$
  is a directed path in $\DT_s$.
  
  \noindent
  (iii) $G^{\top}$ is strongly-connected and $Y$ is a \MinRemSet of $G^{\top}$.
  By \lemref{str_any_MRS_path}(i), there is a Hamiltonian path in $G^{\top}[Y]$,
  where we see that it is $v_q\to v_{q-1}\to\dots\to v_1$. 
  Applying the same argument as (ii) to $G^{\top}_s$,
  $v_q\to v_{q-1}\to\dots\to v_1$ is a directed path in $\DT^{\top}_s$. 

  \medskip
  
  \noindent
  $(\Longleftarrow)$
  By (ii), we represent $\DT_s[Y]$ as a directed path
  $v_1 \rightarrow v_2 \rightarrow \dots \rightarrow v_q$,
  where $\DT^{\top}_s[Y]$ is represented as
  $v_q \rightarrow v_{q-1} \rightarrow \dots \rightarrow v_1$ by (iii).
  By (i), every vertex $t\in V\setminus Y$ is reachable from $s$
  in both $\DT_s-Y$ and $\DT^{\top}_s-Y$,
  indicating that there are $s,t$-path and $t,s$-path in $G-Y$.
  We see that $G-Y$ is strongly-connected by transitivity of strong-connectivity 
  and that $Y$ is an \RemSet of $G$.

  \noindent
  We show the minimality of $Y$. It is trivial if $|Y|=1$. Let $|Y|\ge2$. 
  Suppose any \MinRemSet $Y'$ of $G$ such that $Y'\subseteq Y$.  
  If $v_1\in Y\setminus Y'$, then
  any vertex in $Y'$ would dominate $v_1$ in $G^{\top}_s$
  and $v_1$ is not reachable from $s$ in $\DT^{\top}_s-Y'$,
  which contradicts \lemref{str_RS_in_DT}(iii).
  Hence we have $v_1\in Y'$. 
  If there is $i \in [2, q]$ such that $v_i \in Y\setminus Y'$, then
  $v_1$ would dominate $v_i$ in $G_s$,
  and $v_i$ is not reachable from $s$ in $\DT_s-Y'$,
  which contradicts \lemref{str_RS_in_DT}(iii).
  Thus, $Y' = Y$ holds.
\end{proof}


By (i) and (ii) of \lemref{str_MRS_in_DT},
any \MinRemSet $Y$ of $G$ with $s\notin Y$ contains a leaf in $\DT_s$.
Note that there may be a leaf of $\DT_s$ that does not belong to any \MinRemSet of $G$;
see \figref{leaf_not_in_MRS} for example.

\begin{figure}[t!]
  \centering
  \begin{minipage}[t]{0.225\linewidth}
    \centering
    \begin{tikzpicture}[entity-vertex/.style={draw, circle,minimum width=0.55cm, ,inner sep=-1mm, thick}]
        \node[entity-vertex] (v1) at ($(90:1.25)$) {$v_1$};
        \node[entity-vertex] (v2) at ($(90+360/7:1.25)$) {$v_2$};
        \node[entity-vertex] (v3) at ($(90+720/7:1.25)$) {$v_3$};
        \node[entity-vertex] (v4) at ($(90+3*360/7:1.25)$) {$v_4$};
        \node[entity-vertex] (v5) at ($(90-360/7:1.25)$) {$v_5$};
        \node[entity-vertex] (v6) at ($(90-720/7:1.25)$) {$v_6$};
        \node[entity-vertex] (v7) at ($(90-3*360/7:1.25)$) {$v_7$};
        \node[entity-vertex] (v8) at (0, 0) {$v_8$};
        \node[entity-vertex] (s) at ($(v1)+(0, 1.25)$) {$s$};
        \draw[thick,-{Stealth}] (v1)--(v2);
        \draw[thick,-{Stealth}] (v2)--(v3);
        \draw[thick,-{Stealth}] (v3)--(v4);
        \draw[thick,-{Stealth}] (v4)--(v2);
        \draw[thick,-{Stealth}] (v4)--(v7);
        \draw[thick,-{Stealth}] (v1)--(v5);
        \draw[thick,-{Stealth}] (v5)--(v6);
        \draw[thick,-{Stealth}] (v6)--(v7);
        \draw[thick,-{Stealth}] (v7)--(v8);
        \draw[thick,-{Stealth}] (v8)--(v1);
        \draw[thick,-{Stealth}] (s)to[out=-135,in=135](v1);
        \draw[thick,-{Stealth}] (v1)to[out=45,in=-45](s);

    \end{tikzpicture}
    $G = (V, E)$
\end{minipage}
\hfill
\begin{minipage}[t]{0.225\linewidth}
    \centering
    \begin{tikzpicture}[entity-vertex/.style={draw, circle,minimum width=0.55cm, ,inner sep=-1mm, thick}]

        \node[entity-vertex] (v1) at ($(90:1.25)$) {$v_1$};
        \node[entity-vertex] (v2) at ($(90+360/7:1.25)$) {$v_2$};
        \node[entity-vertex] (v3) at ($(90+720/7:1.25)$) {$v_3$};
        \node[entity-vertex] (v4) at ($(90+3*360/7:1.25)$) {$v_4$};
        \node[entity-vertex] (v5) at ($(90-360/7:1.25)$) {$v_5$};
        \node[entity-vertex] (v6) at ($(90-720/7:1.25)$) {$v_6$};
        \node[entity-vertex] (v7) at ($(90-3*360/7:1.25)$) {$v_7$};
        \node[entity-vertex] (v8) at (0, 0) {$v_8$};
        \node[entity-vertex] (s) at ($(v1)+(0, 1.25)$) {$s$};
        \draw[thick,{Stealth}-] (v1)--(v2);
        \draw[thick,{Stealth}-] (v2)--(v3);
        \draw[thick,{Stealth}-] (v3)--(v4);
        \draw[thick,{Stealth}-] (v4)--(v2);
        \draw[thick,{Stealth}-] (v4)--(v7);
        \draw[thick,{Stealth}-] (v1)--(v5);
        \draw[thick,{Stealth}-] (v5)--(v6);
        \draw[thick,{Stealth}-] (v6)--(v7);
        \draw[thick,{Stealth}-] (v7)--(v8);
        \draw[thick,{Stealth}-] (v8)--(v1);
        \draw[thick,{Stealth}-] (s)to[out=-135,in=135](v1);
        \draw[thick,{Stealth}-] (v1)to[out=45,in=-45](s);

    \end{tikzpicture}
    $\Trans{G} = (V, \Trans{E})$
\end{minipage}
\hfill
\begin{minipage}[t]{0.225\linewidth}
    \centering
    \begin{tikzpicture}[entity-vertex/.style={draw,minimum width=0.5cm, minimum height=0.5cm,inner sep=-1mm, thick}]
        \node[entity-vertex] (s) at (0, 1) {$s$};
        \node[entity-vertex] (v1) at (0, 0) {$v_1$};
        \node[entity-vertex] (v2) at (-1, -1) {$v_2$};
        \node[entity-vertex] (v3) at (-1, -2) {$v_3$};
        \node[entity-vertex] (v4) at (-1, -3) {$v_4$};
        \node[entity-vertex] (v5) at (0, -1) {$v_5$};
        \node[entity-vertex] (v6) at (0, -2) {$v_6$};
        \node[entity-vertex] (v7) at (1, -1) {$v_7$};
        \node[entity-vertex] (v8) at (1, -2) {$v_8$};

        \draw[thick,-{Stealth}] (s)--(v1);
        \draw[thick,-{Stealth}] (v1)--(v2);
        \draw[thick,-{Stealth}] (v2)--(v3);
        \draw[thick,-{Stealth}] (v3)--(v4);
        \draw[thick,-{Stealth}] (v1)--(v5);
        \draw[thick,-{Stealth}] (v5)--(v6);
        \draw[thick,-{Stealth}] (v1)--(v7);
        \draw[thick,-{Stealth}] (v7)--(v8);

    \end{tikzpicture}
    
    $\DT_s$
\end{minipage}
\hfill
\begin{minipage}[t]{0.2\linewidth}
    \centering
    \begin{tikzpicture}[entity-vertex/.style={draw,minimum width=0.5cm, minimum height=0.5cm,inner sep=-1mm, thick}]
        \node[entity-vertex] (s) at (0, 1) {$s$};
        \node[entity-vertex] (v1) at (0, 0) {$v_1$};
        \node[entity-vertex] (v8) at (0, -1) {$v_8$};
        \node[entity-vertex] (v7) at (0, -2) {$v_7$};
        \node[entity-vertex] (v4) at (-0.75, -3) {$v_4$};
        \node[entity-vertex] (v3) at (-0.75, -4) {$v_3$};
        \node[entity-vertex] (v2) at (-0.75, -5) {$v_2$};
        \node[entity-vertex] (v6) at (0.75, -3) {$v_6$};
        \node[entity-vertex] (v5) at (0.75, -4) {$v_5$};

        \draw[thick,-{Stealth}] (s)--(v1);
        \draw[thick,-{Stealth}] (v1)--(v8);
        \draw[thick,-{Stealth}] (v8)--(v7);
        \draw[thick,-{Stealth}] (v7)--(v4);
        \draw[thick,-{Stealth}] (v4)--(v3);
        \draw[thick,-{Stealth}] (v3)--(v2);
        \draw[thick,-{Stealth}] (v7)--(v6);
        \draw[thick,-{Stealth}] (v6)--(v5);

    \end{tikzpicture}
    $\Trans{\DT}_s$
\end{minipage}
  \caption{An example where a leaf of the dominator tree does not belong to a \MinRemSet.
  The graph $G$ is strongly-connected and 
  $\MRS_{\MS_1}(V) = \{ \{s\}, \{v_2, v_3, v_4\}, \{v_5, v_6\} \}$. 
  Although vertex~$v_8$ is a leaf of $\DT_s$,
  it does not belong to any \MinRemSet of $V$.}
  \label{fig:leaf_not_in_MRS}
\end{figure}
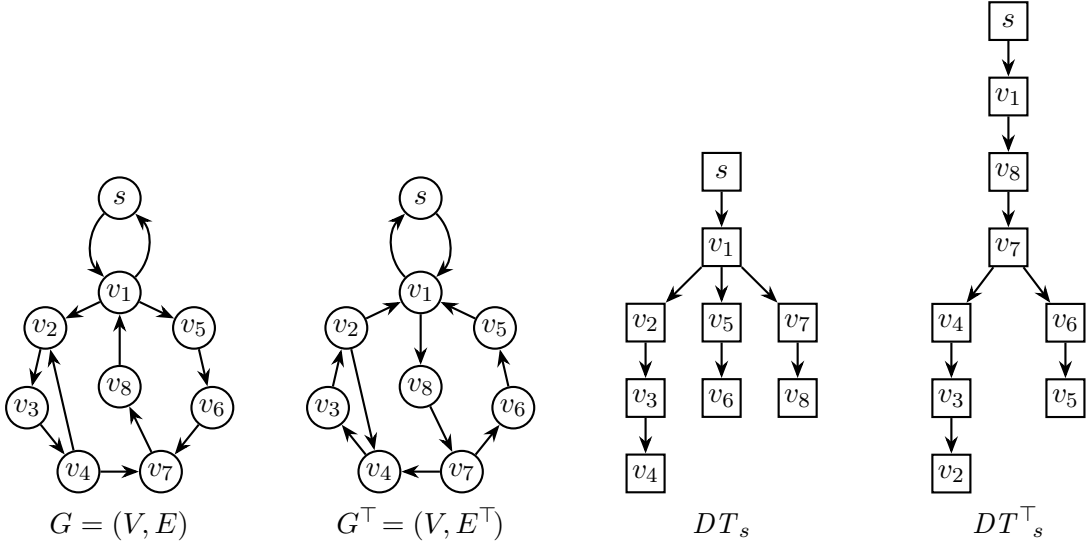

Now we are ready to describe an algorithm to generate all \MinRemSets of $G$
that does not contain the root $s$. The algorithm is shown in \algref{MRS_single_exclude}. 

\begin{algorithm}[t]
  \caption{An algorithm to generate all \MinRemSets of $V$ that do not contain a given vertex for a strongly-connected graph $G = (V, E)$}
  \label{alg:MRS_single_exclude}
  \begin{algorithmic}[1]
    \Require A strongly-connected graph $G=(V,E)$ such that $|V|\ge2$
    and a vertex $s \in V$
    \Ensure All \MinRemSets of $V$ that do not contain $s$
    \Procedure{\GENMRSWOUTRT}{$G = (V, E), s$}
    \State Construct the dominator trees $\DT_s$ and $\DT^{\top}_s$;
    \For{\textbf{each} leaf $v$ of $\DT_s$ such that $|\Ch^{\top}_s(v)| \leq 1$}
      \State $u := v$; $Y := \{u\}$;
      \While{$u$ is not a leaf of $\DT^{\top}_s$,
        $|\Ch_s(\pi(u))| = 1$, 
        $|\Ch^{\top}_s(\pi(u))| \leq 1$ and
        $\pi^{\top}(\pi(u)) = u$
      }\label{codeline:while_1}
        \State $u:= \pi(u)$; $Y := Y \cup \{u\}$
      \EndWhile;\label{codeline:while_1_end}
      \If{$u$ is a leaf of $\DT^{\top}_s$}\label{codeline:while_1_if}
        \State Output $Y$
      \EndIf
    \EndFor
    \EndProcedure
  \end{algorithmic}
\end{algorithm}

\begin{lem}
  \label{lem:str_MRS_single_exclude}
  For a strongly-connected graph $G=(V,E)$ such that $|V|\ge2$ and $s \in V$,
  Algorithm~{\rm\ref{alg:MRS_single_exclude}} generates all \MinRemSets $Y$\/of $G$
  such that $s \notin Y$ in $O(n + m)$ time and space.
\end{lem}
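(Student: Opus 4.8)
The plan is to derive the correctness of \algref{MRS_single_exclude} entirely from the characterization in \lemref{str_MRS_in_DT}, and then to read off the running time from a disjointness property of $\DT_s$. For \emph{soundness}, consider any set $Y$ that the algorithm outputs. By the construction of the procedure, $Y=\{v_1,\dots,v_q\}$ where $v_q$ is the leaf of $\DT_s$ at which the \textbf{for}-iteration starts, $v_i=\pi_s(v_{i+1})$ for $i\in[1,q-1]$, and the \textbf{while}-loop halts at $u=v_1$ with $v_1$ a leaf of $\DT^{\top}_s$. I would verify conditions (i)--(iii) of \lemref{str_MRS_in_DT} by reading them off the loop tests: the test $|\Ch_s(\pi_s(u))|=1$ forces $\Ch_s(v_i)=\{v_{i+1}\}$ for $i<q$, and $\Ch_s(v_q)=\emptyset$ since $v_q$ is a $\DT_s$-leaf, giving condition (i) for $\DT_s$ and condition (ii); the for-loop guard $|\Ch^{\top}_s(v_q)|\le1$, the test $|\Ch^{\top}_s(\pi_s(u))|\le1$, the test $\pi^{\top}_s(\pi_s(u))=u$, and $\Ch^{\top}_s(v_1)=\emptyset$ (from $v_1$ being a $\DT^{\top}_s$-leaf) together force $\Ch^{\top}_s(v_i)=\{v_{i-1}\}$ for $i\ge 2$, giving condition (i) for $\DT^{\top}_s$ and condition (iii). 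Here I would use the elementary fact that an edge of the induced subgraph $\DT^{\top}_s[Y]$ is exactly a parent--child edge of the tree $\DT^{\top}_s$, so that ``$\DT^{\top}_s[Y]$ is the reverse of the path $\DT_s[Y]$'' is literally the statement $\pi^{\top}_s(v_{i-1})=v_i$. Finally $s\notin Y$, because $v_q$, being a $\DT_s$-leaf, differs from the root $s$ (as $n\ge2$), and no $v_i=\pi_s(v_{i+1})$ can be $s$ since the test $\pi^{\top}_s(\pi_s(u))=u$ fails when $\pi_s(u)=s$ (recall $\pi^{\top}_s(s)=\NIL$). By \lemref{str_MRS_in_DT}, $Y\in\MRS_1(V)$ and $s\notin Y$.

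For \emph{completeness}, take any $Y=\{v_1,\dots,v_q\}\in\MRS_1(V)$ with $s\notin Y$; by \lemref{str_MRS_in_DT} we may assume $\DT_s[Y]=v_1\to\dots\to v_q$ and $\DT^{\top}_s[Y]$ is its reverse. Conditions (i)--(iii), combined with the shape of these induced paths, make $v_q$ a $\DT_s$-leaf with $|\Ch^{\top}_s(v_q)|\le1$ (so the \textbf{for}-loop selects $v:=v_q$), make all four \textbf{while}-loop conditions hold at every $u=v_i$ with $i\ge2$ (the same computations of $\Ch_s$, $\Ch^{\top}_s$, $\pi_s$, $\pi^{\top}_s$ as above), and make $v_1$ a $\DT^{\top}_s$-leaf; hence the walk descends exactly to $v_1$ and $Y$ is output. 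Since each $v_i$ with $i<q$ has the $\DT_s$-child $v_{i+1}$, the vertex $v_q$ is the only $\DT_s$-leaf contained in $Y$, and the walk from a leaf is deterministic, so $Y$ is output exactly once.

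For the \emph{running time}, $\DT_s$ and $\DT^{\top}_s$, and from them the arrays $\pi_s,\pi^{\top}_s$, the child counts $|\Ch_s(\cdot)|,|\Ch^{\top}_s(\cdot)|$, and the leaf flags, can be computed in $O(n+m)$ time and space~\cite{ILS.2012}; afterwards each \textbf{for}-loop check and each \textbf{while}-loop iteration is $O(1)$. The decisive observation is that the vertex sets walked by distinct \textbf{for}-loop iterations are vertex-disjoint chains of $\DT_s$: a vertex $w$ is appended only when the previously processed vertex is $w$'s unique $\DT_s$-child, so a walk can reach $w$ only from below through that unique child, hence ultimately from one specific leaf, and distinct iterations start at distinct leaves. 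Therefore the total number of \textbf{while}-loop iterations over the whole execution is $O(n)$, and likewise $\sum_{Y}|Y|\le n$, so emitting all outputs costs $O(n)$. Altogether the algorithm runs in $O(n+m)$ time and space.

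I expect the main obstacle to be the equivalence used in soundness and completeness: matching each of the four \textbf{while}-loop conditions (and the for-loop guard) to the correct clause of \lemref{str_MRS_in_DT}, together with the boundary cases $q=1$, a leaf that yields no output, and an attempted append of the root $s$. Once the ``induced path $=$ parent-pointer chain'' translation is set up, what remains is a finite case analysis for correctness and the disjoint-chains counting for the time bound.
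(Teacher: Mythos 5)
Your proposal is correct and follows essentially the same route as the paper's proof: both reduce soundness and completeness to the characterization in \lemref{str_MRS_in_DT} by matching the for-loop guard and the four while-loop tests to conditions (i)--(iii) (including the $s\notin Y$ check via $\pi^{\top}_s(s)=\NIL$ and the uniqueness of the starting leaf within each \MinRemSet), and both obtain the $O(n+m)$ bound from linear-time dominator-tree construction plus the observation that each tree edge is traversed at most once. Your soundness/completeness split and the explicit disjoint-chains counting are just a slightly more detailed write-up of the same argument.
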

\begin{proof}
  (Correctness) The while-loop from line~\ref{codeline:while_1} to line~\ref{codeline:while_1_end}
  begins with $u:=v$ for a leaf $v$ of $\DT_s$
  and repeats $u:=\pi(u)$ (i.e., we move to the parent of $u$) and $Y:=Y\cup\{u\}$
  as long as the conditions in \lineref{while_1} are satisfied.
  \begin{itemize}
  \item It holds that $s\notin Y$ for any output $Y$;
    If $u$ is a child of $s$,
    then the last condition $\pi^{\top}(\pi(u))=u$ in \lineref{while_1}
    is not satisfied
    since $\pi^{\top}(\pi(u))=\pi^{\top}(s)=\NIL\ne u$.
  \item We claim that each leaf in $\DT_s$ is contained in at most one \MinRemSet $Z$ such that $s\notin Z$. 
    By \lemref{str_MRS_in_DT}(ii), $\DT_s[Z]$ is a directed path. 
    Also by (i), it is a path from a vertex $a$ to a leaf $b$, where it is possible that $a=b$.
    If there are two \MinRemSets that contain $b$,
    it would contradict that each vertex except the root has its unique parent.     
  \item We show that, for each leaf $v$ of $\DT_s$,
    the algorithm outputs the \MinRemSet $Z$ such that $s\notin Z$ and $v\in Z$
    if exists. 
    When we exit the while-loop, we already observed that $s\notin Y$ holds.
    $Y$ is the set of vertices between a leaf $v$ and its ancestor $u$ in $\DT_s$
    and hence $\DT_s[Y]$ is a directed path and satisfies \lemref{str_MRS_in_DT}(ii). 
    We also see that $\DT^{\top}_s[Y]$ is a directed path
    and hence satisfies \lemref{str_MRS_in_DT}(iii);
    by the last condition of \lineref{while_1},
    $Y$ is the set of vertices between an inner node $v$ and its descendant $u$
    in $\DT^{\top}_s$. 
    For \lemref{str_MRS_in_DT}(i), $|\Ch_s(a)|\le 1$ and $|\Ch^{\top}_s(a)|\le1$ hold for every $a\in Y$. 
    While $\Ch_s(a)\subseteq Y$ is satisfied for every $a\in Y$,
    $\Ch^{\top}_s(u)\subseteq Y$ is not satisfied if $u$,
    the end vertex of the directed path $\DT^{\top}_s[Y]$,
    is not a leaf of $\DT^{\top}_s$.
    However, such $Y$ is not output by \lineref{while_1_if}.
    We have shown that the output $Y$ satisfies \lemref{str_MRS_in_DT}(i) to (iii)
    and that $Y=Z$ holds.  
  \end{itemize}
  
  \noindent
  (Time Complexity Analyses)
  Adam et al.~\cite{Adam1998} show that
  a dominator tree for a flow graph can be constructed in $O(n + m)$ time.
  Each edge in $\DT_s$ and $\DT^{\top}_s$ is scanned at most once.

  \noindent
  (Space Complexity Analyses)
  A dominator tree for a flow graph can be constructed and stored in $O(n + m)$ space. 
  The size of $Y$ is $O(n)$.
  We can store $\pi(v)$, $\Ch(v)$, $\pi^{\top}(v)$ and $\Ch^{\top}(v)$ for all $v \in V$
  in $O(n)$ space.
\end{proof}

\algref{MRS_single_exclude} outputs 
all \MinRemSets of $V$ that do not contain the designated vertex $s\in V$. 
Let us observe that, using the dominator tree 
constructed in the algorithm, 
we can output a \MaxProSubSol $X:=V\setminus Y$ for each $Y\in\MRS(V)$ with $s\notin Y$
in $O(|X|)$ time. 

\begin{lem}
  \label{lem:convert_MRS_to_MPSS}
  For a strongly-connected graph $G=(V,E)$ and $s\in V$,
  let $\MY$ denote the family of \MinRemSets
  that Algorithm~{\rm\ref{alg:MRS_single_exclude}}
  outputs for the input $(G,s)$. 
  For each $Y\in\MY$, 
  we can output $X := V \setminus Y$ in $O(|X|)$ time and $O(n)$ space.
\end{lem}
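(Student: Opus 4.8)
The plan is to reuse the dominator tree $\DT_s$ that \algref{MRS_single_exclude} has already built, together with the structural description of MinRSs from \lemref{str_MRS_in_DT}. First I would strengthen the fact that $\DT_s[Y]$ is a path into the sharper statement that $Y$ is exactly the vertex set of a \emph{subtree} of $\DT_s$. Writing $\DT_s[Y]$ as a directed path $v_1\to v_2\to\dots\to v_q$ (by \lemref{str_MRS_in_DT}(ii)), every $v_i$ satisfies $\Ch_s(v_i)\subseteq Y$ by \lemref{str_MRS_in_DT}(i); since $\DT_s[Y]$ is a path, this forces $\Ch_s(v_i)=\{v_{i+1}\}$ for $i<q$ and $\Ch_s(v_q)=\emptyset$, so the subtree of $\DT_s$ rooted at $v_1$ is precisely $Y$. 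The root $v_1$ of this subtree is available at no extra cost: it is the value held by the variable $u$ at the moment \algref{MRS_single_exclude} executes ``Output $Y$'', so I would have that procedure emit $u$ alongside $Y$.

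Given $a:=v_1$, the set $X=V\setminus Y$ is exactly the set of vertices of $\DT_s$ lying outside the subtree rooted at $a$. Hence I would output $X$ by a top-down traversal (depth-first, say) of $\DT_s$ starting at the root $s$: emit each vertex visited, but on encountering the child $a$ of $\pi_s(a)$, neither emit $a$ nor descend into it. This emits every vertex of $X$ exactly once and otherwise only touches $a$ a single time, so correctness follows immediately from the subtree claim, the traversal emitting exactly $V$ minus the subtree rooted at $a$, i.e.\ $V\setminus Y=X$.

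For the time bound, the traversal inspects only those edges of $\DT_s$ incident to some vertex of $X$: deleting the subtree rooted at $a$ from $\DT_s$ leaves a subtree on vertex set $X$ (with $|X|-1$ edges) plus the single pruned edge $(\pi_s(a),a)$, i.e.\ $|X|$ edges in total, and each visited vertex is processed in $O(1)$ time, giving $O(|X|)$ overall. For space, the child-list representation of $\DT_s$ retained from \algref{MRS_single_exclude} occupies $O(n)$ space and the traversal stack is $O(n)$.

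The main obstacle is achieving $O(|X|)$ rather than the naive $O(n)$ (or $O(|Y|)$, which would be incurred by merely scanning $Y$): the crucial point is that $Y$ is \emph{downward-closed} in $\DT_s$, so its complement can be enumerated by a root-down traversal that is pruned the instant it would enter $Y$, never paying for the interior of $Y$. A minor secondary point is the bookkeeping that makes the subtree root $a$ available in $O(1)$ time from \algref{MRS_single_exclude} instead of recovering it from the set $Y$.
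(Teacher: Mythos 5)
Your proposal is correct and takes essentially the same approach as the paper: reuse the dominator tree $\DT_s$ and output $X$ by a root-down traversal of $\DT_s$ that prunes away $Y$. The paper implements the pruning by marking the vertices of each $Y_i$ with the index $i$ during \algref{MRS_single_exclude} and skipping marked vertices, whereas you prune at the single subtree root $a=v_1$ after making explicit that $Y$ is downward-closed in $\DT_s$; this is a slightly more careful justification of the $O(|X|)$ bound but not a different argument.
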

\begin{proof}
  Let $\MY=\{Y_1,Y_2,\dots,Y_q\}$.
  When we construct each $Y_i\in\MY$,
  $i\in[1,q]$ during \algref{MRS_single_exclude},
  let us mark each vertex in $Y_i$ with the integer $i$. 
  Recall that the sets in $\MY$ are pairwise disjoint
  and that $Y_i$ induces a directed path in $\DT_s$
  between a leaf $v_i$ and one of its ancestors, say $a_i$.
  This does not change complexity bounds in \lemref{str_MRS_single_exclude}. 

  Suppose that we maintain the dominator tree $\DT_s$
  after \algref{MRS_single_exclude} halts. 
  To output $X_i:=V\setminus Y_i$,
  it suffices to traverse $\DT_s$ from the root $s$
  by a conventional graph search algorithm (e.g., DFS/BFS),
  skipping the vertices marked by $i$, that is $Y_i$. 
  We can output $X_i=V\setminus Y_i$ in $O(|X|)$ time in this way
  and store $\DT_s$ in $O(n)$ space.
\end{proof}

\subsection{Generating \MaxProSubSols}
\label{sec:detmine_MRS_disjointness}

How we generate \MaxProSubSols of $G$
depends on whether $G$ is \MinRemSet-disjoint or not.
We can decide whether $G$ is \MinRemSet-disjoint or not as follows.
Choosing a vertex $s\in V$,
we first generate all \MinRemSets $Y\in\MRS_1(V)$ such that $s\notin Y$
by \algref{MRS_single_exclude}. 
The \MinRemSets generated by \algref{MRS_single_exclude} are pairwise
disjoint. If two or more \MinRemSets are generated,
then $G$ is \MinRemSet-disjoint. Otherwise, pick up a vertex
in the generated \MinRemSet, say $t$.
We again run \algref{MRS_single_exclude} to 
generate all \MinRemSets $Y\in\MRS_1(V)$ such that $t\notin Y$.
If there are two or more \MinRemSets are generated,
then $G$ is \MinRemSet-disjoint. Otherwise,
if the two \MinRemSets generated by the two runs are disjoint,
then $G$ is \MinRemSet-disjoint. 

\algref{MRS_determine_disjointness} summarizes this algorithm. 
The following lemma shows the correctness and
the complexity of \algref{MRS_determine_disjointness}. 


\begin{algorithm}[t]
  \caption{An algorithm to determine the \MinRemSet-disjointness of a given strongly-connected graph $G = (V, E)$}
  \label{alg:MRS_determine_disjointness}
  \begin{algorithmic}[1]
    \Require A strongly-connected graph $G=(V,E)$  such that $|V|\ge2$
    \Ensure \True if $G$ is \MinRemSet-disjoint; \False otherwise
    \State $\textsc{is-\MinRemSet-disjoint} := \False$;
    \State Let $s \in V$ be any vertex;\label{codeline:s_chosen}
    \State $\MY_s := $ the family of all outputs of \Call{\GENMRSWOUTRT}{$G, s$};
     \If{$|\MY_s| >1$}
        \State $\textsc{is-\MinRemSet-disjoint} := \True$
    \Else
        \State Let $t \in Y_s$ be any vertex in the unique set $Y_s \in \MY_s$;\label{codeline:t_chosen}
        \State $\MY_t := $ the family of all outputs of \Call{\GENMRSWOUTRT}{$G, t$};
        \If{$|\MY_t| >1$}
            \State $\textsc{is-\MinRemSet-disjoint} := \True$
        \Else
            \State $Y_t := $ the unique set in $\MY_t$;
            \If{$Y_s \cap Y_t = \emptyset$}
                \State $\textsc{is-\MinRemSet-disjoint} := \True$
            \EndIf
        \EndIf
    \EndIf;
    \State Output $\textsc{is-\MinRemSet-disjoint}$
  \end{algorithmic}
\end{algorithm}

\begin{lem}
  \label{lem:str_MRS_determine_disjoint}
  For a strongly-connected graph $G=(V,E)$ such that $|V|\ge2$,
  \algref{MRS_determine_disjointness} identifies
  whether $G$ is \MinRemSet-disjoint or not
  in $O(n + m)$ time and space.
\end{lem}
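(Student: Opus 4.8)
The plan is to argue correctness first and then bound the running time and space; correctness splits into showing that every path that sets \textsc{is-MinRS-disjoint} to \True\ is justified, and that the lone path that leaves it \False\ is justified. For the easy direction, note that \algref{MRS_single_exclude} outputs pairwise disjoint {\MinRemSet}s (proved in \lemref{str_MRS_single_exclude}); hence if either invocation returns $|\MY_s|>1$ or $|\MY_t|>1$, we have exhibited two disjoint {\MinRemSet}s of $V$, and by \lemref{num_MRSs}(ii) together with \lemref{SSD_basic_with_Y}(ii) (applicable since $\MS_1$ is an SSD system by \lemref{k_edge_SSD}), $G$ is \MinRemSet-disjoint. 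The same reasoning handles the final test $Y_s\cap Y_t=\emptyset$ when both invocations return a unique set. So the only delicate point is the correctness of outputting \False: I must show that if $\MY_s=\{Y_s\}$, $\MY_t=\{Y_t\}$, and $Y_s\cap Y_t\neq\emptyset$, then $G$ is \emph{not} \MinRemSet-disjoint, equivalently (by \thmref{partition} / \lemref{num_MRSs}) that $G$ is MaxPSS-disjoint and $|\MRS_1(V)|\ge 2$ with two {\MinRemSet}s intersecting.

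The heart of the argument is therefore to show that under those hypotheses there really exist two {\MinRemSet}s of $V$ that intersect. The key observation is that every {\MinRemSet} of $V$ avoids \emph{some} vertex: by \lemref{MRS_exist_not_v} applied inside any {\MinRemSet}, and more directly, for any $v\in V$ there is a {\MinRemSet} $Y$ with $v\notin Y$. Now take the unique set $Y_s\in\MY_s$; pick any $v'\in Y_s$ (this is the vertex $t$ chosen on \lineref{t_chosen}) and run \algref{MRS_single_exclude} with start vertex $t$. Since $\MS_1$ is an SSD system and $G$ is not \MinRemSet-disjoint would be what we want to conclude, I instead argue contrapositively: if $G$ \emph{were} \MinRemSet-disjoint, then all of $\MRS_1(V)$ would be pairwise disjoint, so every {\MinRemSet} not containing $s$ is reported by the first call and every {\MinRemSet} not containing $t$ is reported by the second call; since $\MRS_1(V)$ partitions into sets none of which contains $s$ except possibly one (and similarly for $t$), a counting argument using \lemref{num_MRSs}(ii) forces $\MRS_1(V)=\{Y_s,Y_t\}$ with $Y_s,Y_t$ disjoint — contradicting $Y_s\cap Y_t\neq\emptyset$. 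Filling in this counting step cleanly — in particular ruling out that some {\MinRemSet} contains both $s$ and $t$, which cannot happen once $t\in Y_s$ since then $s\notin$ that set — is what makes the \False\ branch sound.

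For the complexity, the algorithm invokes \algref{MRS_single_exclude} at most twice, each run costing $O(n+m)$ time and space by \lemref{str_MRS_single_exclude}; choosing $s$ and $t$, comparing the two (disjoint, hence easily represented) families, and testing $Y_s\cap Y_t=\emptyset$ — e.g.\ by marking the vertices of $Y_s$ and scanning $Y_t$ — all take $O(n)$ additional time and space. Summing gives the claimed $O(n+m)$ bounds.

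I expect the main obstacle to be the \False-branch correctness argument: translating ``two invocations each returned a single set and these intersect'' into ``$G$ is not \MinRemSet-disjoint'' requires carefully combining the SSD structural lemmas (\lemref{SSD_basic_with_Y}, \lemref{num_MRSs}) with the precise guarantee of \algref{MRS_single_exclude} that it enumerates \emph{exactly} the {\MinRemSet}s avoiding a prescribed vertex, and making sure no {\MinRemSet} is missed by both runs. Everything else is routine bookkeeping.
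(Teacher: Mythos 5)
Your proposal is correct and follows essentially the same route as the paper: two runs of \algref{MRS_single_exclude}, \lemref{MRS_exist_not_v} to guarantee $\MY_s\neq\emptyset$, and \lemref{SSD_basic_with_Y}(ii) for every branch that answers \True. The only difference is that you over-engineer the \False\ branch: since \lemref{str_MRS_single_exclude} certifies that $Y_s$ and $Y_t$ are genuine {\MinRemSet}s of $V$, and $Y_s\neq Y_t$ (as $t\in Y_s\setminus Y_t$) while $Y_s\cap Y_t\neq\emptyset$, they already witness non-\MinRemSet-disjointness by definition, so the contrapositive counting argument (though valid) is unnecessary --- the paper concludes this case directly.
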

\begin{proof}
  For $v\in V$,
  there is a \MinRemSet of $G$ that does not contain $v$ by \lemref{MRS_exist_not_v}.
  The \MinRemSets output by \algref{MRS_single_exclude} are disjoint,
  and we can conclude that
  $G$ is \MinRemSet-disjoint by \lemref{SSD_basic_with_Y}(ii) 
  if $|\MY_s|>1$ or $|\MY_t|>1$. 
  Otherwise (i.e., if $|\MY_s| = |\MY_t| = 1$),
  let $\MY_s=\{Y_s\}$ and $\MY_t=\{Y_t\}$. 
  We see that $Y_s \neq Y_t$ holds since $t \in Y_s$ and $t \notin Y_t$.
  Again, by \lemref{SSD_basic_with_Y}(ii),
  we can conclude that $G$ is \MinRemSet-disjoint
  if $Y_s\cap Y_t=\emptyset$.
  Otherwise, $G$ is not \MinRemSet-disjoint
  and hence \MaxProSubSol-disjoint by \thmref{partition}. 
  It is easy to see that the time and space complexities are $O(n+m)$.  
\end{proof}

Using \algref{MRS_determine_disjointness},
we can decide whether $G$ is \MaxProSubSol-disjoint or not as follows.
First, run \algref{MRS_determine_disjointness}. 
If $G$ is not \MinRemSet-disjoint, then
we can conclude that $G$ is \MaxProSubSol-disjoint by \thmref{disjoint}.
Otherwise, it is possible that
$G$ is \MinRemSet-disjoint as well as \MaxProSubSol-disjoint.
This case happens when and only when 
$|\MY_s| = |\MY_t| = 1$ and $Y_s \cup Y_t = V$,
as described in \lemref{sametime}.
The check can be done in $O(n)$ time. 


Now we are ready to present algorithms to generate all \MaxProSubSols of $G$.
In \algref{MRS_disjoint}, we summarize the algorithm
for the case when $G$ is \MinRemSet-disjoint,
and in \algref{MPS_disjoint},
we summarize the algorithm for the case when
$G$ is \MaxProSubSol-disjoint. The correctness
and the complexity analyses are
shown in Lemmas~\ref{lem:str_MRS_disjoint}
and \ref{lem:str_MRS_not_disjoint}, respectively,
followed by the proof for \thmref{strong}.

\begin{algorithm}[t]
  \caption{An algorithm to generate all \MaxProSubSols for a strongly-connected \MinRemSet-disjoint graph $G = (V, E)$}
  \label{alg:MRS_disjoint}
  \begin{algorithmic}[1]
    \Require A strongly-connected \MinRemSet-disjoint graph $G=(V,E)$  such that $|V|\ge2$
    \Ensure All \MaxProSubSols of $G$
    \State Let $s \in V$;
    \State $\MY_s := $ the family of all outputs of \Call{\GENMRSWOUTRT}{$G, s$};
    \For{\textbf{each} $Y_s \in \MY_s$}
      \State Output $V \setminus Y_s$\label{codeline:disjoint_out1}
    \EndFor;
    \State Let $Y_s \in \MY_s$ 
    and $t \in Y_s$;\label{codeline:take_t}
    \State $\MY_t := $ the family of all outputs of \Call{\GENMRSWOUTRT}{$G, t$};
    \For{\textbf{each} $Y_t \in \MY_t$}
        \If{$s \in Y_t$}
            \State Output $V \setminus Y_t$\label{codeline:disjoint_out2}
        \EndIf
    \EndFor
  \end{algorithmic}
\end{algorithm}

\begin{lem}
  \label{lem:str_MRS_disjoint}
  For a strongly-connected \MinRemSet-disjoint graph $G=(V,E)$,
  Algorithm~{\rm\ref{alg:MRS_disjoint}} generates all \MaxProSubSols of $G$
  in $O(n + m + N)$ time and $O(n + m)$ space.
\end{lem}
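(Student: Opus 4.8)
The plan is to prove that \algref{MRS_disjoint} emits each member of $\MPSS_1(V)$ exactly once, and then to account for the time and space; throughout I would use that, since $G$ is \MinRemSet-disjoint, the MinRSs of $G$ are pairwise disjoint, so in particular at most one MinRS of $G$ contains the chosen vertex $s$.

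For the output correctness I would first invoke \lemref{str_MRS_single_exclude} to identify $\MY_s$ with the family of all MinRSs of $G$ that avoid $s$, and \lemref{MRS_exist_not_v} to see that $\MY_s\neq\emptyset$, so that $Y_s$ and the vertex $t\in Y_s$ picked in \lineref{take_t} are well defined (a MinRS is nonempty by definition); likewise $\MY_t$ is the family of all MinRSs of $G$ that avoid $t$. \textbf{Soundness} is immediate: every emitted set has the form $V\setminus Y$ with $Y\in\MRS_1(V)$, hence is a MaxPSS of $G$. \textbf{Completeness}: let $Y\in\MRS_1(V)$. If $s\notin Y$, then $Y\in\MY_s$ and $V\setminus Y$ is emitted in \lineref{disjoint_out1}. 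If $s\in Y$, then $Y\neq Y_s$ (as $s\notin Y_s$), so $Y\cap Y_s=\emptyset$ by disjointness, whence $t\notin Y$; thus $Y\in\MY_t$, and since $s\in Y$ the set $V\setminus Y$ is emitted in \lineref{disjoint_out2}. \textbf{No duplication}: the first phase emits complements of MinRSs avoiding $s$ while the second emits complements of MinRSs containing $s$ (owing to the test $s\in Y_t$), so the two phases cannot collide; and within a phase the outputs of \algref{MRS_single_exclude} are already pairwise distinct.

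For the complexity, the two calls to \algref{MRS_single_exclude} cost $O(n+m)$ time and space each by \lemref{str_MRS_single_exclude}. By \lemref{convert_MRS_to_MPSS}, reusing the dominator tree built inside \algref{MRS_single_exclude}, each complement $V\setminus Y$ can be written out in $O(|V\setminus Y|)$ time within $O(n)$ auxiliary space; since each element of $\MPSS_1(V)$ is emitted exactly once, the total output time is $O(N)$. The membership tests $s\in Y_t$ over $Y_t\in\MY_t$ cost $O(\sum_{Y_t\in\MY_t}|Y_t|)=O(n)$ in total because these sets are pairwise disjoint. Summing up, the running time is $O(n+m+N)$, and since the MaxPSSs are streamed rather than stored, the working space is $O(n+m)$.

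The step I expect to be the main obstacle is the completeness argument for the second phase: one must be sure that the unique MinRS containing $s$ (if one exists) is recovered there, and nowhere else. This is exactly where \MinRemSet-disjointness is used — choosing $t$ inside some MinRS $Y_s$ forces $t$ to lie outside every other MinRS, so the run of \algref{MRS_single_exclude} from $t$ does generate the $s$-containing MinRS — while the filter $s\in Y_t$ is what prevents the second phase from re-emitting anything already produced by the first.
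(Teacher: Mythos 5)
Your proof is correct and follows essentially the same route as the paper's: phase one emits the complements of all MinRSs avoiding $s$, and \MinRemSet-disjointness forces the (at most one) MinRS containing $s$ to avoid $t\in Y_s$, so it is recovered in phase two; the complexity accounting via \lemref{str_MRS_single_exclude} and \lemref{convert_MRS_to_MPSS} matches as well. Your treatment is in fact slightly more explicit than the paper's on soundness, non-duplication (the role of the filter $s\in Y_t$), and the $O(n)$ cost of the membership tests.
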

\begin{proof}
  (Correctness)
  By \lemref{num_MRSs}(ii), there are at least two \MinRemSets of $G$.
  By the \MinRemSet-disjointness of $G$,
  $s$ belongs to at most one \MinRemSet of $G$
  and there exist \MinRemSets of $G$ that do not contain $s$.
  This indicates that $\MY_s$,
  the set of all \MinRemSets of $G$ that do not contain $s$,
  is non-empty.
  For each $Y_s\in\MY_s$, the \MaxProSubSol $V\setminus Y_s$ is  output in \lineref{disjoint_out1}. 
  For any $Y_s\in\MY_s$, let $t\in Y_s$.
  Let $Y$ be a \MinRemSet $Y$ of $S$ that contains $s$. 
  Then $t\notin Y$ should hold
  since $Y_s\ne Y$ is the only \MinRemSet that contains $t$.
  Such $Y$ is contained in $\MY_t$
  and the corresponding \MaxProSubSol $V\setminus Y$ is 
  output in \lineref{disjoint_out2}. 

  \noindent
  (Complexity analyses)
  We have $|\MY_s|,|\MY_t|\le n$. 
  By \lemref{convert_MRS_to_MPSS},
  we can output a \MaxProSubSol $X:=V \setminus Y$ for a \MinRemSet $Y$
  in $O(|X|)$ time and $O(n)$ space, using the dominator tree
  constructed in \Call{\GENMRSWOUTRT}{} in \algref{MRS_single_exclude}. 
  We can run \Call{\GENMRSWOUTRT}{}
  in $O(n+m)$ time and space by \lemref{str_MRS_single_exclude}.
\end{proof}

  

\begin{algorithm}[t]
  \caption{An algorithm to generate all \MaxProSubSols for a strongly-connected
  \MaxProSubSol-disjoint graph $G = (V, E)$ }
  \label{alg:MPS_disjoint}
  \begin{algorithmic}[1]
    \Require  A strongly-connected \MaxProSubSol-disjoint graph $G=(V,E)$  such that $|V|\ge2$
    \Ensure All \MaxProSubSols of $G$
    \State Let $s \in V$;
    \State $Y := $ the unique output of \Call{\GENMRSWOUTRT}{$G, s$};
    \label{codeline:unique}
    \State Output $V \setminus Y$;
    \label{codeline:unique2}
    \State Output all strongly-connected components of $G[Y]$
    \label{codeline:unique3}
  \end{algorithmic}
\end{algorithm}

\begin{lem}
  \label{lem:str_MRS_not_disjoint}
  For a strongly-connected \MaxProSubSol-disjoint graph $G=(V,E)$,
  Algorithm~{\rm\ref{alg:MPS_disjoint}}
  generates all \MaxProSubSols of $G$ in $O(n + m)$ time and space.
\end{lem}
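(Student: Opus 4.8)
The plan is to reduce the correctness of \algref{MPS_disjoint} to \lemref{SSD_MRS} applied to the SSD system $\MS_1$ (which is SSD by \lemref{k_edge_SSD}) with $S=V$. Concretely I would prove two things: first, that the set $Y$ extracted in \lineref{unique} is well defined, i.e.\ \GENMRSWOUTRT$(G,s)$ really has a unique output and that output is a \MinRemSet of $G$ avoiding $s$; and second, that $\{V\setminus Y\}$ together with the strongly-connected components of $G[Y]$ is exactly $\MPSS_{1}(V)$, with no duplicates.

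First I would settle well-definedness. By \lemref{str_MRS_single_exclude} the outputs of \GENMRSWOUTRT$(G,s)$ are precisely the \MinRemSets of $G$ that do not contain $s$, and by \lemref{MRS_exist_not_v} at least one such set exists (note also $|\MRS_1(V)|=|\MPSS_1(V)|\ge 2$ by \lemref{num_MRSs}(ii) since $|V|\ge2$). For uniqueness, suppose $Y_1\neq Y_2$ were two \MinRemSets of $G$ with $s\notin Y_1$ and $s\notin Y_2$; then $V\setminus Y_1$ and $V\setminus Y_2$ are two distinct \MaxProSubSols of $G$ that both contain $s$, hence are not disjoint, contradicting that $G$ is \MaxProSubSol-disjoint. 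So $Y$ in \lineref{unique} is the unique \MinRemSet of $G$ with $s\notin Y$.

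Next, correctness of the output. Since $V$ is a \MaxProSubSol-disjoint solution of the SSD system $\MS_1$ and $Y\in\MRS_1(V)$, \lemref{SSD_MRS} gives $\MPSS_1(V)=\textsc{MaxSS}_{\MS_1}(Y)\sqcup\{V\setminus Y\}$. It then remains to identify $\textsc{MaxSS}_{\MS_1}(Y)$ with the family of strongly-connected components of $G[Y]$: any $Z\subseteq Y$ with $G[Z]$ strongly-connected has all its vertices mutually reachable inside $G[Y]$, hence $Z$ is contained in a single strongly-connected component of $G[Y]$; conversely each such component induces a strongly-connected subgraph; so the inclusion-wise maximal solutions among subsets of $Y$ are exactly those components. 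Therefore the sets output in \lineref{unique2} and \lineref{unique3} are precisely the members of $\MPSS_1(V)$, and they are distinct because $V\setminus Y$ is disjoint from $Y$ (it contains $s$, which no component of $G[Y]$ does).

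For the complexity, \GENMRSWOUTRT runs in $O(n+m)$ time and space by \lemref{str_MRS_single_exclude}; outputting $V\setminus Y$ costs $O(|V\setminus Y|)=O(n)$ time (e.g.\ by \lemref{convert_MRS_to_MPSS}, reusing the dominator tree built inside \GENMRSWOUTRT, or simply by scanning $V$); and the strongly-connected components of $G[Y]$ are computed in $O(|Y|+|E[Y]|)=O(n+m)$ time and $O(n+m)$ space by a textbook linear-time algorithm. Summing yields $O(n+m)$ time and space. I expect the only non-routine point to be the reduction to \lemref{SSD_MRS}: once one sees that this lemma exactly captures the \MaxProSubSol-disjoint case and that $\textsc{MaxSS}_{\MS_1}(Y)$ is nothing but ``take strongly-connected components of $G[Y]$,'' everything else is bookkeeping; the subtlety most worth pinning down is the well-definedness of ``the unique output'' in \lineref{unique}, which is where \MaxProSubSol-disjointness is actually used.
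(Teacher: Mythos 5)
Your proposal is correct and follows essentially the same route as the paper: both reduce the correctness to \lemref{SSD_MRS} (identifying $\MPSS_1(V)$ with $\{V\setminus Y\}$ together with $\textsc{MaxSS}_{\MS_1}(Y)$, realized as the strongly-connected components of $G[Y]$) and invoke \lemref{str_MRS_single_exclude} plus a linear-time SCC computation for the complexity. The only cosmetic difference is that you derive the uniqueness of $Y$ directly from \MaxProSubSol-disjointness, while the paper cites the partition property of \thmref{partition}; these are equivalent, and your extra care on this point and on the identification of $\textsc{MaxSS}_{\MS_1}(Y)$ with the components of $G[Y]$ is welcome but not a different argument.
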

\begin{proof}
  (Correctness)
  Let $s\in V$.
  By \thmref{partition}, $\MPSS_1(V)$ is a partition of $V$, and hence
  there exists precisely one $X\in\MPSS_1(V)$ such that $s\in V$.
  Further, the corresponding \MinRemSet $Y:=V\setminus X$ satisfies $s\notin Y$
  and any other $Y'\in\MRS(V)$ satisfies $s\in Y$. 
  This $Y$ is generated in \lineref{unique}
  and $X=V\setminus Y$ is output in \lineref{unique2}.
  We have $\MPSS_1(V)=\textsc{MaxSS}_1(V)\sqcup\{X\}$ by \lemref{SSD_MRS},
  and all solutions in $\textsc{MaxSS}_1(V)$ are output
  in \lineref{unique3}. 
  
  \noindent
  (Complexity Analyses)
  By \lemref{str_MRS_single_exclude}, $Y$ can be obtained
  in $O(n + m)$ time and space.
  The strongly-connected components of a given digraph can be generated
  in $O(n + m)$ time and space. 
\end{proof}

\paragraph{Proof for \thmref{strong}.}
We can determine whether $G$ is \MinRemSet-disjoint
in $O(n + m)$ time by \lemref{str_MRS_determine_disjoint}.
We can also determine whether $G$ is \MaxProSubSol-disjoint or not
in additional $O(n)$ time computation. 
If $G$ is \MinRemSet-disjoint,
then we can generate \MaxProSubSols of $G$
in $O(n + m + N) = O(n + m + |X_1| + |X_2| + \dots + |X_q|)$ time
(\lemref{str_MRS_disjoint}).
Otherwise, 
we can generate them in $O(n + m)$ time (\lemref{str_MRS_not_disjoint})
since $G$ is \MaxProSubSol-disjoint in this case.
\QED

\section{Linear-Delay Enumeration of Solutions in Strongly-Connected Systems}
\label{sec:linear_delay}

In this section, we propose a linear-delay algorithm
that enumerates all vertex subsets
that induce strongly-connected subgraphs for a given
digraph $G=(V,E)$,
as a proof for \thmref{linear_delay}. 
For this purpose, we consider an enumeration problem
in a more general setting that asks to
enumerate all solutions in a given SSD system.
We show a linear-delay algorithm for this general problem
and then apply the algorithm to
a strongly-connected system $(V,\MS_{G,1})$
which is an SSD system (\lemref{k_edge_SSD}). 

An assumption that the SSD system $(U,\MS)$ is given explicitly
is meaningless in the context of enumeration
since otherwise the problem would be trivial;
it suffices to output $S\in\MS$ one by one.
As is done in previous work (e.g., \cite{AU.2009,BHPW.2010,Conte.2019,TH.2023}), 
we assume that it is given implicitly by means of oracles.
We take up three oracles, $\Lambda_{\textrm{id}}$ and $\Lambda_{\textrm{\MaxProSubSol}}$ and $\Lambda_{\textrm{\MinRemSet}}$,
whose roles are as follows. 
\begin{itemize}
\item $\Lambda_{\textrm{id}}(S)$ (identification): For a solution $S\in\MS$, it returns \True
  if $S$ is \MaxProSubSol-disjoint and \False otherwise. 
\item $\Lambda_{\textrm{\MaxProSubSol}}(S,I)$ (generation of \MaxProSubSols): For any \MaxProSubSol-disjoint solution $S\in\MS$
  and a subset $I\subseteq S$,
  it returns all \MaxProSubSols $X$ of $S$ such that $X\supseteq I$.
\item $\Lambda_{\textrm{\MinRemSet}}(S,I)$ (generation of \MinRemSets): For any \MinRemSet-disjoint solution $S\in\MS$
  and a subset $I\subseteq S$,
  it returns all \MinRemSets $Y$ of $S$ such that $Y\cap I=\emptyset$. 
\end{itemize}

Our strategy is binary partition, a fundamental strategy of enumeration. 
For any set system $(U=\{u_1,u_2,\dots,u_q\},\MS)$ (i.e., not necessarily SSD),
we have a partition of the solution set as follows;
\begin{align}
  \MS=\MS(U,\emptyset)&=\MS(U\setminus\{u_1\},\emptyset)\sqcup\MS(U,\{u_1\})\nonumber\\
  &=\MS(U\setminus\{u_1,u_2\},\emptyset)\sqcup\MS(U\setminus\{u_1\},\{u_2\})\sqcup\MS(U,\{u_1\})\nonumber\\
  &=\MS(\emptyset,\emptyset)\sqcup\big(\bigsqcup_{i=1}^q\MS(U\setminus\{u_1,\dots,u_{i-1}\},\{u_i\})\big).\label{eq:basic_partition}
\end{align}
Based on this partition,
we enumerate solutions in
each $\MS_{G,1}(V\setminus\{v_1,\dots,v_{i-1}\},\{v_i\})$,
$i\in[1,n]$ independently, where $V=\{v_1,v_2,\dots,v_n\}$.
We do not need to take into account the first term
$\MS_{G,1}(\emptyset,\emptyset)$ since it is empty. 
We partition each $\MS_{G,1}(V\setminus\{v_1,\dots,v_{i-1}\},\{v_i\})$
further by using a nice structure that is peculiar to SSD system. 
The following Lemmas~\ref{lem:enum_disunion}
and \ref{lem:enum_partition} provide the partition
of $\MS_{G,1}(V\setminus\{v_1,\dots,v_{i-1}\},\{v_i\})$,
which are
analogous to Lemmas~3 and 4 in \cite{TH.2023}
for SD system;
recall that SSD system is
an extension of SD system.


\begin{lem}
  \label{lem:enum_disunion}
  For an SSD system $(U,\MS)$, let $S\in\MS$ be a solution.
  If $S$ is not \MaxProSubSol-disjoint,
  then it holds that $\MS(S,I)=\MS(S\setminus Y,I)\sqcup\MS(S,I\cup Y)$
  for any nonempty subset $I\subseteq S$
  and $Y\in\MRS_\MS(S,I)$.
\end{lem}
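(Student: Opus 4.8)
The claim is a disjoint-union decomposition of $\MS(S,I)$ into $\MS(S\setminus Y,I)$ and $\MS(S,I\cup Y)$, where $S$ is not MaxPSS-disjoint, $I\subseteq S$ is nonempty, and $Y\in\MRS_\MS(S,I)$ (so $Y$ is a MinRS of $S$ with $Y\cap I=\emptyset$). The plan is to prove three things: (1) the two right-hand families are disjoint; (2) their union is contained in $\MS(S,I)$; and (3) $\MS(S,I)$ is contained in their union. Parts (1) and (2) are essentially bookkeeping, so I expect the real content to be in (3).

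\textbf{Disjointness and the easy inclusion.} For (1), a solution in $\MS(S\setminus Y,I)$ is a subset of $S\setminus Y$, hence disjoint from $Y$; a solution in $\MS(S,I\cup Y)$ contains $Y$, which is nonempty. Since $Y\neq\emptyset$ these two conditions are incompatible, so the union is disjoint. For (2), any $T\in\MS(S\setminus Y,I)$ satisfies $I\subseteq T\subseteq S\setminus Y\subseteq S$, so $T\in\MS(S,I)$; and any $T\in\MS(S,I\cup Y)$ satisfies $I\subseteq I\cup Y\subseteq T\subseteq S$, so again $T\in\MS(S,I)$. Thus $\MS(S\setminus Y,I)\sqcup\MS(S,I\cup Y)\subseteq\MS(S,I)$.

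\textbf{The main inclusion.} The hard part is showing $\MS(S,I)\subseteq\MS(S\setminus Y,I)\cup\MS(S,I\cup Y)$. Take $T\in\MS(S,I)$, so $I\subseteq T\subseteq S$. If $T=S$, then $T\supseteq Y$, so $T\in\MS(S,I\cup Y)$ and we are done; so assume $T\subsetneq S$, i.e.\ $T$ is a PSS of $S$. Then $T$ lies inside some MaxPSS $X'\in\MPSS_\MS(S)$. Now I invoke the SSD property together with the hypothesis that $S$ is \emph{not} MaxPSS-disjoint: by \lemref{SSD_basic}(i), since $S$ is not MaxPSS-disjoint with $|\MPSS_\MS(S)|\ge2$, the union of any two MaxPSSs of $S$ equals $S$. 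Writing $X:=S\setminus Y\in\MPSS_\MS(S)$, we get $X\cup X'=S$, equivalently $Y=S\setminus X\subseteq X'$. The SSD condition applied to $S\supsetneq X'$ and the MinRS $Y$ tells us $Y$ is a superset of, subset of, or disjoint from $X'$ — and we have just pinned it to the ``subset'' case $Y\subseteq X'$. This alone does not immediately place $T$; I need to relate $Y$ to $T$ itself, not just to $X'$. Here I use that $Y$ is a MinRS, so $X=S\setminus Y$ is a MaxPSS, and apply the SSD/confluence-style argument at the level of $T$: consider whether $Y\cap T=\emptyset$ or not. If $Y\cap T=\emptyset$, then $T\subseteq S\setminus Y=X$, so $T\in\MS(S\setminus Y,I)$. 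If $Y\cap T\neq\emptyset$, I aim to show $Y\subseteq T$, which gives $T\in\MS(S,I\cup Y)$. For this last implication I expect to mimic the proof of \lemref{SSD_basic}: $T$ extends to a MaxPSS $X'$ of $S$; by the ``every pair of MaxPSSs unions to $S$'' fact and minimality of $Y$, together with $Y\cap T\neq\emptyset$, the SSD trichotomy for $(S,X',Y)$ forces $Y\subseteq X'$, and then minimality of $Y$ as a MinRS (no proper nonempty subset of $Y$ is removable) combined with $Y\cap T\neq\emptyset$ rules out $Y$ straddling $T$'s boundary — the part of $Y$ outside $T$ would have to be avoidable, contradicting minimality. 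The delicate point, and the main obstacle, is making this ``straddling is impossible'' step rigorous using only the SSD axiom and minimality of $Y$; the analogous step for SD systems is Lemma~3 of \cite{TH.2023}, and I would follow that template closely, substituting the SSD trichotomy for the SD dichotomy and checking that the extra ``superset'' alternative is excluded here because $Y\subsetneq S$ while $Y\supseteq S'$ would force comparison with $|S'|$ — in our case the relevant $S'$ is $X'\supseteq T\ni$ (an element of $Y$), so $Y\supseteq X'$ is impossible once $Y\subsetneq X'\cup(\text{something})$; I will verify this cleanly in the write-up.
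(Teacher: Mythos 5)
Your disjointness argument, the easy inclusion, and the case $T=S$ are all fine, but the main inclusion is not actually proved: in the case $T\subsetneq S$ with $Y\cap T\neq\emptyset$ you need $Y\subseteq T$, and the argument you sketch for it does not work. Minimality of $Y$ says only that no proper nonempty subset of $Y$ is an \RemSet of $S$; it gives no information about whether $Y$ can meet $T$ without being contained in it, so the claim that ``the part of $Y$ outside $T$ would have to be avoidable, contradicting minimality'' is not a valid inference, and you yourself flag this step as unresolved. Routing through a MaxPSS $X'\supseteq T$ and \lemref{SSD_basic}(i) only gets you $Y\subseteq X'$, which is strictly weaker than what you need.

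The missing idea is that the SSD axiom is stated for an \emph{arbitrary} pair of solutions $S\supsetneq S'$, not just for $S$ versus a MaxPSS. Applying it directly to $S\supsetneq T$ and the \MinRemSet $Y$ of $S$ gives the trichotomy $Y\supseteq T$, $Y\subseteq T$, or $Y\cap T=\emptyset$. The first is impossible because $\emptyset\neq I\subseteq T$ while $Y\cap I=\emptyset$ (as $Y\in\MRS_\MS(S,I)$); the second puts $T$ in $\MS(S,I\cup Y)$; the third puts $T$ in $\MS(S\setminus Y,I)$. This is exactly the paper's proof, and it needs neither the hypothesis that $S$ is not MaxPSS-disjoint nor any case split on $Y\cap T$ beyond the trichotomy itself. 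Your detour through $X'$ introduces the obstacle rather than overcoming it.
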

\begin{proof}
  By the definition of $\MRS_\MS(S,I)$ (see \secref{prel_set}),
  $I$ and $Y$ are disjoint.
  The two families $\MS(S\setminus Y,I)$ and $\MS(S,I\cup Y)$ are disjoint
  since every solution in the former contains no element in $Y$,
  whereas every solution in the latter contains $Y$ as a subset.

  For the equality, it is obvious that
  $\MS(S,I)\supseteq \MS(S\setminus Y,I)\sqcup\MS(S,I\cup Y)$ holds.
  To show $\MS(S,I)\subseteq \MS(S\setminus Y,I)\sqcup\MS(S,I\cup Y)$,
  let $S'\in\MS(S,I)$. If $S'=S$, then $S'\in\MS(S,I\cup Y)$ holds.
  If $S'\subsetneq S$, then
  at least one of $Y\supseteq S'$, $Y\subseteq S'$ and $Y\cap S'=\emptyset$
  holds by SSD property.
  Among these, the first case $Y\supseteq S'$ should not happen
  since $Y\supseteq S'\supseteq I$ and hence $Y\cap I=I\ne\emptyset$, a contradiction.
  If $Y\subsetneq S'$, then $S'\in\MS(S,I\cup Y)$ holds. 
  If $Y\cap S'=\emptyset$, then $S'\in\MS(S\setminus Y,I)$ holds. 
\end{proof}

\begin{lem}
  \label{lem:enum_partition}
  For an SSD system $(U,\MS)$, let $S\in\MS$ be a solution.
  If $S$ is not \MaxProSubSol-disjoint, then
  for any nonempty subset $I\subseteq S$, it holds that
  \[
  \MS(S,I)=\{S\}\sqcup\big(\bigsqcup_{i=1}^q\MS(S\setminus Y_i,I\sqcup Y_1\sqcup\dots\sqcup Y_{i-1})\big),
  \]
  where $\MRS_\MS(S,I):=\{Y_1,Y_2,\dots,Y_q\}$. 
\end{lem}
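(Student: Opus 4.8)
The plan is to apply \lemref{enum_disunion} repeatedly, peeling off one MinRS at a time. Since $S$ is not MaxPSS-disjoint, \lemref{enum_disunion} applies, and starting from $\MS(S,I)$ with $Y_1\in\MRS_\MS(S,I)$ we get $\MS(S,I)=\MS(S\setminus Y_1,I)\sqcup\MS(S,I\sqcup Y_1)$. The key observation I would make is that $\MS(S,I\sqcup Y_1)$ is itself of the form $\MS(S,I')$ for the nonempty subset $I':=I\sqcup Y_1\subseteq S$, so \lemref{enum_disunion} applies again --- provided I supply a valid element of $\MRS_\MS(S,I')=\MRS_\MS(S,I\sqcup Y_1)$. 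The natural choice is $Y_2$; I need to check $Y_2\in\MRS_\MS(S,I\sqcup Y_1)$, i.e.\ that $Y_2$ is a MinRS of $S$ disjoint from $I\sqcup Y_1$. Disjointness from $I$ is given since $Y_2\in\MRS_\MS(S,I)$, and disjointness from $Y_1$ holds because $S$ is not MaxPSS-disjoint: by \thmref{disjoint} (or \lemref{SSD_basic_with_Y}(i)) $S$ is then MinRS-disjoint, so any two distinct MinRSs of $S$ are disjoint. Hence $Y_1,\dots,Y_q$ are pairwise disjoint and each $Y_i$ lies in $\MRS_\MS(S,I\sqcup Y_1\sqcup\dots\sqcup Y_{i-1})$.

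The main step is then a finite induction on $i$. Applying \lemref{enum_disunion} to $\MS(S,I\sqcup Y_1\sqcup\dots\sqcup Y_{i-1})$ with the MinRS $Y_i$ yields
\[
\MS(S,I\sqcup Y_1\sqcup\dots\sqcup Y_{i-1})=\MS(S\setminus Y_i,I\sqcup Y_1\sqcup\dots\sqcup Y_{i-1})\sqcup\MS(S,I\sqcup Y_1\sqcup\dots\sqcup Y_i),
\]
and chaining these equalities for $i=1,\dots,q$ gives
\[
\MS(S,I)=\Big(\bigsqcup_{i=1}^q\MS(S\setminus Y_i,I\sqcup Y_1\sqcup\dots\sqcup Y_{i-1})\Big)\sqcup\MS(S,I\sqcup Y_1\sqcup\dots\sqcup Y_q).
\]
It remains to identify the last term $\MS(S,I\sqcup Y_1\sqcup\dots\sqcup Y_q)$ with $\{S\}$. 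Clearly $S$ itself belongs to it. Conversely, if $S'\in\MS(S,I\sqcup Y_1\sqcup\dots\sqcup Y_q)$ and $S'\subsetneq S$, then $S'$ is a PSS of $S$, so $S'\subseteq X$ for some $X\in\MPSS_\MS(S)$, whence $Y:=S\setminus X\in\MRS_\MS(S)$ is disjoint from $S'$; but $Y$ is disjoint from $I$ as well (since $S'\supseteq I$ forces $Y\cap I\subseteq Y\cap S'=\emptyset$), so $Y\in\MRS_\MS(S,I)=\{Y_1,\dots,Y_q\}$, say $Y=Y_j$; then $S'\supseteq Y_j$ contradicts $S'\cap Y_j=\emptyset$ (noting $Y_j\neq\emptyset$). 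Hence $S'=S$, and the last term is exactly $\{S\}$.

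I expect no serious obstacle here; the argument is a straightforward telescoping of \lemref{enum_disunion}, and the only point requiring a little care is the bookkeeping that each successive $Y_i$ really is an admissible MinRS for the current index set, which reduces to the MinRS-disjointness of $S$ guaranteed by \thmref{disjoint} once $S$ is known not to be MaxPSS-disjoint. The disjointness of the union on the right is inherited termwise from \lemref{enum_disunion} together with the fact that the index sets $I\sqcup Y_1\sqcup\dots\sqcup Y_{i-1}$ are strictly increasing, so solutions in $\MS(S\setminus Y_i,\cdot)$ avoid $Y_i$ while those indexed by larger $i'$ contain $Y_i$, and $\{S\}$ is disjoint from all the proper-subset families.
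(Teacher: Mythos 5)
Your proof is correct and follows essentially the same route as the paper: telescoping \lemref{enum_disunion} over $Y_1,\dots,Y_q$, using \thmref{disjoint} to get MinRS-disjointness of $S$ so that each $Y_i$ remains an admissible MinRS for the growing index set, and then collapsing the final term to $\{S\}$. Your explicit argument that $\MS(S,I\sqcup Y_1\sqcup\dots\sqcup Y_q)=\{S\}$ (via a MaxPSS containing a hypothetical proper subset solution) merely spells out what the paper leaves as a one-line remark.
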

\begin{proof}
  By \lemref{enum_disunion},
  we have a partition $\MS(S,I)=\MS(S\setminus Y_1,I)\sqcup\MS(S,I\sqcup Y_1)$.
  By \thmref{disjoint}, $S$ is \MinRemSet-disjoint
  and hence $Y_1\cap Y_i=\emptyset$, $i\in[2,q]$ and $Y_i\in\MRS_\MS(S,I\sqcup Y_1)$.
  By definition, $\MRS_\MS(S,I\sqcup Y_1)$ is a subset of
  $\MRS_\MS(S,I)=\{Y_1,Y_2,\dots,Y_q\}$, and thus 
  $\MRS_\MS(S,I\sqcup Y_1)=\{Y_2,\dots,Y_q\}$. 
  Applying \lemref{enum_disunion} recursively, it holds that
  \begin{align*}
    \MS(S,I)
    &=\MS(S\setminus Y_1,I)\sqcup\MS(S,I\sqcup Y_1)\\
    &=\MS(S\setminus Y_1,I)\sqcup\MS(S\setminus Y_2,I\sqcup Y_1)\sqcup\MS(S,I\sqcup Y_1\sqcup Y_2)\\
    &=\big(\bigsqcup_{i=1}^q\MS(S\setminus Y_i,I\sqcup Y_1\sqcup\dots\sqcup Y_{i-1})\big)\sqcup\MS(S,I\sqcup Y_1\sqcup\dots\sqcup Y_q),
  \end{align*}
  where $\MS(S,I\sqcup Y_1\sqcup\dots\sqcup Y_q)=\{S\}$ holds
  since no \MinRemSet of $S$ that is disjoint with $I\sqcup Y_1\sqcup\dots\sqcup Y_q$ exists. 
\end{proof}

For an SSD system $(U,\MS)$ and a nonempty subset $I\subseteq U$,
let $S\in\MS$ be a solution such that $S\supseteq I$. 
If $S$ is \MaxProSubSol-disjoint, then
there is an immediate partition of $\MS(S,I)$ such that
\begin{align}
  \MS(S,I)=\{S\}\sqcup\big(\bigsqcup_{X\in\MPSS_\MS(S):\ I\subseteq X}\MS(X,I)\big),
  \label{eq:enum_dis}
\end{align}
where a \MaxProSubSol $X$ of $S$ with $I\subseteq X$ is unique
since $S$ is \MaxProSubSol-disjoint and $I$ is nonempty.
Otherwise (i.e., if $S$ is \MinRemSet-disjoint), 
\lemref{enum_partition} provides a partition of $\MS(S,I)$ such that
\begin{align}
  \MS(S,I)=\{S\}\sqcup\big(\bigsqcup_{i=1}^q\MS(S_i,I_i)\big),
  \label{eq:enum_no_dis}
\end{align}
where $\MRS_\MS(S,I)=\{Y_1,Y_2,\dots,Y_q\}$,
$S_i=S\setminus Y_i$ and $I_i=I\sqcup Y_1\sqcup\dots\sqcup Y_{i-1}$, $i\in[1,q]$.
%
We can enumerate all solutions in $\MS(S,I)$ as follows.
\begin{itemize}
\item If $S$ is \MaxProSubSol-disjoint, then we output solutions in $\MS(X,I)$
  for $X\in\MPSS_\MS(S)$ with $X\supseteq I$ recursively; and
\item otherwise, we output solutions in $\MS(S_i,I_i)$, $i\in[1,q]$
  recursively. 
\end{itemize}
The solution $S$ can be output
before or after the above solutions. 

This procedure is summarized in \algref{linear_delay}.
Let $\tau_{\textrm{id}}$, $\tau_{\textrm{\MaxProSubSol}}$ and $\tau_{\textrm{\MinRemSet}}$
denote upper bounds on the computation time of oracles
$\Lambda_{\textrm{id}}$, $\Lambda_{\textrm{\MaxProSubSol}}$ and $\Lambda_{\textrm{\MinRemSet}}$, respectively.
Let $\sigma_{\textrm{id}}$, $\sigma_{\textrm{\MaxProSubSol}}$ and $\sigma_{\textrm{\MinRemSet}}$
denote upper bounds on the computation space used by oracles
$\Lambda_{\textrm{id}}$, $\Lambda_{\textrm{\MaxProSubSol}}$ and $\Lambda_{\textrm{\MinRemSet}}$, respectively. 

\begin{algorithm}[t]
  \caption{An algorithm to enumerate all solutions in $\MS(S,I)$ for a solution $S$ in an SSD system and its nonempty subset $I\subseteq S$}
  \label{alg:linear_delay}
  \begin{algorithmic}[1]
    \Require An SSD system $(U,\MS)$ that is implicitly given
    by means of oracles $\Lambda_{\textrm{id}}$, $\Lambda_{\textrm{\MaxProSubSol}}$ and $\Lambda_{\textrm{\MinRemSet}}$,
    a solution $S\in\MS$ and a nonempty subset $I\subseteq S$ 
    \Ensure All solutions in $\MS(S,I)$
    \Procedure{EnumSSD}{$S,I,d$}
    \State Output $S$ {\bf if} $d$ is even;
    \If{$\Lambda_{\textrm{id}}(S)$ returns \True}\label{codeline:enum_id}
    \State $\MX:=\Lambda_{\textrm{\MaxProSubSol}}(S,I)$;
    \For{{\bf each} $X\in\MX$}
    \State Execute \Call{EnumSSD}{$X,I,d+1$} \label{codeline:enum_recur_dis}
    \EndFor    
    \Else
    \State $J:=I$; $\MY:=\Lambda_{\textrm{\MinRemSet}}(S,I)$;
    \For{{\bf each} $Y\in\MY$}
    \State Execute \Call{EnumSSD}{$S\setminus Y,J,d+1$}; $J:=J\cup Y$ \label{codeline:enum_recur_no_dis}
    \EndFor
    \EndIf;
    \State Output $S$ {\bf if} $d$ is odd
    \EndProcedure
  \end{algorithmic}
\end{algorithm}

\begin{thm}
  \label{thm:linear_delay_SSD}  
  Let $(U,\MS)$ be an SSD system that is implicitly given
  by oracles $\Lambda_{\textrm{id}}$, $\Lambda_{\textrm{\MaxProSubSol}}$ and $\Lambda_{\textrm{\MinRemSet}}$.
  For a solution $S\in\MS$ and a nonempty subset $I\subseteq S$, 
  Algorithm~{\rm\ref{alg:linear_delay}}
  enumerates all solutions in $\MS(S,I)$ in
  $O(|S|+\tau_{\textrm{id}}+\tau_{\textrm{\MaxProSubSol}}+\tau_{\textrm{\MinRemSet}})$ delay
  and in $O(|S|(|S|+\sigma_{\textrm{id}}+\sigma_{\textrm{\MaxProSubSol}}+\sigma_{\textrm{\MinRemSet}}))$ space. 
\end{thm}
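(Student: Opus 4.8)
The plan is to analyze the recursion tree of \algref{linear_delay} and establish three things: (1) correctness, i.e. the algorithm outputs exactly the solutions in $\MS(S,I)$, each once; (2) the delay bound; and (3) the space bound. For correctness, I would argue by induction on $|S|$. The key is that every recursive call \textsc{EnumSSD}$(S',I',d')$ is invoked with $S'\in\MS$, $I'\subseteq S'$ nonempty, and moreover $\MS(S,I)$ is the disjoint union of $\{S\}$ with the solution sets handled by the child calls. This is precisely \eqref{enum_dis} when $\Lambda_{\rm id}(S)$ returns {\sc True} (using that the unique MaxPSS containing $I$ is returned, since $I\neq\emptyset$ and $S$ is MaxPSS-disjoint), and \eqref{enum_no_dis} together with \lemref{enum_partition} when $S$ is MinRS-disjoint (here one checks that as $J$ grows to $I\sqcup Y_1\sqcup\dots\sqcup Y_{i-1}$, the call $\Lambda_{\rm MinRS}(S,\cdot)$ would return $\{Y_i,\dots,Y_q\}$, but we are passing the original $\MY$ and stepping through it, so the bookkeeping $J:=J\cup Y$ lines up with $I_i$). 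Each leaf of the recursion corresponds to a solution $S$ with no proper subset solution refining $I$, and the union of leaf-to-root chains partitions $\MS(S,I)$; since $S$ itself is emitted at every internal node, every solution is output exactly once. The parity trick (output $S$ before children if $d$ is even, after if $d$ is odd) is the standard alternating-output device to bound the delay.

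For the delay bound, I would invoke the alternating output technique: because $S$ is printed either on the way down (even depth) or on the way up (odd depth), between any two consecutive outputs the algorithm traverses at most a bounded number of edges of the recursion tree, and at each node it spends $O(|S'|+\tau_{\rm id}+\tau_{\rm MaxPSS}+\tau_{\rm MinRS})$ time for the oracle calls plus manipulation of sets of size $O(|S'|)\le O(|S|)$. The crucial point making the depth-change between outputs bounded is that every internal node produces at least one output (namely its own $S'$), and that the tree has no node with exactly one child that produces no output — each recursive call shrinks the solution, so every branch terminates and every node is ``useful.'' Hence the delay is $O(|S|+\tau_{\rm id}+\tau_{\rm MaxPSS}+\tau_{\rm MinRS})$. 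For the space bound, the recursion depth is $O(|S|)$ since each call strictly decreases the size of the solution, and at each level we store the current $S'$, $I'$, $J$, and the list $\MX$ or $\MY$ returned by an oracle, each of size $O(|S|+\sigma_{\rm MaxPSS}+\sigma_{\rm MinRS})$, plus the oracle's own working space $O(\sigma_{\rm id}+\sigma_{\rm MaxPSS}+\sigma_{\rm MinRS})$; multiplying by the depth gives $O(|S|(|S|+\sigma_{\rm id}+\sigma_{\rm MaxPSS}+\sigma_{\rm MinRS}))$.

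The main obstacle I anticipate is the careful accounting for the delay in the branch where $S$ is MinRS-disjoint and $\MY=\{Y_1,\dots,Y_q\}$ is large. Here the loop over $\MY$ makes $q$ recursive calls from a single node, and between finishing the subtree for $Y_{i-1}$ and the first output in the subtree for $Y_i$, the algorithm may do work proportional to preparing $J$ and launching \textsc{EnumSSD}$(S\setminus Y_i,J,d+1)$. I would need to argue that this inter-subtree work is still $O(|S|)$ amortized — or better, that the first output of the subtree for $Y_i$ happens within $O(|S|+\tau_{\rm MinRS})$ time of entering it, because $S\setminus Y_i$ is itself a solution and will be emitted immediately at even depth (or after one more level). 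One subtlety is that $S\setminus Y_i$ need not be MaxPSS-disjoint, so the subtree could recurse further before its first ``up-output,'' but the ``down-output at even depth'' line guarantees an output within one level regardless. The second potential snag is verifying that $\Lambda_{\rm MinRS}(S,I)$ is only ever called on a MinRS-disjoint $S$ and $\Lambda_{\rm MaxPSS}(S,I)$ only on a MaxPSS-disjoint $S$, which follows from \thmref{disjoint} (every solution is one or the other) and the branching on $\Lambda_{\rm id}(S)$; but one must confirm the oracles' preconditions are met in every recursive call, which is part of the induction hypothesis in step (1).
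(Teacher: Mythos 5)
Your proposal is correct and follows essentially the same route as the paper: correctness via the partitions \eqref{enum_dis} and \eqref{enum_no_dis} (i.e., \lemref{enum_partition}), the delay bound via the alternating-output (``alternative method'') device with the per-node cost $O(|S|+\tau_{\rm id}+\tau_{\rm MaxPSS}+\tau_{\rm MinRS})$ coming from the disjointness of the returned families $\MX$, $\MY$ and the amortized $O(|S|)$ maintenance of $J$, and the space bound from the recursion depth being at most $|S|$. The subtleties you flag (oracle preconditions guaranteed by \thmref{disjoint} and the branching on $\Lambda_{\rm id}$, and the inter-subtree accounting in the MinRS branch) are exactly the points the paper's proof handles implicitly through the same mechanism.
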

\begin{proof}
  The procedure \Call{EnumSSD}{$S,I,d$}
  outputs $S$ precisely in its execution.
  If $S$ is \MaxProSubSol-disjoint, which is identified in \lineref{enum_id},
  then all solutions in the second term of the right hand in \eqref{enum_dis}
  are output recursively (\lineref{enum_recur_dis}), and otherwise,
  all solutions in the second term of the right hand in \eqref{enum_no_dis}
  are output recursively (\lineref{enum_recur_no_dis}). These show the correctness.

  We analyze the complexity. Observe that, in the procedure \Call{EnumSSD}{$S,I,d$},
  the input $d$ represents the depth of the search tree, where
  we change the timing when $S$ is output according to the parity of $d$.
  This is called the \emph{alternative method}~\cite{Uno.2003},
  by which the delay can be bounded by the computation time of a single node,
  that is, a single execution of \Call{EnumSSD}{$S,I,d$} ignoring recursive calls.
  We see that the three oracles are called at most once;
  the two for-loops are repeated at most $|S|$ times
  since $\MX$ and $\MY$ are pairwise disjoint, respectively; and
  the processing of $J$ can be done in $O(|S|)$ time
  over the procedure.
  We see that \Call{EnumSSD}{$S,I,d$} can be done
  in $O(|S|+\tau_{\textrm{id}}+\tau_{\textrm{\MaxProSubSol}}+\tau_{\textrm{\MinRemSet}})$ time and
  in $O(|S|+\sigma_{\textrm{id}}+\sigma_{\textrm{\MaxProSubSol}}+\sigma_{\textrm{\MinRemSet}})$ space.
  The height of the search tree is at most $|S|$, and then
  we have the required complexities of delay and space. 
\end{proof}


\paragraph{Proof for \thmref{linear_delay}.}
For a digraph $G=(V,E)$,
the strongly-connected system $(V,\MS_1)$
is an SSD system by \lemref{k_edge_SSD}.
Let $V=\{v_1,v_2,\dots,v_n\}$.
To generate all solutions in $\MS_1$,
we partition $\MS_1$ into $\MS_1(V\setminus\{v_1,v_2,\dots,v_{i-1}\},\{v_i\})$,
$i\in[1,n]$, based on \eqref{basic_partition}.
The solution set $\MS_1(V\setminus\{v_1,v_2,\dots,v_{i-1}\},\{v_i\})$
is not empty since $\{v_i\}$ belongs to this set, and
the only inclusion-maximal solution in 
$\MS_1(V\setminus\{v_1,v_2,\dots,v_{i-1}\},\{v_i\})$
is the strongly-connected component
that contains the vertex $v_i$
in the subgraph $G-\{v_1,v_2,\dots,v_{i-1}\}$.
Denoting by $S_i$ this maximal solution,
we can enumerate all solutions in
$\MS_1(V\setminus\{v_1,v_2,\dots,v_{i-1}\},\{v_i\})$
by executing 
\Call{EnumSSD}{$S_i,\{v_i\},0$} in \algref{linear_delay}.
The correctness follows by \thmref{linear_delay_SSD}. 

We can generate each $S_i$, $i\in[1,n]$ in $O(n+m)$ time. 
The three oracles for $(V,\MS_1)$ can be implemented
so that the time complexity is linear,
by which the theorem is proved by \thmref{linear_delay_SSD}. 
\begin{description}
\item[$\Lambda_{\textrm{id}}(S)$:]
  If $|S|\ge2$, then run
  \algref{MRS_determine_disjointness} with
  $G[S]$ as the input.
  Otherwise, it suffices to output just \True. 
  The computation time is $O(n+m)$ by
  \lemref{str_MRS_determine_disjoint}.
\item[$\Lambda_{\textrm{\MaxProSubSol}}(S,I)$:]
  If $|S|\ge2$, then
  run \algref{MPS_disjoint}
  with $G[S]$ as the input.
  Otherwise, it suffices to output $\emptyset$. 
  The computation time is $O(n+m)$ by
  \lemref{str_MRS_not_disjoint}.
\item[$\Lambda_{\textrm{\MinRemSet}}(S,I)$:]
  If $|S|\ge2$, then run \algref{MRS_disjoint}
  with $G[S]$ as the input
  and extract all generated \MinRemSets.
  Otherwise, it suffices to output $\emptyset$. 
  The computation time is $O(n+m)$ by
  \lemref{str_MRS_disjoint}.
\end{description}
\hfill\QED

\section{A Novel Sufficient Condition on Existence of Hamiltonian Cycles}
\label{sec:hamilton}

In this section, we show a proof for \thmref{hamilton} that
states a sufficient condition on the existence of a Hamiltonian cycle
in a digraph. 

Let $G=(V,E)$ be a strongly-connected \MaxProSubSol-disjoint graph.
Let $\MX_G:=\MPSS_{G,1}(V)$. 
We define an auxiliary digraph $H_G\triangleq(\MX_G,\MY_G)$, where
$\MY_G$ is the set of arcs that is defined to be
\[
\MY_G\triangleq
\{(X,X')\in\MX_G\times\MX_G\mid X\ne X',\ \exists u\in X,\ \exists u'\in X',\ (u,u')\in E\}. 
\]

\begin{lem}
  \label{lem:hamilton}
  Let $G=(V,E)$ be a strongly-connected \MaxProSubSol-disjoint graph such that $n=|V|\ge2$. 
  Let $q:=|\MX_G|$. 
  \begin{description}
    \item[\rm (i)] There is no simple $\ell$-cycle in $H_G$ for any $2\le\ell<q$. 
    \item[\rm (ii)] There is a Hamiltonian cycle in $H_G$. 
    \item[\rm (iii)] There is no simple $p$-cycle in $G$ for any $2\le p<n$
      that visits at least one vertex for every \MaxProSubSol in $\MX_G$. 
    \item[\rm  (iv)] There is a Hamiltonian cycle in $G$. 
  \end{description}
\end{lem}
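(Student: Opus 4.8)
\textit{Proof proposal.}
The plan is to prove the four parts in the stated order, deriving (i) and (iii) from a single fact — that $\MX_G=\MPSS_{1}(V)$ is a partition of $V$ (\thmref{partition}, since $G$ is MaxPSS-disjoint) — and then lifting (ii) from (i) and (iv) from (ii) and (iii). Note at the outset that $q\ge2$: by \lemref{num_MRSs}(ii) we have $|\MRS_1(V)|\ge2$, hence $|\MPSS_1(V)|=q\ge2$ (equivalently, a partition of $V$ into proper subsets needs at least two parts).

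For (i), suppose for contradiction that $X_{i_1}\to X_{i_2}\to\dots\to X_{i_\ell}\to X_{i_1}$ is a simple cycle in $H_G$ with $2\le\ell<q$, and put $W:=X_{i_1}\cup\dots\cup X_{i_\ell}$. Each $G[X_{i_j}]$ is strongly-connected (being a solution of $\MS_1$), and for each consecutive pair the edge defining the arc $X_{i_j}\to X_{i_{j+1}}$ lets one pass from $X_{i_j}$ into $X_{i_{j+1}}$; chaining these observations around the cycle shows that $G[W]$ is strongly-connected, i.e. $W\in\MS_1$. Since $\ell<q$ and the parts are nonempty, $W\subsetneq V$, so $W$ is a PSS of $V$ and therefore contained in some $X\in\MX_G$. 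But then $X_{i_1}\subseteq W\subseteq X$ with $X_{i_1},X\in\MX_G$ forces $X_{i_1}=X$, hence $W=X_{i_1}$, and $X_{i_2}\subseteq W=X_{i_1}$ contradicts the disjointness of the partition. For (ii), observe that $H_G$ is strongly-connected: projecting any $u,v$-path of $G$ onto the sequence of parts it meets (deleting immediate repeats) produces a walk in $H_G$ from the part of $u$ to the part of $v$. A strongly-connected digraph on $\ge2$ vertices contains a simple cycle of length $\ge2$; by (i) every such cycle has length $\ge q$, while a simple cycle of $H_G$ has length $\le q$; hence it has length exactly $q$ and is a Hamiltonian cycle.

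Part (iii) repeats the mechanism of (i): a simple $p$-cycle $C$ with $2\le p<n$ makes $G[V(C)]$ strongly-connected with $V(C)\subsetneq V$, so $V(C)$ is a PSS of $V$ and lies inside a single part $X\in\MX_G$; since $q\ge2$ there is another part $X'\ne X$, and if $C$ meets $X'$ then $V(C)$ would contain a vertex of $X'\cap X=\emptyset$, a contradiction. Finally, for (iv): by (ii) relabel $\MX_G$ so that $X_1\to X_2\to\dots\to X_q\to X_1$ is a Hamiltonian cycle of $H_G$; for each $i$ (indices mod $q$) fix an edge $(a_i,b_{i+1})\in E$ with $a_i\in X_i$ and $b_{i+1}\in X_{i+1}$, and, using that $G[X_i]$ is strongly-connected, pick a simple directed path $P_i$ in $G[X_i]$ from $b_i$ to $a_i$ (the one-vertex path when $b_i=a_i$). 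Because the $X_i$ are pairwise disjoint and each $P_i$ is simple, concatenating $P_1,(a_1,b_2),P_2,(a_2,b_3),\dots,P_q,(a_q,b_1)$ yields a simple cycle $C$ of $G$ that meets $b_i\in X_i$ for every $i$, hence meets every part of $\MX_G$; its length equals $\sum_i|V(P_i)|$, which is $\ge q\ge2$ and $\le\sum_i|X_i|=n$. By (iii) no simple cycle meeting all parts has length in $[2,n-1]$, so $|V(C)|=n$, i.e. $C$ is a Hamiltonian cycle of $G$.

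The one place where care is needed is the lifting step in (iv): one might fear having to produce a \emph{Hamiltonian} path inside each $G[X_i]$, which strong-connectivity alone does not guarantee. The resolution is to notice that (iii) asks only for a cycle touching each part, so an arbitrary simple $b_i,a_i$-path in $G[X_i]$ suffices and the difficulty disappears. The genuine content of the lemma is thus the uniform observation, used in (i) and (iii), that a strongly-connected induced subgraph on a proper vertex subset of $V$ must sit inside one block of the MaxPSS-partition.
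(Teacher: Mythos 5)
Your proof is correct and follows essentially the same route as the paper's: short cycles in $H_G$ (resp.\ in $G$ hitting every part) would yield a strongly-connected proper induced subgraph violating the maximality/disjointness of the partition $\MX_G$, and the Hamiltonian cycle of $H_G$ is lifted to one of $G$ via (iii). Your version of (iii) is a mild streamlining (placing $V(C)$ directly inside a single block rather than the paper's argument with the sets $X_i'$ and $Z$), and your (iv) spells out the lifting construction that the paper only sketches, but the underlying argument is identical.
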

\begin{proof}
  Let $\MX_G:=\{X_1,X_2,\dots,X_q\}$. Observe that $q\ge2$ holds by \lemref{num_MRSs}(ii). 

  \noindent
      (i) If there is a simple $\ell$-cycle,
      say $X_1\to X_2\to\dots\to X_\ell\to X_1$,
      then $X_1\cup X_2\cup\dots\cup X_\ell\subsetneq V$
      would induce a strongly-connected
      subgraph of $G$, contradicting the maximality of $X_1,X_2,\dots,X_\ell$.

  \noindent
      (ii) If $H_G$ is acyclic, then $G$ would not be strongly-connected.
      The possible length of a simple cycle in $H_G$
      is $q$ by (i). 

  \noindent
      (iii) Suppose that such a simple $p$-cycle exists in $G$.
      Let $X'_i\subseteq X_i$, $i\in[1,q]$ denote the nonempty
      subset of vertices in $X_i$
      that are visited by the simple $p$-cycle,
      where $X'_j\subsetneq X_j$ holds for some $j\in[1,q]$ by $p<n$.       
      We see that $X':=X'_1\cup X'_2\cup\dots\cup X'_q$ induces
      a strongly-connected subgraph of $G$.
      There exists a \MaxProSubSol $Z$ of $V$ such that $Z\supseteq X'$.
      Let $Z_i:=Z\cap X_i$, $i\in[1,q]$,
      where $X'_i\subseteq Z_i\subseteq X_i$ holds.
      There is $j'\in[1,q]$ such that $Z_{j'}\subsetneq X_{j'}$
      since $Z$ is a PSS of $V$.
      However, such $Z$ would intersect $X_{j'}$,
      contradicting that $V$ is MaxPSS-disjoint. 
      
  \noindent
      (iv) By (ii), there is a simple $q$-cycle in $H_G$.
      From the $q$-cycle in $H_G$, 
      we can construct a simple cycle in $G$ that visits
      at least one vertex of every $X_1,X_2,\dots,X_q$.      
      The length of the simple cycle must be $n$ by (iii). 
\end{proof}

\paragraph{Proof for \thmref{hamilton}.}
It is immediate by \lemref{hamilton}(iv)
since, if $G=(V,E)$ is Hamiltonian,
any supergraph $\hat{G}=(V,\hat{E})$
for $E\subseteq\hat{E}\subseteq V\times V$
is also Hamiltonian. 
\QED

In the reminder of this section,
let us observe how \MaxProSubSol-disjointness or \MinRemSet-disjointness
changes by adding edges to/deleting edges from a strongly-connected digraph. 
\propref{mono} shows the monotonicity of these two disjointnesses
for $k$-edge-connectivity. 
\begin{prop}
  \label{prop:mono}
  Let $G=(V,E)$ be a $k$-edge-connected graph such that $|V|\ge2$.
  \begin{itemize}
  \item[\rm (i)]  If $G$ is \MinRemSet-disjoint,
    then $G+e'$ is \MinRemSet-disjoint
    for any $e'\in\bar{E}$.
  \item[\rm (ii)] If $G$ is \MaxProSubSol-disjoint,
    then $G-e$ is \MaxProSubSol-disjoint
    for any $e\in E$ such that $G-e$ is $k$-edge-connected. 
  \end{itemize}
\end{prop}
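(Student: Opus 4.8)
The plan is to exploit two standard facts about $k$-edge-connected systems. First, $\MS_{G',k}\subseteq\MS_{G'',k}$ whenever $G''$ is obtained from $G'$ by adding edges, since edge-connectivity only grows under edge addition; in particular $V$ is a solution in $G$, in $G+e'$, and in $G-e$ (the last because $G-e$ is assumed $k$-edge-connected). Second, I will use the SSD characterizations of the two disjointnesses (\lemref{SSD_basic}(i) and \lemref{SSD_basic_with_Y}(ii), valid since $\MS_{k}$ is SSD by \lemref{k_edge_SSD}) together with \lemref{num_MRSs}(ii), which guarantees $|\MRS_{k}(V)|\ge2$ for any $k$-edge-connected graph on $\ge2$ vertices. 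Both parts then amount to transporting an intersecting pair of MaxPSSs, or a disjoint pair of MinRSs, of $V$ between $G$ and $G\pm e$.

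For (i): $G+e'$ is again $k$-edge-connected, so $V$ is still a solution there. Since $G$ is $k$-edge-connected with $|V|\ge2$, \lemref{num_MRSs}(ii) gives $|\MRS_{k}(V)|\ge2$ in $G$, and MinRS-disjointness of $G$ makes any two distinct MinRSs $Y_1,Y_2$ of $V$ in $G$ disjoint. Each $X_i:=V\setminus Y_i$ is a solution of $G$, hence of $G+e'$, and a proper subset of $V$, so it extends to a MaxPSS $X_i'$ of $V$ in $G+e'$; then $Y_i':=V\setminus X_i'\subseteq Y_i$ is a MinRS of $V$ in $G+e'$. Now $Y_1'\cap Y_2'\subseteq Y_1\cap Y_2=\emptyset$, and $Y_1'\ne Y_2'$ because a MinRS is nonempty, so $\MRS_{k}(V)$ in $G+e'$ has at least two (disjoint) members and \lemref{SSD_basic_with_Y}(ii) yields that $G+e'$ is MinRS-disjoint.

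For (ii): I will argue the contrapositive. Suppose $G-e$ is not MaxPSS-disjoint. Then there are two distinct MaxPSSs $X_1,X_2$ of $V$ in $G-e$ with $X_1\cap X_2\ne\emptyset$, and by the SSD property of $\MS_{G-e,k}$ (\lemref{SSD_basic}(i)) also $X_1\cup X_2=V$. Since $\MS_{G-e,k}\subseteq\MS_{G,k}$, each $X_i$ is a solution of $G$ and a proper subset of $V$, hence extends to a MaxPSS $Z_i\supseteq X_i$ of $V$ in $G$. Then $Z_1\cup Z_2=V$ and $Z_1\cap Z_2\supseteq X_1\cap X_2\ne\emptyset$; moreover $Z_1\ne Z_2$, since $Z_1=Z_2$ would force $Z_1=Z_1\cup Z_2=V$, impossible for a MaxPSS. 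So $G$ has two distinct intersecting MaxPSSs of $V$ and is not MaxPSS-disjoint, which proves (ii).

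The routine ingredients are the system inclusions and the step ``every PSS of $V$ lies in some MaxPSS of $V$.'' The only point needing care — the main obstacle — is part (i): a MinRS of $G$ need not remain \emph{minimal} in $G+e'$, so $Y_1,Y_2$ cannot simply be reused; one must pass to genuine MinRSs $Y_i'\subseteq Y_i$ of $G+e'$ and then check that this shrinking keeps them distinct and disjoint, which is precisely what nonemptiness of MinRSs supplies.
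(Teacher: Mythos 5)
Your proof is correct. Part (i) is essentially the paper's own argument: take two (disjoint) MinRSs $Y_1,Y_2$ of $V$ in $G$, observe each remains an RS of $G+e'$ and hence contains a MinRS $Y_i'$ of $G+e'$, and conclude from $Y_1'\cap Y_2'\subseteq Y_1\cap Y_2=\emptyset$; you are slightly more careful than the paper in explicitly noting $Y_1'\ne Y_2'$ and in citing \lemref{SSD_basic_with_Y}(ii) to pass from one disjoint pair to full MinRS-disjointness. For part (ii) you diverge usefully: the paper merely asserts that (ii) is immediate from (i), which via \thmref{disjoint} handles the case where $G$ is not MinRS-disjoint but leaves the case where $G$ is simultaneously MaxPSS- and MinRS-disjoint unaddressed. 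Your direct contrapositive — lift two intersecting MaxPSSs of $G-e$ (whose union is $V$ by \lemref{SSD_basic}(i)) to MaxPSSs of $G$ and check they stay distinct and intersecting — is self-contained, avoids \thmref{disjoint} entirely, and actually supplies the detail the paper's one-line justification glosses over.
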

\begin{proof}
  (i)
  Let $G'=(V,E\cup\{e'\})$. Clearly, $G'$ is $k$-edge-connected.
  For any $Y_1\in\MRS_{G,k}(V)$, we see that $(G+e')-Y_1$ is $k$-edge-connected
  since $G-Y_1$ is $k$-edge-connected. Hence $Y_1$ is an \RemSet of $G'$,
  and there exists $Y_1'\in\MRS_{G',k}(V)$ such that $Y_1'\subseteq Y_1$.
  By \lemref{num_MRSs}(ii), there is $Y_2\in\MRS_{G,k}(V)$ such that $Y_2\ne Y_1$.
  Similarly, there exists $Y_2'\in\MRS_{G',k}(V)$ such that $Y_2'\subseteq Y_2$.
  The graph $G$ is \MinRemSet-disjoint, and hence $Y'_1\cap Y'_2\subseteq Y_1\cap Y_2=\emptyset$ holds,
  as required.
  (ii) is immediate from (i). 
\end{proof}

A strongly-connected \MaxProSubSol-disjoint digraph can be turned
into \MinRemSet-disjoint by adding an edge. 
\begin{prop}
  \label{prop:add_MaxPSS}
  Let $G=(V,E)$ be a strongly-connected \MaxProSubSol-disjoint digraph such that $|V|\ge 3$.
  There is $e'\in\bar{E}$ such that $G+e'$ is \MinRemSet-disjoint.
\end{prop}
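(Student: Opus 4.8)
The plan is to reduce, through \thmref{partition}, to exhibiting a single nonedge $e'$ for which $G+e'$ is not MaxPSS-disjoint, or is MaxPSS-disjoint with only two MaxPSSs (the latter being MinRS-disjoint by \lemref{sametime}, since two disjoint MaxPSSs necessarily partition $V$). Two reductions are immediate: a complete digraph on $\ge3$ vertices is not MaxPSS-disjoint (its maximal proper subset solutions are the sets $V\setminus\{v\}$, which pairwise intersect), so $G$ is not complete and $\bar E\ne\emptyset$; and if $G$ is already MinRS-disjoint, then by \propref{mono}(i) (with $k=1$) so is $G+e'$ for every $e'\in\bar E$. So I would assume $G$ is MaxPSS-disjoint but not MinRS-disjoint and write $\MPSS_{G,1}(V)=\{X_1,\dots,X_q\}$; by \thmref{partition} this is a partition, by \lemref{num_MRSs}(ii) $|\MRS_{G,1}(V)|\ge2$ hence $q\ge2$, and since $\MRS_{G,1}(V)=\{V\setminus X_1,\dots,V\setminus X_q\}$ with $(V\setminus X_i)\cap(V\setminus X_j)=V\setminus(X_i\cup X_j)$ empty only when $q=2$, failure of MinRS-disjointness forces $q\ge3$.

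Next I would read off the structure of such a $G$. For each $i$ we have $V\setminus X_i\in\MRS_{G,1}(V)$, so \lemref{str_any_MRS_path} applies with $Y=V\setminus X_i$, $X=X_i$: part~(iii) gives $|\outNeigh{G}{X_i}|=|\inNeigh{G}{X_i}|=1$, say $\outNeigh{G}{X_i}=\{\alpha_i\}$, $\inNeigh{G}{X_i}=\{\beta_i\}$, so every arc leaving $X_i$ ends at $\alpha_i$ and every arc entering $X_i$ starts at $\beta_i$; consequently each $X_i$ has in- and out-degree $1$ in $H_G$, and as $H_G$ is strongly connected it is a single directed cycle, which after relabeling reads $X_1\to X_2\to\cdots\to X_q\to X_1$, with $\alpha_i\in X_{i+1}$, $\beta_i\in X_{i-1}$ (indices mod $q$), the only arc between consecutive blocks $X_{i-1},X_i$ being $(\beta_i,\alpha_{i-1})$, and no arc joining non-consecutive blocks. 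Part~(i) of the same lemma gives a Hamiltonian path of $G[V\setminus X_i]$ from $\alpha_i$ to $\beta_i$. I then set $e':=(\beta_1,\alpha_1)$; since $\beta_1\in X_q$, $\alpha_1\in X_2$, and (using $q\ge3$) $G$ has no arc from $X_q$ to $X_2$, we have $e'\in\bar E$.

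Finally I would verify that $G+e'$ is MinRS-disjoint. The Hamiltonian path of $G[V\setminus X_1]$ from $\alpha_1$ to $\beta_1$, closed up by $e'=(\beta_1,\alpha_1)$, is a Hamiltonian cycle of $(G+e')[V\setminus X_1]$, so $V\setminus X_1$ is a solution of $G+e'$; and $X_1$ is a solution of $G+e'$ too. Pick MaxPSSs $W\supseteq X_1$ and $Z\supseteq V\setminus X_1$ of $G+e'$. Then $W\ne Z$ (otherwise $W=Z\supseteq V$); if $W\cap Z\ne\emptyset$, then $G+e'$ is not MaxPSS-disjoint and we finish by \thmref{partition}. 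If $W\cap Z=\emptyset$, then $W=X_1$ and $Z=V\setminus X_1$ (a strict enlargement of $X_1$ would meet $V\setminus X_1\subseteq Z$, and symmetrically), so $G+e'$ has the two disjoint MaxPSSs $X_1$ and $V\setminus X_1$ covering $V$; any further MaxPSS, being contained in neither $X_1$ nor $V\setminus X_1$ by maximality, would meet $X_1$ and again destroy MaxPSS-disjointness, whereas if there is none, $G+e'$ has exactly two MaxPSSs and is MinRS-disjoint by \lemref{sametime}. Either way $G+e'$ is MinRS-disjoint. I expect this last step — pinning down every MaxPSS of $G+e'$ — to be the delicate point: adding $e'$ does make $V\setminus X_1$ strongly connected, but one must exclude the possibility that $G+e'$ simply re-stabilizes as a MaxPSS-disjoint digraph with three or more parts, and it is precisely the block-cycle structure forced by \lemref{str_any_MRS_path}, together with $q\ge3$ (ultimately the hypothesis $|V|\ge3$), that rules this out.
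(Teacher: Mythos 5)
Your proof is correct, but it takes a genuinely different route from the paper's. The paper's proof is three lines: it invokes \thmref{hamilton} to obtain a Hamiltonian cycle $v_1\to v_2\to\dots\to v_n\to v_1$ of $G$, observes via \lemref{hamilton}(iii) that every chord $(v_i,v_j)$ with $j\ge i+2$ is a non-edge, and asserts that adding any such chord makes the graph MinRS-disjoint, leaving the verification implicit. You bypass the Hamiltonian-cycle machinery entirely and instead extract the block-cycle structure of a strongly-connected MaxPSS-disjoint digraph directly from \lemref{str_any_MRS_path}(i),(iii), then add the specific non-edge $e'=(\beta_1,\alpha_1)$ that closes the Hamiltonian path of $G[V\setminus X_1]$ into a cycle --- which is in fact one of the paper's chords, namely the one that skips exactly the block $X_1$. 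What your version buys is completeness: you carry out in full the step the paper elides (why $G+e'$ is MinRS-disjoint), by pinning down the possible MaxPSSs of $G+e'$ containing $X_1$ and $V\setminus X_1$ and falling back on \thmref{partition} and \lemref{sametime}; and you isolate the degenerate case $q=2$, where $G$ is already MinRS-disjoint and \propref{mono}(i) applies, correctly supplying $\bar E\ne\emptyset$ from the observation that a complete digraph on at least three vertices is not MaxPSS-disjoint. The cost is length: the paper gets its non-edge essentially for free from the already-proved \lemref{hamilton}. One small misattribution in your closing remark: the hypothesis $|V|\ge3$ is really only needed in the $q=2$ branch (to guarantee a non-edge exists at all --- the proposition is false for the $2$-cycle); in the $q\ge3$ branch the exclusion of a ``re-stabilized'' MaxPSS-disjoint $G+e'$ with three or more parts is accomplished by your case analysis on $W$ and $Z$ together with \thmref{disjoint}, independently of $|V|$.
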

\begin{proof}
  In $G$, there is a Hamiltonian cycle by \thmref{hamilton},
  say $v_1\to v_2\to\dots\to v_n\to v_1$.
  For $i,j\in[1,n]$, no edge $(v_i,v_j)$ exists if $i+2\le j$
  by \lemref{hamilton}(iii).
  Adding such $(v_i,v_j)$ to $G$ makes the graph \MinRemSet-disjoint. 
\end{proof}

Concerning \propref{add_MaxPSS}, we can find an edge $e'$ to be added
in linear time by constructing the auxiliary digraph $H_G$ and examining it.

Let us observe Propositions~\ref{prop:mono} and \ref{prop:add_MaxPSS}
from another point of view. Let $G=(V,E)$ be a strongly-connected digraph
and $F\subseteq E$ be a subset such that $(V,F)$ is strongly-connected.
\begin{itemize}
\item If $(V,F)$ is \MinRemSet-disjoint, then $(V,F')$ is \MinRemSet-disjoint
  for any $F\subseteq F'\subseteq E$ (\propref{mono}(i)). 
\item If $(V,F)$ is \MaxProSubSol-disjoint, then there is an edge  $e\in E\setminus F$
  such that $(V,F\cup\{e\})$ is \MinRemSet-disjoint or no such $e$ exists (\propref{add_MaxPSS}).
  In the latter case, $G$ is \MaxProSubSol-disjoint. 
\end{itemize}
Then can we make a strongly-connected \MinRemSet-disjoint digraph
into \MaxProSubSol-disjoint by deleting edges?
Unfortunately, as expected, this problem is computationally hard. 

\begin{prop}
  \label{prop:NPhard}
  Let $G=(V,E)$ be a strongly-connected \MinRemSet-disjoint digraph.
  There is no polynomial-time algorithm for deciding whether
  there is $F\subsetneq E$ such that $(V,F)$ is
  a strongly-connected \MaxProSubSol-disjoint digraph unless $\mathcal{P}=\mathcal{NP}$. 
\end{prop}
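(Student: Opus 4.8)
My plan is to reduce the directed Hamiltonian path problem to the problem in the statement. The bridge is the following equivalence, which I would establish first: \emph{for a strongly-connected \MinRemSet-disjoint digraph $G=(V,E)$ with $|V|\ge 3$, there is $F\subsetneq E$ with $(V,F)$ strongly-connected and \MaxProSubSol-disjoint if and only if $G$ is Hamiltonian.} The ``only if'' direction is immediate from \thmref{hamilton}: if such an $F$ exists, then $(V,F)$ is Hamiltonian, and its Hamiltonian cycle is also a cycle of $G\supseteq(V,F)$. For ``if'', let $C$ be a Hamiltonian cycle of $G$ and put $F:=E(C)$; then $(V,F)$ is a directed $n$-cycle, which is strongly-connected, and whose only solutions that are proper subsets of $V$ are the $n$ single vertices, so $(V,F)$ is \MaxProSubSol-disjoint. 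It remains to check $F\subsetneq E$: otherwise $G$ would itself be a directed $n$-cycle with $n\ge 3$, whose \MinRemSets are the $n$ pairwise-intersecting sets $V\setminus\{v\}$, contradicting \MinRemSet-disjointness of $G$.

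Next I would set up the reduction. Let $G_0=(V_0,E_0)$ be an instance of the directed Hamiltonian path problem; instances with $|V_0|\le 1$ being trivial, we may assume $|V_0|\ge 2$. Build $G=(V,E)$ by adding one new vertex $z$ adjacent both ways to every vertex of $V_0$: $V:=V_0\cup\{z\}$ and $E:=E_0\cup\{(v,z):v\in V_0\}\cup\{(z,v):v\in V_0\}$. This is computable in linear time, and $G$ is a legitimate instance of the problem in the statement: (a) $G$ is strongly-connected, since $z$ reaches and is reached from every vertex, so any two vertices of $V_0$ communicate through $z$; (b) for each $v\in V_0$ the digraph $G-v$ is still strongly-connected (again through $z$), hence $\{v\}$ is a singleton \RemSet of $V$ in $\MS_{G,1}$, and since $\MS_{G,1}$ is an SSD system (\lemref{k_edge_SSD}), \propref{SSD_singleton_disjoint} gives that $G$ is \MinRemSet-disjoint; and (c) $|V|=|V_0|+1\ge 3$.

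It then suffices to show that $G$ is Hamiltonian if and only if $G_0$ has a Hamiltonian path. If $u_1\to u_2\to\dots\to u_k$ ($k=|V_0|$) is a Hamiltonian path of $G_0$, then $u_1\to u_2\to\dots\to u_k\to z\to u_1$ is a Hamiltonian cycle of $G$. Conversely, any Hamiltonian cycle of $G$ passes through $z$, and since both cycle edges incident to $z$ go to $V_0$, deleting $z$ from the cycle leaves a Hamiltonian path of $G-z=G_0$. Combining this with the equivalence of the first paragraph, any polynomial-time algorithm deciding, for the constructed $G$, whether a proper $F\subseteq E$ with $(V,F)$ strongly-connected and \MaxProSubSol-disjoint exists would decide the directed Hamiltonian path problem in polynomial time; since the latter is $\mathcal{NP}$-complete~\cite{GJ.1979}, this forces $\mathcal{P}=\mathcal{NP}$.

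The verifications in the construction (strong connectivity of $G$ and of each $G-v$, and the correspondence between Hamiltonian cycles of $G$ and Hamiltonian paths of $G_0$) are all routine. The one point deserving care, which I would spell out, is the backward direction of the first-paragraph equivalence --- specifically that the Hamiltonian-cycle edge set $F$ is a \emph{proper} subset of $E$; this is exactly where the hypothesis ``\MinRemSet-disjoint'' (rather than merely ``Hamiltonian'') is used, via the observation that a directed $n$-cycle with $n\ge 3$ fails to be \MinRemSet-disjoint. Beyond this I foresee no real obstacle.
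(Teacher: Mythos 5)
Your proof is correct, and it is strictly more complete than the paper's own argument. The paper's proof consists only of your first paragraph's equivalence: it shows that the answer to the decision problem is yes if and only if $G$ is Hamiltonian (yes $\Rightarrow$ Hamiltonian via \thmref{hamilton}; Hamiltonian $\Rightarrow$ yes by taking $F$ to be the edge set of a Hamiltonian cycle, a directed simple cycle being obviously \MaxProSubSol-disjoint), and then stops, implicitly invoking the $\mathcal{NP}$-completeness of Hamiltonicity. This leaves two points unaddressed that you handle explicitly. First, the properness $F\subsetneq E$: the paper never checks it, whereas you correctly observe that $F=E$ would make $G$ a directed $n$-cycle with $n\ge3$, which is not \MinRemSet-disjoint --- this is the only place the hypothesis on $G$ is actually needed, and it is a genuine (if small) gap in the paper's write-up. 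Second, and more substantively, the paper's equivalence only yields hardness if Hamiltonicity remains $\mathcal{NP}$-hard when restricted to strongly-connected \MinRemSet-disjoint digraphs; the paper supplies no instances witnessing this. Your reduction from directed Hamiltonian path --- adding a universal vertex $z$, which makes $G$ strongly-connected, makes every $\{v\}$ with $v\in V_0$ a singleton \RemSet so that \propref{SSD_singleton_disjoint} (via \lemref{k_edge_SSD}) forces \MinRemSet-disjointness, and preserves the Hamiltonian-path/Hamiltonian-cycle correspondence --- produces exactly such instances and closes the argument. The cost is a longer proof; what it buys is a reduction whose source problem is the unrestricted (and certifiably $\mathcal{NP}$-complete) directed Hamiltonian path problem rather than a promise-restricted variant.
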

\begin{proof}
  We show that the answer to the decision problem is yes if and only if $G$ is Hamiltonian.
  Suppose that the answer is yes, and let $F\subseteq E$ be a yes-certificate.
  Then $(V,F)$ is \MaxProSubSol-disjoint and hence $G$ is Hamiltonian (\thmref{hamilton}).
  Conversely, suppose that $G$ is Hamiltonian,
  and let $F$ be the edge set that constitutes a Hamiltonian cycle.
  Then $F$ is a yes-certificate for the decision problem since
  $(V,F)$, a directed simple cycle, is obviously \MaxProSubSol-disjoint. 
\end{proof}

\section{Concluding Remarks}
\label{sec:conc}
In this paper, we introduced SSD (Superset-Subset-Disjoint) set system.
Based on \thmref{disjoint},
which states that every solution in an SSD system
is \MaxProSubSol-disjoint and/or \MinRemSet-disjoint, 
we obtained the following results;
(\thmref{partition}) for any $k\in\bbZ_+$ and graph $G$,
there exists a unique graph decomposition based on $\MS_{G,k}$-\MaxProSubSols. 
(\thmref{strong}) For a strongly-connected digraph $G$,
we can decide whether $G$ is \MaxProSubSol-disjoint or not in linear time
and generate all \MaxProSubSols in linear time. 
(\thmref{linear_delay}) We can enumerate all solutions
in a strongly-connected system of a given digraph in linear delay. 
(\thmref{hamilton}) A digraph $G$ is Hamiltonian
if there is a spanning subgraph that is strongly-connected and 
\MaxProSubSol-disjoint. 

Here is the list of future work:
\begin{itemize}
\item For \thmref{strong}, we developed a linear-time algorithm
  that generates \MaxProSubSols of a solution in
  the $k$-edge-connected system $\MS_{G,k}$
  when $k=1$ and $G=(V,E)$ is directed.
  It would be interesting to study other cases,
  i.e., $k\ge 2$ or $G$ is undirected.
  The problem is trivial if $k=1$ and $G$ is undirected;
  a \MaxProSubSol of $G$ is $V\setminus\{v\}$ for a vertex $v\in V$ that
  is not an articulation point.
  The case when $k=2$ and $G$ is undirected
  was already studied in \cite{TH.2023}.
  The $k$-edge-connected system in a weighted graph is also of our interest. 
\item The possibility of SSD system as well as SD system
  should be explored. A natural question
  is to ask whether $k$-vertex-connected system is SSD or not,
  where it was shown confluent~\cite{Haraguchi.2022}. 
  Our results on $k$-edge-connected system
  are based on that the system is SSD and confluent.
  If $k$-vertex-connected system is SSD, then
  similar results as $k$-edge-connected system
  would be obtained. 
\item The graph decomposition and
  the sufficient condition on the existence of Hamiltonian cycles 
  are interesting by themselves.
  We leave as future work
  precise comparisons with existing ones
  and exploration of their applications to other graph problems. 
\end{itemize}

\clearpage
\bibliographystyle{plainurl}
\bibliography{ref}


\end{document}